\newcommand{\Q}{\mathbb Q}
\newcommand{\R}{\mathbb R}
\newcommand{\Z}{\mathbb Z}
\newcommand{\veczero}{\mathbf0}
\newcommand{\matzero}{\bm O}
\newcommand{\vecone}{\mathbf1}
\newcommand{\vecinfty}{\bm\infty}
\renewcommand{\O}{\mathcal O}
\newcommand{\G}{\mathcal G}
\DeclareMathOperator{\conv}{conv}
\DeclareMathOperator{\ocp}{ocp}
\renewcommand{\pod}[1]{\allowbreak\if@display\mkern18mu\else\mkern8mu\fi(#1)}
\newlength{\vdotsheight}
\newenvironment{cdisplaymath}{\@fleqnfalse\begin{displaymath}}{\end{displaymath}}
\crefname{lemma}{Lemma}{Lemmata}
\crefname{lemma}{Lemma}{Lemmata}
\crefname{equation}{}{}
\Crefname{equation}{}{}
\newcommand{\crefilp}[1]{ILP~\cref{#1}}
\title{Parameterized Algorithms for Matching Integer Programs with Additional Rows and Columns}
\titlerunning{Parameterized Algorithms for Matching IPs with Additional Rows and Columns}
\author{Alexandra Lassota}{Eindhoven University of Technology, Netherlands}{a.a.lassota@tue.nl}{https://orcid.org/0000-0001-6215-066X}{}
\author{Koen Ligthart}{Eindhoven University of Technology, Netherlands}{k.m.ligthart@tue.nl}{https://orcid.org/0009-0004-6823-5225}{}
\authorrunning{A. Lassota and K. Ligthart}
\keywords{integer programming, fixed-parameter tractability, polyhedral optimization, matchings}
\begin{document}

\maketitle

\begin{abstract}
We study integer linear programs (ILP) of the form $\min\{c^\top x\ \vert\ Ax=b,l\le x\le u,x\in\Z^n\}$ and analyze their parameterized complexity with respect to their distance to the generalized matching problem, following the well-established approach of capturing the hardness of a problem by the distance to triviality. The generalized matching problem is an ILP where each column of the constraint matrix has a $1$-norm of at most $2$. It captures several well-known polynomial time solvable problems such as matching and flow problems. We parameterize by the size of variable and constraint backdoors, which measure the least number of columns or rows that must be deleted to obtain a generalized matching ILP. This extends generalized matching problems by allowing a parameterized number of additional arbitrary variables and constraints, yielding a novel parameter.

We present the following results: (i) a fixed-parameter tractable (FPT) algorithm for ILPs parameterized by the size $p$ of a minimum variable backdoor to generalized matching; (ii) a randomized slice-wise polynomial (XP) time algorithm for ILPs parameterized by the size $p+h$ of a mixed variable plus constraint backdoor to generalized matching as long as $c$ and $A$ are encoded in unary; (iii) we complement (ii) by proving that solving an ILP is W[1]-hard when parameterized by the size of a minimum constraint backdoor $h$ even when $c,A,b,l,u$ have coefficients of constant size. To obtain (i), we prove a variant of lattice-convexity of the degree sequences of weighted $b$-matchings, which we study in the light of SBO jump M-convex functions. This allows us to model the matching part as a polyhedral constraint on the integer backdoor variables. The resulting ILP is solved in FPT time using an integer programming algorithm. For (ii), the randomized XP time algorithm is obtained by pseudo-polynomially reducing the problem to the exact matching problem. To prevent an exponential blowup in terms of the encoding length of $b$, we bound the proximity of the ILP through a subdeterminant based circuit bound. The hardness result (iii) is obtained through a parameterized reduction from ILP with $h$ constraints and coefficients encoded in unary.
\end{abstract}

\section{Introduction}\label{sec:Intro}

We study integer linear programs (ILPs) and analyze their parameterized complexity with respect to their distance to the generalized matching problem. In general, an ILP is of the form
\begin{equation}
    \min\bigl\{c^\top x\bigm\vert Ax=b,l\le x\le u,x\in\Z^n\bigr\},
    \label{ilp:general}\tag{G}
\end{equation}
where $l\in(\Z\cup\{-\infty\})^n,u\in(\Z\cup\{\infty\})^n,c\in\Z^n,b\in\Z^m$ and $A\in\Z^{m\times n}$. The underlying integer linear programming problem is to either decide that the system is infeasible, there exists an optimal feasible solution, or it is unbounded and provide a solution with an unbounded direction of improvement.

Integer linear programming is a powerful language that has been applied to many key problems such as graph problems \cite{DBLP:conf/isaac/FellowsLMRS08,DBLP:journals/dam/FialaGKKK18}, scheduling and bin packing \cite{DBLP:journals/jacm/GoemansR20,DBLP:journals/mp/JansenKMR22}, multichoice optimization \cite{ermolieva2023connections} and computational social choice \cite{bartholdi1989voting,DBLP:journals/teco/KnopKM20}, among others. Unfortunately, solving ILPs is, in general, NP-hard. 

However, most ILP formulations of problems are naturally well-structured. See e.g.~\cite{DBLP:conf/isaac/FellowsLMRS08,DBLP:journals/jacm/GoemansR20,DBLP:journals/algorithmica/GrammNR03,DBLP:journals/mp/JansenKMR22,DBLP:journals/teco/KnopKM20} and the references therein. Motivated by this insight and their daunting general hardness, classes of ILPs with highly-structured constraint matrices and parameterizations have been intensively and successfully studied to obtain polynomial and FPT time algorithms~\cite{DBLP:conf/isaac/FellowsLMRS08,DBLP:journals/jacm/GoemansR20,DBLP:journals/algorithmica/GrammNR03,DBLP:journals/mp/JansenKMR22,DBLP:journals/teco/KnopKM20}. Arguably most famous is Lenstra's 1983 algorithm who presented the first FPT time algorithm for constraint matrices with few columns~\cite{DBLP:journals/mor/Lenstra83}. The body of literature for the over three-decades-long research and the many structures, parameters, and applications are too vast to cover here, we thus refer to~\cite{GavenciakKK22} for an overview.

Central to this paper is the polynomial time solvable class of the generalized matching problem, which is the ILP problem restricted to coefficient matrices $A$ satisfying $\|A\|_1\le2$, i.e., all columns have $1$-norm at most $2$.

\begin{theorem}[Theorem 36.1 in~\cite{schrijver2003combinatorial}]
    When $\|A\|_1\le2$, \crefilp{ilp:general} can be solved in strongly polynomial time.
    \label{thm:generalized-matching-in-p}
\end{theorem}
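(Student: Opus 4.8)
The plan is to read the hypothesis $\|A\|_1\le2$ as the statement that $A$ is, up to signs and row/column relabeling, the incidence matrix of a \emph{bidirected graph} $G=(V,E)$: rows index vertices, columns index edges, a column with two nonzero entries (each $\pm1$) is an ordinary edge carrying a sign at each endpoint, a column with a single entry $\pm2$ is a loop, a column with a single entry $\pm1$ is a half-edge, and an all-zero column is handled trivially. Under this dictionary the system $Ax=b,\ l\le x\le u$ asks for integer edge multiplicities $x_e$ inside the box $[l,u]$ that realize the prescribed signed degree $b_v$ at every vertex, and $c^\top x$ is the cost to be minimized; that is, \cref{ilp:general} becomes a cost-weighted degree-constrained subgraph (general $b$-matching) problem on $G$, which is the natural home for Edmonds' blossom machinery.

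First I would reduce to a clean core case. Variables with $l_j>-\infty$ are shifted to $l_j=0$, absorbing the shift into $b$; a variable unbounded from below is written as a difference of two nonnegative variables, or, since its column together with its negation is a lattice direction of $A$, is used directly to report unboundedness when its reduced cost has the wrong sign and is otherwise fixed. Finite upper bounds are encoded by the usual slack-edge trick. The only genuinely delicate point is a variable with $u_j=+\infty$: here I would invoke the Edmonds--Johnson integrality property (equivalently, box-total-dual-integrality) of bidirected-graph systems, together with the very restricted shape of the circuits/Graver elements of a matrix with $\|A\|_1\le2$ (alternating walks and blossoms), to argue that either the instance is unbounded---detected by evaluating $c$ on these circuit elements---or it admits an optimum of value bounded polynomially in the input, so an artificial polynomial upper bound may be imposed without loss.

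Next I would carry out the reduction to minimum-weight perfect matching in the classical Edmonds--Johnson style: each vertex $v$ is expanded into an appropriate number of copies determined by $b_v$ and its degree (after normalization), each edge into a constant-size gadget of copies and internal edges carrying suitably transformed weights, loops and half-edges into their own small gadgets, and capacities $u_e$ into parallel copies limited inside the gadget. After the bound-reduction step the resulting graph has size polynomial in $|V|+|E|+\log\|b\|_\infty+\log\|u\|_\infty$. A minimum-weight perfect matching in it---computable in strongly polynomial time by the blossom algorithm---translates back, by counting how many copies of each edge gadget are used, into an optimal integer solution of \cref{ilp:general}; the non-existence of a perfect matching yields a Tutte--Berge-type obstruction certifying infeasibility; and unboundedness is reported together with an improving direction extracted from the corresponding circuit of $A$.

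The main obstacle I expect is not the matching reduction, which is classical, but the careful simultaneous treatment of the box constraints in full generality---$\pm2$ entries (loops), half-edges, negative and infinite lower bounds, and infinite upper bounds---while keeping the instance polynomially sized, the running time strongly polynomial, and the trichotomy infeasible/optimal/unbounded correctly decided. This is precisely the content that turns Theorem~36.1 into a theorem rather than a one-line corollary of Edmonds' algorithm, and it is where one must lean on the box-total-dual-integrality of bidirected systems.
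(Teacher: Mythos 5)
The paper does not prove this statement at all: it is imported verbatim as Theorem~36.1 of Schrijver~\cite{schrijver2003combinatorial}, and the only in-house remark about it is the observation (end of \cref{sec:wide}) that one could rederive it by combining \cref{lemma:master-reduction}, the Graver bound of \cref{thm:graver-ub-generalized-matching}, the augmentation framework of \cref{thm:graver-augmentation}, and a polynomial-time perfect-matching algorithm. So your proposal should be judged against the classical proof you are reconstructing, and there it has a genuine gap. The bidirected-graph reading of $\|A\|_1\le2$ and the Edmonds--Johnson-style gadget reduction to perfect matching are the right family of ideas (they are exactly what \cref{lemma:master-reduction} and \cref{prop:gb} generalize), but the expansion you describe --- each vertex $v$ replaced by a number of copies determined by $b_v$, capacities $u_e$ realized by parallel copies --- produces a graph with $\Theta(\|b\|_1)$ (resp.\ $\Theta(\sum_e u_e)$) vertices. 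That is pseudo-polynomial, i.e.\ exponential in the encoding length of $b$ and $u$, so your claim that ``the resulting graph has size polynomial in $|V|+|E|+\log\|b\|_\infty+\log\|u\|_\infty$'' does not follow from anything you argued, and the step where you would need it is precisely where the theorem's content lies. The bound-reduction you do supply (imposing an artificial polynomial upper bound on variables with $u_j=\infty$) addresses unboundedness detection, not the magnitude of $b$ on the matching side.

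What is missing is the idea that tames these magnitudes before the matching reduction is invoked: in Schrijver's treatment one first solves the fractional relaxation (a flow-type problem, strongly polynomially) and uses a proximity/rounding argument \`a la Anstee so that the residual integer problem has right-hand sides bounded by a polynomial in the dimensions alone; only that small residual problem is expanded into a perfect-matching instance. The paper's sketched alternative achieves the same effect differently: by \cref{thm:graver-ub-generalized-matching} every Graver element of $M$ has $\ell_\infty$-norm at most $2$ and $\ell_1$-norm $O(m)$, so each Graver-best augmentation step is a $b$-matching problem with dimension-bounded right-hand side, and a sensitivity argument bounds the number of augmentations independently of the numbers. Without one of these mechanisms your algorithm is only pseudo-polynomial; note also that ``strongly polynomial'' requires the number of arithmetic operations to depend on $n,m$ only, so even a graph of size polynomial in $\log\|b\|_\infty$ would not literally suffice.
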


The generalized matching problem captures a variety of well-known problems in P such as minimum cost flow, minimum cost ($b$-)matching and certain graph factor problems~\cite{schrijver2003combinatorial}. This connection becomes apparent by interpreting the variables as values assigned to edges of a graph whose vertices correspond to the constraints of the ILP\footnote{Note that, due to the conventions in ILP, the vertices of the graph are the numbers $[m]:=\{1,\dots,m\}$ and the edges are identified with the numbers $[n]$.}. In this way, a constraint matrix $A$ with $\|A\|_1\le2$ can be interpreted as an incidence matrix of an associated bidirected graph $G(A)$ where all endpoints of edges are given signs and where degenerate self-loops and half-edges may be present~\cite{DBLP:conf/aussois/EdmondsJ01}. Here, each column or row of $A$ corresponds to an edge or vertex of $G(A)$ respectively and an entry $A_{ij}$ is nonzero if and only if vertex $i$ is incident to edge $j$ in $G(A)$. An example of a graphic interpretation of an instance of the generalized matching problem is shown in \cref{fig:example-generalized-matching-instance}.
\begin{figure}
    \begin{subfigure}{0.45\textwidth}
        \begin{cdisplaymath}
            \begin{pmatrix}
                1&0&1&0&0&0\\
                1&1&0&-1&0&0\\
                0&-1&0&0&2&0\\
                0&0&-1&-1&0&1
            \end{pmatrix}
            x=
            \begin{pmatrix}
                3\\
                1\\
                5\\
                0
            \end{pmatrix}
        \end{cdisplaymath}
        \caption{Example constraints $Mx=b$ with $\|M\|_1=2$ of a generalized matching instance.}
        \label{fig:example-generalized-matching-instance-system}
    \end{subfigure}
    \hspace{0.05\textwidth}
    \begin{subfigure}{0.45\textwidth}
        \centering
        \includegraphics[width=0.8\textwidth]{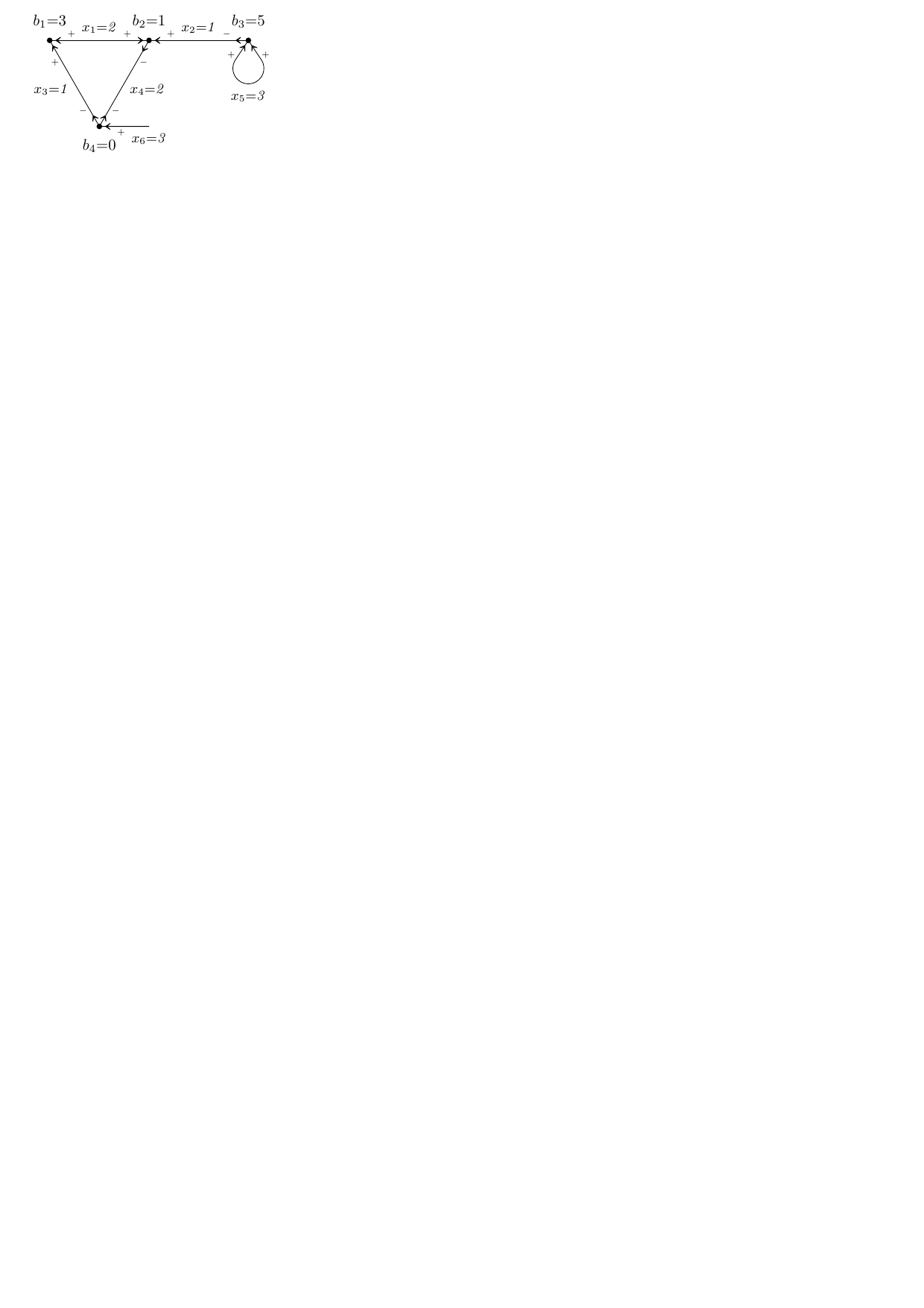}
        \caption{Graphic interpretation of a solution to the system of \cref{fig:example-generalized-matching-instance-system}.}
    \end{subfigure}
    \caption{An example instance of the generalized matching problem and a corresponding solution, disregarding potential variable bounds.}
    \label{fig:example-generalized-matching-instance}
\end{figure}
For instance, a problem such as the minimum cost perfect matching problem on a simple graph $G=(V,E)$ can be cast as a generalized matching ILP. In this scenario, $A\in\{0,1\}^{V\times E}$ is the incidence matrix of $G$ and $l=\veczero,u=\vecone,b=\vecone$. Here, $\veczero$ and $\vecone$ denote the all-zero and all-one vector respectively.

This paper studies the degree to which we can extend the generalized matching problem while maintaining tractability. For this purpose, we study constraint matrices $A$ which are similar to the constraint matrix of a generalized matching problem. Such matrix $A$ consists of a small corner block $C\in\Z^{h\times p}$, wide block $W\in\Z^{h\times n}$, tall block $T\in\Z^{m\times p}$ and matching-like block $M\in\Z^{m\times n}$ with $\|M\|_1\le2$ and is of the form
\[
    A=\begin{pmatrix}
        C&W\\
        T&M
    \end{pmatrix}
\]
with associated ILP
\begin{equation}
    \begin{aligned}
        \min\bigl\{a^\top y+c^\top x\bigm\vert&\ Cy+Wx=d,Ty+Mx=b,\\
        &\ e\le y\le g,l\le x\le u,(y,x)\in\Z^{p+n}\bigr\}.
    \end{aligned}
    \label{ilp:mixed}\tag{M}
\end{equation}
We separately consider the special cases where either $h=0$ or $p=0$, yielding the ILPs
\begin{equation}
    \min\bigl\{a^\top y+c^\top x\bigm\vert Ty+Mx=b,e\le y\le g,l\le x\le u,(y,x)\in\Z^{p+n}\bigr\},
    \label{ilp:tall}\tag{T}
\end{equation}
which describes a generalized matching ILP admitting additional variables, and
\begin{equation}
    \min\bigl\{c^\top x\bigm\vert Wx=d,Mx=b,l\le x\le u,x\in\Z^n\bigr\},
    \label{ilp:wide}\tag{W}
\end{equation}
which describes a generalized matching ILP admitting additional constraints. \crefilp{ilp:mixed} represents a generalization of well-known flow and matching problems. See~\cite{DBLP:journals/networks/BalasP83} for an application of bipartite matching with additional variables to a scheduling problem.

In the context of the fixed-parameter tractability of ILPs, graphs and corresponding structural parameters associated with the coefficient matrix of an ILP are usually studied. However, in our case, the coefficient matrix of the perfect matching problem may have arbitrary associated primal or dual graphs, unlike the case for two-stage stochastic and $n$-fold ILPs, which have associated coefficient matrix graphs with limited tree-depth. See~\cite{eisenbrand2022algorithmictheoryintegerprogramming} for an overview. In addition, the incidence matrix of an undirected graph may have have arbitrarily large subdeterminants, which makes it unsuited for methods such as the one discussed in~\cite{DBLP:conf/focs/FioriniJWY21}.

The goal of this paper is to study the complexity of the above integer linear programs \labelcref{ilp:mixed,ilp:tall,ilp:wide}, i.e., ILPs where the constraint matrices are \emph{nearly} generalized matching constraint matrices. For this purpose, we parameterize by the number of variables $p$ and constraints $h$ that must be deleted from \crefilp{ilp:general} to obtain a generalized matching problem. Such parameter choice follows the classical approach of studying parameters measuring the \emph{distance to triviality}, also called \emph{(deletion) backdoors to triviality}--a concept that was first proposed by Niedermeier~\cite{DBLP:books/ox/Niedermeier06}. Roughly speaking, this approach introduces a parameter that measures the distance of the given instance from an instance that is solvable in polynomial time. This approach was already used for many different problems such as clique, set cover, power dominating set, longest common subsequence, packing problems, and the satisfiability problem~\cite{DBLP:books/ox/Niedermeier06,DBLP:conf/mfcs/BannachBMMLRS20,DBLP:journals/jacm/GoemansR20,DBLP:conf/birthday/GaspersS12}. 

For some problems, such as satisfiability, having obtained a variable backdoor immediately leads to an FPT algorithm parameterized by the size of the backdoor. For ILP, however, this is not as straightforward as variable domains may be arbitrarily large, which invalidates a brute-force approach to obtain FPT results in terms of variable backdoor size. In fact, solving ILPs in FPT time parameterized by the number of variables is already highly nontrivial~\cite{DBLP:journals/mor/Lenstra83,DBLP:conf/focs/ReisR23}.

Despite such difficulties, a significant number of results have been obtained in the context of backdoors in ILPs. The complexity of ILPs which become totally unimodular after removing a constant number of columns and rows is actively being researched~\cite{DBLP:conf/soda/AprileFJKSWY25,DBLP:journals/mor/NageleSZ24}. In addition, a large line of research has established efficient algorithms for solving ILPs given a backdoor to a remaining ILP which consists of many isolated ILPs with small coefficients~\cite{DBLP:conf/soda/CslovjecsekKLPP24,DBLP:journals/ai/DvorakEGKO21,eisenbrand2022algorithmictheoryintegerprogramming}. A two-stage stochastic ILP, which is an important special case of such block-structured ILPs, may be viewed as an ILP of the form shown in \labelcref{ilp:tall} where $M$ is a block-diagonal matrix with small nonzero blocks and small coefficients. To relate these results with backdoors and backdoor identification, Dvořák et al.~\cite{DBLP:journals/ai/DvorakEGKO21} define fracture numbers, which describe the number of variable and/or constraint removals needed to decouple the ILP into small blocks. They give an overview of the parameterized complexity of ILPs parameterized by various backdoor sizes and coefficient sizes of the constraint matrix. Their work and the recent result of Cslovjecsek et al.~\cite{DBLP:conf/soda/CslovjecsekKLPP24} shows that the complexity of ILPs parameterized by backdoor size is subtle, as ILPs may already become NP-hard when there are only 3 complicating global variables connecting constant dimension, otherwise independent ILPs with unary coefficient size~\cite{DBLP:journals/ai/DvorakEGKO21}. On the other hand, when additionally parameterizing by the coefficient size of the small blocks, the problem becomes FPT~\cite{DBLP:conf/soda/CslovjecsekKLPP24}. With this paper, we aim to reveal the parameterized complexity with respect to backdoor sizes to another class of well-known efficiently solvable ILPs, that of the generalized matching problem.

We show that ILP parameterized by $p$, the number of variables that, upon deletion, give a generalized matching problem, is FPT by designing a corresponding algorithm. Further, we give a randomized, XP time algorithm with respect to $p+h$, where $p$ and $h$ are the number of variables and constraints, respectively, of which the deletion yields a generalized matching problem, and prove the corresponding W[1]-hardness result for this case.

We note that studying the complexity of ILP with respect to $p$ is also motivated by the hardness of the general factor problem with a gap size of $2$~\cite{lovasz1972factorization}. Lovász's reduction from edge coloring on cubic graphs shows that, even when $T$ consists solely of columns with only one nonzero coefficient with value $3$, ILP remains NP-hard. This, together with the tractability of generalized matching and our FPT result, shows that if one were to judge the complexity of an ILP by independently and individually labeling each column as ``hard'' or ``easy'', the easy columns are precisely those with $1$-norm at most $2$.

Finally, given an arbitrary constraint matrix $A$ of \crefilp{ilp:general}, a permutation of the columns or rows to obtain the form \crefilp{ilp:tall} or \labelcref{ilp:mixed} with minimum $p$ or $p+h$ respectively can be obtained efficiently\footnote{Through the use of elementary row operations, a system may be modified to a row-equivalent system admitting smaller backdoors. See~\cite{DBLP:journals/mp/BrianskiKKPS24} for such study on block-structured ILP. We leave this problem open in the context of the parameterizations discussed in this paper.}. That is, one can identify a minimum cardinality variable or mixed variable and constraint deletion backdoor to the generalized matching ILP. To obtain the form \crefilp{ilp:tall}, one may greedily move all columns with $1$-norm greater than $2$ to the left. The problem of minimizing $p+h$ in \crefilp{ilp:mixed} contains the NP-hard $3$-uniform hitting set problem, cf.~\cite{DBLP:conf/birthday/GaspersS12}, but is FPT as it can be solved with a bounded depth search tree algorithm branching on columns with $1$-norm at least $3$. Such column must either be assigned to be a column of $T$, or in an arbitrary subset of at most $3$ entries with absolute sum of at least $3$, at least one of these entries must be part of a row in $W$. For this reason, we will assume that the given ILPs are of the form \labelcref{ilp:tall} or \labelcref{ilp:mixed} throughout the rest of the paper.

\subsection{Contributions}

In this paper, we show that solving integer linear programs is fixed-parameter tractable parameterized by the number of columns of $A$ with $1$-norm greater than $2$.

\begin{restatable}{theorem}{thmtallfpt}
    \crefilp{ilp:tall} is FPT parameterized by $p$.
    \label{thm:tall-fpt}
\end{restatable}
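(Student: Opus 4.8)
The plan is to reduce \cref{ilp:tall} to an integer program in $O(p)$ dimensions, so that Lenstra's algorithm (or its modern improvements) finishes the job in FPT time. The guiding idea is: once the values of the $p$ backdoor variables $y$ are fixed, the residual problem is $\min\{c^\top x\mid Mx=b-Ty,\ l\le x\le u,\ x\in\Z^n\}$, which is a generalized matching ILP, hence solvable in strongly polynomial time by \cref{thm:generalized-matching-in-p}. So morally we want to ``project out'' $x$ and keep only $y$. The obstacle is that $y$ ranges over a potentially huge box, so we cannot enumerate; instead we must describe, as a constraint on $y$ alone, both the \emph{feasibility} of the residual matching problem and its \emph{optimal value} as a function of $y$, and we need that description to be (piecewise) linear with few pieces so that the outer problem is again a small ILP.

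Concretely, I would proceed as follows. First, set $r = b - Ty$, which depends affinely on $y$. The feasible right-hand sides $r$ for which $\{x : Mx=r,\ l\le x\le u\}\cap\Z^n\neq\emptyset$ form, by the theory of degree sequences of weighted $b$-matchings, a nicely structured set — this is exactly the ``variant of lattice-convexity of the degree sequences of weighted $b$-matchings'' advertised in the abstract, phrased via SBO jump M-convexity. I would invoke that structural result to say: the set of feasible $r$ is the set of integer points of a polyhedron $P$ (equivalently, $r$ must lie in a certain lattice coset and satisfy a system of linear inequalities whose encoding is controlled), and moreover the optimal value $\varphi(r)=\min\{c^\top x : Mx=r,\ l\le x\le u,\ x\in\Z^n\}$ is a convex piecewise-linear function of $r$ on $P$. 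Next, I would lift $\varphi$ to its epigraph: introduce one extra variable $t\in\Z$ (or $\Q$, handled by scaling) representing the matching cost, and encode $t\ge\varphi(b-Ty)$ together with $b-Ty\in P$ as a system of linear inequalities in the variables $(y,t)$. The outer ILP becomes $\min\{a^\top y + t : (y,t)\in Q\}$ for a polyhedron $Q\subseteq\Q^{p+1}$, an integer program in $p+1$ variables, solvable in FPT time by Lenstra/Reis–Rothvoss.

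The crux — and the step I expect to absorb most of the work — is producing $Q$ explicitly and efficiently: I need a polynomial-time (in the input size, for fixed $p$) separation oracle or an explicit inequality description for the epigraph of $\varphi\circ(\text{affine})$. Evaluating $\varphi$ at a single point is easy (one matching solve), but Lenstra-type algorithms need more than point evaluation; they need to optimize linear functions over $Q$. Here the M-convexity / jump M-convexity structure of the degree-sequence polytope is essential: it should let me either (a) give a compact linear description of $P$ and of the pieces of $\varphi$ with bounded complexity, or (b) build an efficient separation oracle for $Q$ by combining a separation oracle for $P$ (base-polytope-style, via submodular minimization on the $M$-structure) with a subgradient computation for $\varphi$ (again a matching solve returning dual information). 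I would also need to handle infeasibility and unboundedness: infeasibility of \cref{ilp:tall} corresponds to $Q=\emptyset$, and unbounded directions must be detected either in the outer Lenstra call or by checking recession directions of $Q$ against the objective $(a,1)$. Finally I would double-check that all numbers stay polynomially bounded — the inequalities defining $P$ and the breakpoints of $\varphi$ have encoding length polynomial in the input, since $M$ has bounded column norm and thus bounded subdeterminants — so that the outer ILP in $p+1$ variables is solved in time $f(p)\cdot\mathrm{poly}(\text{input})$, giving the claimed FPT bound.
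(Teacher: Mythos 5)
Your overall architecture is the same as the paper's: replace the matching variables by a polyhedral constraint on $y$ encoding feasibility and optimal cost of the residual $b$-matching problem, equip that body with a separation oracle (evaluation/optimization via matching solves plus the ellipsoid method), and finish with a Lenstra/Reis--Rothvoss call in $O(p)$ integer variables. However, there is a genuine gap at the central structural claim. You assert that the set of feasible right-hand sides $r=b-Ty$ is the set of integer points of a polyhedron (intersected with a lattice coset) and that $\varphi(r)$ is convex piecewise-linear on it. This is false as stated: degree sequences of $b$-matchings form a jump system, not a lattice-convex set. The paper's own counterexample is $G(M)=K_3$, where $z=(0,0,0)$ and $z=(2,2,2)$ are feasible degree sequences but their midpoint $(1,1,1)$ is not; so no convex set $Q$ whose integer points are the feasible $(y,t)$ exists, and the outer ``ILP in $p+1$ variables'' you build is simply not a correct reformulation. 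The convexity you need only holds on a \emph{fixed parity class}: this is exactly \cref{lemma:convexity-of-b-matching}, which gives lattice-convexity of $f_{c,M}$ on $2\Z^m+r$ and is the new result the paper has to prove (via SBO jump M-convexity and an alternating-path argument in the blown-up graph of \cref{prop:gb}). Your proposal never fixes the parity, so it cannot invoke that result.

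The missing algorithmic step is therefore the enumeration of the $2^p$ residues $t\equiv y\pmod 2$ (substituting $y=2v+t$), which makes $b-Ty\equiv r\pmod 2$ constant and lets one use the convex hull $P_{r,U}$ of $\{(\omega,z):z\equiv r\pmod 2,\ f_{c,M}(z)\le\omega\}$ as the polyhedral constraint; this $2^p$ factor is where part of the FPT dependence on $p$ actually comes from. Two further, more minor, corrections: (a) your hope for an explicit compact inequality description of the epigraph is not known to be available -- the paper explicitly works only with a separation oracle, obtained because linear optimization over $P_{r,U}$ is a \emph{parity-constrained} $b$-matching problem (not plain matching plus submodular minimization), which reduces to generalized matching and then ellipsoid duality gives separation; and (b) before any of this one needs the preprocessing of \cref{lemma:master-reduction} together with the proximity bound of Cook et al.\ to get finite, polynomially bounded boxes (your capacitated formulation would also need the SBO jump M-convexity statement adapted to bounded $x$, whereas the paper proves it for the uncapacitated perfect $b$-matching form it reduces to). With the parity-guessing step and \cref{lemma:convexity-of-b-matching} added, your outline essentially becomes the paper's proof.
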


This reveals an entirely new class of ILPs that is FPT and adds to the story of parameterized ILPs and distance to triviality paradigm. To obtain this algorithm, we use a remainder guessing strategy introduced by Cslovjecsek et al.~\cite{DBLP:conf/soda/CslovjecsekKLPP24}. Central in this approach is modeling non-backdoor variables as polyhedral constraints on the backdoor variables. In~\cite{DBLP:conf/soda/CslovjecsekKLPP24}, these polyhedral constraints may be exponentially complex, whereas we exploit the structure of matching problems and employ an efficient description of a global polyhedral constraint. To accomplish this, we study the convexity of degree sequences for which subgraphs with matching degrees exist. These systems have been studied previously, see~\cite{DBLP:journals/jgt/AnsteeN99}, and are a well-known example of discrete systems known as jump systems~\cite{DBLP:journals/siamdm/BouchetC95,DBLP:journals/siamdm/Murota06,DBLP:journals/ieicet/MurotaT06,DBLP:conf/ipco/Kobayashi23}. For our application, we prove a new result involving the lattice-convexity of a particular class of jump M-convex functions on the shifted lattice $2\Z^m+r$. See \cref{sec:overview} for a technical overview of this algorithm. 

For the mixed variable and constraint backdoors, we show that matching-like ILPs are solvable in polynomial time with a randomized algorithm for a fixed number of additional complicating variables and constraints if the constraint matrix and the objective are encoded in unary. As \crefilp{ilp:mixed} can encode the NP-hard subset sum problem for $p=0,h=1$, we need to assume that the coefficients of the constraint matrix are small. That is, we assume that they are bounded by $\Delta$ in absolute value. The given randomized XP time algorithm unifies the known tractability in terms of membership in RP of multiple classes of constrained problems where the edges of a graph correspond to variables.

\begin{restatable}{theorem}{thmmixedxp}
    \crefilp{ilp:mixed} can be solved with a randomized XP time algorithm in terms of $p$ and $h$. More specifically, it can be solved in time \[
        \|c\|_\infty^{\O(1)}\cdot n^{\O(p+h)}\cdot(\Delta(m+h))^{\O((p+h)^2)}.
    \]
    \label{thm:mixed-xp}
\end{restatable}

In this paper, the running time is measured in the number of arithmetic operations on numbers with encoding length polynomial in the encoding length of the instance.

Camerini, Galbiati and Maffioli~\cite{DBLP:journals/jal/CameriniGM92} already observed in 1992 that one can solve constrained flow, circulation and matching problems in randomized, pseudo-polynomial time. \cref{thm:mixed-xp} differs in that it reveals the tractability of a generalized class of problems by removing the exponential dependency on the encoding length of $b$ that would appear in a purely pseudo-polynomial algorithm and admitting a constant number of additional complicating variables.

The randomized XP time algorithm is obtained by polynomially bounding the proximity for fixed $p+h$, which measures the distance between optimal IP solutions from a given optimal LP solution. This implies that we may restrict the variable domains and the right-hand side vector $b$ to a polynomial range and enables a consecutive pseudo-polynomial reduction to the exact matching problem. To complement the proximity upper bound, we additionally provide a proximity lower bound of which the exponent of $m$ scales with $p+h$, which also motivates the use of the different technique for solving \crefilp{ilp:tall}. See \cref{sec:overview-mixed} for the technical overview.

We provide a W[1]-hardness result in terms of the number of additional constraints $h$ to match the randomized XP time algorithm in a complexity theoretic sense. The hardness persists even for $0/1$-constrained perfect matching on a restricted class of graphs. Furthermore, \cref{thm:wide-w1-hard} shows that additionally parameterizing with $\Delta$ is unlikely to result in an FPT algorithm.

\begin{restatable}{theorem}{thmwidewonehard}
    Solving \crefilp{ilp:wide} encoded in unary is W[1]-hard parameterized by $h$. This hardness persists when restricting to $M$ being the incidence matrix of a simple graph which is the disjoint union of even length cycles and $W\in\{0,1\}^{h\times n},d\in\{0,1\}^h,c=\veczero,l=\veczero,u=\vecone,b=\vecone$.
    \label{thm:wide-w1-hard}
\end{restatable}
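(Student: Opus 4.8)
To prove \cref{thm:wide-w1-hard} I would give a polynomial-time, parameter-preserving reduction from the feasibility problem for integer programs with $h$ equality constraints and all data encoded in unary---equivalently, from $0/1$ multidimensional subset sum with unary weights and dimension $h$---which is known to be W[1]-hard parameterized by $h$; this is the starting point announced in the introduction. Such an instance is a system $\{Dz=t : z\in\{0,1\}^{q}\}$ with $D\in\Z_{\ge0}^{h\times q}$ and $t\in\Z_{\ge0}^{h}$ given in unary. The single structural fact I would exploit is the following: if $M$ is the incidence matrix of a disjoint union of even cycles $Z_1,\dots,Z_q$, then the feasible points of $Mx=\vecone$, $\veczero\le x\le\vecone$, $x\in\Z^{n}$ are exactly the perfect matchings, and a perfect matching chooses, independently for each $Z_j$, one of its two alternating edge sets. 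So each cycle contributes one free bit $z_j\in\{0,1\}$, and---this is the key point---\emph{all} edges sitting at the same parity along $Z_j$ receive the value $z_j$ simultaneously (and the remaining edges the value $1-z_j$). A single bit can thus be amplified into an arbitrary, polynomially bounded coefficient inside a $0/1$ row of $W$.

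Concretely, for each $j\in[q]$ I would create a fresh even cycle $Z_j$ of length $2\max\{2,\ \sum_{\ell=1}^{h}D_{\ell j}\}$, fix a traversal of $Z_j$, and call its odd-position edges the $A$-edges of $Z_j$; these form one of the two perfect matchings of $Z_j$, and we let $z_j\in\{0,1\}$ indicate that this $A$-matching is the chosen one. Let $M$ be the incidence matrix of the disjoint union $Z_1\sqcup\dots\sqcup Z_q$, and put $c=\veczero$, $l=\veczero$, $u=\vecone$, $b=\vecone$. For every $\ell\in[h]$, build the $\ell$-th row of $W$ by placing a $1$ in the columns of $D_{\ell j}$ distinct $A$-edges of $Z_j$, for each $j$ (possible because $Z_j$ has at least $\sum_{\ell'}D_{\ell' j}\ge D_{\ell j}$ of them), and set $d_\ell=t_\ell$. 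With $x_e=z_j$ for an $A$-edge $e$ of $Z_j$ and $x_e=1-z_j$ for the other edges of $Z_j$, one obtains $(Wx)_\ell=\sum_{j=1}^{q}D_{\ell j}\,z_j$, so $Wx=d$ is precisely the source system $Dz=t$.

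Correctness is immediate in both directions. Given a feasible $z^{\star}\in\{0,1\}^{q}$ of the source, select on each $Z_j$ its $A$-matching if $z^{\star}_j=1$ and its other matching otherwise; the union $\bar x$ of these matchings is a perfect matching of $Z_1\sqcup\dots\sqcup Z_q$ (hence $M\bar x=\vecone$, $\veczero\le\bar x\le\vecone$) and satisfies $W\bar x=Dz^{\star}=t=d$. Conversely, any feasible $\bar x$ of the constructed instance is $0/1$ and restricts on each $Z_j$ to one of its two perfect matchings, so reading off $z^{\star}_j\in\{0,1\}$ accordingly yields $Dz^{\star}=W\bar x=d=t$. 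Every created number is $0$ or $1$ except the right-hand side $d=t$; the total number of vertices and edges is $\O\bigl(\sum_{j}(2+\sum_{\ell}D_{\ell j})\bigr)$, polynomial in the unary input size; and $h$ is unchanged. As $M$ is the incidence matrix of a disjoint union of even cycles, $W\in\{0,1\}^{h\times n}$, and $c=\veczero,l=\veczero,u=\vecone,b=\vecone$, this is exactly the restricted form in the statement, so the reduction proves \cref{thm:wide-w1-hard}. Since the weights of $D$ were arbitrary, the same construction also shows that additionally parameterizing by the largest coefficient of $A$ does not give an FPT algorithm.

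The gadget itself is simple, so the work is in the bookkeeping, and I expect the following to be the main points of care. First, one must pin down a source problem that is simultaneously W[1]-hard in $h$ and has \emph{small} (unary) data, including small variable bounds; if one starts instead from a bounded-integer ILP with bounds $U_j$, one first rewrites it in polynomial time---without changing $h$---by replacing each variable by $U_j$ binary variables summing to it, which is affordable exactly because the $U_j$ are unary. Second, the cycle lengths $2\sum_\ell D_{\ell j}$ (and hence the entire instance) must stay polynomial, which is precisely where the unary encoding of the input is used. Third, if one prefers to start from a source whose coefficients may be negative, those are handled symmetrically by marking $|D_{\ell j}|$ of the non-$A$-edges of $Z_j$ in row $\ell$ and adding the constant $\sum_{j:D_{\ell j}<0}|D_{\ell j}|$ to $d_\ell$. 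Once a suitable W[1]-hard starting point is fixed, everything above is routine.
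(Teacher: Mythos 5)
Your reduction is correct and follows essentially the same route as the paper: both reduce from the W[1]-hard unary ILP feasibility problem with $h$ constraints (\cref{thm:w1-multicolorclique}), binarize the unary-bounded variables, and realize each binary choice as the selection between the two perfect matchings of an even cycle, spreading each unary coefficient of $W$ as $0/1$ entries over same-parity edges of that cycle. The only cosmetic difference is that the paper builds quadrangle gadgets (one per unit of $u_j$) and then invokes its coefficient-reduction lemma (\cref{lemma:wide-coefficient-reduction}) to subdivide edges, whereas you construct the long even cycles of length $2\max\{2,\sum_\ell D_{\ell j}\}$ directly.
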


We like to conclude the introduction with noting that recent methodology used by Eisenbrand and Rothvoss~\cite{eisenbrand2025parameterizedlinearformulationinteger} can be used to derive similar results to \cref{thm:tall-fpt}, which is presented in \cref{sec:milp-approach-for-tall-fpt}. This work was carried out independently from ours.

\section{Technical Overview}\label{sec:overview}
This section gives a technical overview of our results. It aims at presenting the main ideas and new concepts. Therefore, details are omitted. 

We first preprocess an ILP instance in \cref{sec:overview-reduction-to-perfect-b-matching}, after which both algorithms are covered in \cref{sec:overview-few-arbitrary-variables,sec:overview-mixed}. The W[1]-hardness of \cref{thm:wide-w1-hard} is additionally discussed in the latter section.

\subsection[Reduction to perfect b-matching]{Reduction to perfect $b$-matching}
\label{sec:overview-reduction-to-perfect-b-matching}
Before presenting the key components of the algorithms that solve the generalized matching problem with additional variables or constraints, we first discuss a reduction that modifies \crefilp{ilp:mixed} in a favorable way. By doing so, it suffices to give algorithms for restricted instances, see \cref{sec:overview-few-arbitrary-variables,sec:overview-mixed}.

In particular, we show that \crefilp{ilp:mixed} can, in polynomial time, be transformed to an equivalent ILP where the generalized matching submatrix is transformed to the constraint matrix of the perfect $b$-matching problem. Such matrix $M$ is a binary matrix with unique columns that each have precisely two nonzero entries. That is, we can transform the instance so that the associated graph $G(M)$ is a simple graph. In this situation, this graph coincides with the \emph{constraint graph} of the constraint matrix. In terms of ILPs, a perfect $b$-matching $x$ of a simple graph $G(M)$ is a solution to the generalized matching ILP with constraint matrix $M$, nonnegative integral variables $x\in\Z_{\ge0}^n$, and right-hand-side $b$. That is, $x$ assigns a value to every edge such that for every vertex $i$ the sum of $x_j$ over the incident edges $j$ is exactly $b_i$.

The used reduction is an extension of the reduction in~\cite{schrijver2003combinatorial} from generalized matching to $b$-matching. However, we show in \cref{lemma:master-reduction} that we can also keep track of the additional general variables and constraints.  

\begin{restatable}{lemma}{lemmamasterreduction}
    An \crefilp{ilp:mixed} for which the variable bounds are finite, can be reduced in input+output-linear time to an instance of \crefilp{ilp:mixed} that additionally satisfies
    \begin{itemize}
        \item $G(M')$ is simple,
        \item $c'\ge\veczero,W'\in\Z_{\ge0}^{h'\times n'}$,
        \item $e',l'=\veczero,\|g'\|_1=\O(\|g-e\|_1),u'=\vecinfty$,
        \item $n',m'=\O(n+m)$,
        \item $h'=h,p'=p$,
        \item $\|d'\|_1=\O(\|d\|_1+\|A\|_1(\|e\|_1+\|l\|_1))$,
        \item $\|b'\|_1=\O(\|u-l\|_1)$ if $p=0$ and $\|b'\|_1=\O(\|b\|_1+\|g-e\|_1+\|u-l\|_1+\|A\|_1(\|e\|_1+\|l\|_1))$ otherwise,
        \item $a',c',C',T',W'$ contain the same collections of entries as $a,c,C,T,W$ up to sign changes, the insertion of zeros and the insertion of at most one binary row into $T$.
    \end{itemize}
    \label{lemma:master-reduction}
\end{restatable}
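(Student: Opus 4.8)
The plan is to chain together a sequence of elementary, local transformations, each of which preserves the feasible set up to an explicit affine bijection on the variables and affects only a bounded portion of the data. I would organize the reduction in roughly four stages and track the claimed size/sign invariants stage by stage.

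First, I would normalize the variable bounds. Substituting $x = x' + l$ and $y = y' + e$ turns the lower bounds into $\veczero$ at the cost of updating the right-hand side to $d - Ce - Wl$ and $b - Te - Ml$; this is where the $\|A\|_1(\|e\|_1+\|l\|_1)$ term in $\|d'\|_1$ and $\|b'\|_1$ enters. At this point $0 \le x' \le u - l$ and $0 \le y' \le g - e$. The upper bounds on the matching variables $x'$ are then removed in the standard way: for each edge variable $x'_i$ with finite cap $(u-l)_i$, introduce a slack edge and a splitting gadget (as in the $b$-matching reduction of Schrijver, Ch.~36), pushing the cap into the right-hand side — this is why $\|b'\|_1 = \O(\|u-l\|_1)$ when $p=0$, and contributes $\|u-l\|_1$ (and, via the bounds on $y'$, a $\|g-e\|_1$ term) in the general case. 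The bounds on $y'$ I would keep, since $y$ is the backdoor block; the claim only asks $\|g'\|_1 = \O(\|g-e\|_1)$, which is immediate after the shift.

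Second, I would apply the generalized-matching-to-$b$-matching reduction of \cite{schrijver2003combinatorial} to the $M$-block, but threaded through the extra rows and columns. Columns of $M$ fall into four types by their support and signs: two nonzeros of equal sign (an ordinary edge, possibly needing to be made positive by negating the variable), two nonzeros of opposite sign, a single nonzero $\pm1$ or $\pm2$ (a half-edge or loop), or a column that is zero in $M$ entirely. Each non-standard type is replaced by Schrijver's local gadget — splitting a vertex, adding an auxiliary vertex, or duplicating an edge — which adds $\O(1)$ vertices and edges per column and hence keeps $n',m' = \O(n+m)$. The key bookkeeping point is that whenever a column of $M$ is modified, the corresponding column of $\binom{C}{T}$ (equivalently, of $W$ restricted to that variable) must be copied onto all the new edge-variables the gadget creates, and signs flipped in lockstep; this is exactly the "same collections of entries up to sign changes and insertion of zeros" clause. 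A column with opposite signs or with a nonzero in $M$ but needing a parity fix is the one place an extra binary row may have to be appended to $T$ (a new vertex constraint that the gadget introduces and that is also incident to a tall-block column) — this accounts for the "insertion of at most one binary row to $T$" clause. The contributions of these gadgets to $b'$ are $\O(1)$ per affected column, absorbed into the $\O(\|b\|_1 + \dots)$ estimate.

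Third, I would make $W$ and $c$ nonnegative. For $W$: for each column $i$ and each row $k$ of $W'$ with $W'_{ki} < 0$, I would not touch $W$ directly (we cannot negate the edge variable again without breaking the $b$-matching structure), but instead use the fact that after the previous step every edge variable $x'_i$ lies in a "vertex equation" and we can substitute using that equation, or — cleaner — we can add a large multiple of the relevant $M'$-rows to the $W'$-row to clear negatives, since adding $M'$-rows to a $W'$-row is a valid row operation on the full system and only changes $d'$ by a controlled amount. One has to verify that this does not reintroduce negative entries elsewhere and keeps $\|W'\|$ polynomial; doing it greedily column-by-column using that each $M'$-column has support size exactly $2$ makes the blow-up linear. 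For $c \ge \veczero$: since every edge variable has an unbounded upper bound after step one but still sits in vertex equations bounding it implicitly through $b'$, a negative cost coefficient corresponds to an always-beneficial direction that is nonetheless capped; I would zero it out by the same row-operation trick on the objective, i.e. replace $c^\top x$ using the equations $M'x = b'$, which changes the objective by the constant $\lambda^\top b'$ and shifts $c$ by $\lambda^\top M'$ — choosing $\lambda$ to kill the negative entries. (Alternatively, and perhaps more robustly, route negative-cost edges through a gadget analogous to the upper-bound gadget.) Finally I would double-check $e',l' = \veczero$ and $u' = \vecinfty$ survived all later steps, which they do since steps two and three only add edges with lower bound $0$ and no upper bound.

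The main obstacle I expect is the simultaneous bookkeeping in stages two and three: making $W'$ and $c'$ nonnegative without destroying either the $b$-matching structure of $M'$ (binary, simple, exactly two ones per column) or the already-achieved bounds $l' = \veczero$, $u' = \vecinfty$, and without blowing up $\|d'\|_1$, $\|b'\|_1$ beyond the stated linear-in-input bounds. The row-operation approach is conceptually clean but one must be careful that clearing a negative $W'$-entry in one column via adding $M'$-rows does not create new negatives in other columns sharing those vertices; handling this needs either a careful ordering or the observation that a single global shift $\lambda = $ (suitable multiple of $\vecone$) applied to all $h'$ rows of $W'$ simultaneously suffices because $\|M'\|_1 \le 2$ bounds how much any single $W'$-entry can decrease. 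The parity/sign gadgets of step two are individually routine (they are verbatim from \cite{schrijver2003combinatorial}), so the novelty — and the place the proof must be written carefully — is precisely the claim that the extra blocks $C, W, T$ can be carried along these gadgets with only the listed mild modifications.
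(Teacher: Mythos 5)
Your overall skeleton (a chain of local gadgets that threads the extra blocks $C,W,T$ through Schrijver's generalized-matching-to-$b$-matching reduction, plus a translation to zero lower bounds and an upper-bound-elimination gadget) matches the paper's strategy, and your accounting for the single binary row added to $T$ (the parity-fixing redundant constraint) is essentially right. The genuine gap is in your third stage, the step that makes $W$ and $c$ nonnegative. Your primary proposal --- clearing negative entries of $W'$ by adding multiples of $M'$-rows to the $W'$-rows, and clearing negative entries of $c$ by substituting $c\mapsto c+\lambda^\top M'$ using $M'x=b'$ --- is a valid equivalence-preserving operation, but it cannot establish the lemma as stated: it destroys the clause that $c'$ and $W'$ contain the same collections of entries as $c$ and $W$ up to sign changes and zero insertions (the new entries are of the form $W_{kj}+\lambda_{i_1}+\lambda_{i_2}$), which is precisely the property the downstream results rely on to keep $\Delta'=\Delta$ and $\|c'\|_\infty=\O(\|c\|_\infty)$ in \cref{lemma:graver-best-step-count,lemma:wide-ip-reduction-to-perfect-matching}. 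Moreover, each such row operation changes $d'_k$ by $\lambda^\top b'$, so with $\lambda$ of order $\|W\|_\infty$ the right-hand side grows by terms like $\|W\|_\infty\|b'\|_1$, which is not covered by the claimed bound $\|d'\|_1=\O(\|d\|_1+\|A\|_1(\|e\|_1+\|l\|_1))$; the same objection applies to shifting the objective.

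The paper avoids this entirely with a different, purely local gadget: every variable $x_j$ is split into a pair $x_j^{(+)}\in[l_j,u_j]$, $x_j^{(-)}\in[-u_j,-l_j]$ coupled by a new matching row $x_j^{(+)}+x_j^{(-)}=0$, and then $(c_j)_+,(W_{\cdot,j})_+$ are attached to the positive copy while $(-c_j)_-,(-W_{\cdot,j})_-$ are attached to the negative copy, so every coefficient reappears unchanged up to sign and the right-hand side only gains zeros; the same splitting simultaneously removes $\pm2$ entries and (in a second pass) mixed-sign columns of $M$. Your parenthetical fallback (``route negative-cost edges through a gadget analogous to the upper-bound gadget'') points in this direction but is not worked out, and the upper-bound gadget by itself does not flip signs --- the essential missing idea is the paired positive/negative copy linked by a zero-sum constraint. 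Note also that this splitting uses the finite bounds $[-u_j,-l_j]$, so it must be performed \emph{before} you discard upper bounds; your ordering (capacity elimination in stage one, sign handling later) would additionally break $l'=\veczero$ when you later negate columns of $M$ whose variables have become unbounded above.
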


Here, parameters with primes are used to denote the parameters of the new instance and $u=\vecinfty$ denotes the absence of upper bounds on $x$. The proof transforms the constraint matrix and variable bounds to the intended form by creating additional constraints and variables. It consists of steps that individually treat columns with negative entries, columns with $1$-norm strictly less than $2$ and finite variable upper bounds. Most of these obstacles are circumvented by conceptually subdividing edges or adding redundant constraints. Due to space restrictions, we postpone the proof to \cref{sec:master-reduction}. 

By virtue of \cref{lemma:master-reduction}, we can now restrict ourselves to finding algorithms for \crefilp{ilp:mixed} which represent perfect $b$-matching problems with additional variables and constraints.

\subsection{Few arbitrary variables}
\label{sec:overview-few-arbitrary-variables}

We now focus on solving \crefilp{ilp:tall}. We first motivate the main idea underlying the algorithm and present the algorithm itself, before we give an outline of the remaining parts. The key idea is to treat the backdoor variables $y$ and the matching variables $x$ separately. Let us first consider the problem of determining the feasibility of \crefilp{ilp:tall}. In essence, we replace the constraint $Ty+Mx=b$ and matching variables $x$ with a polyhedral constraint of the form $b-Ty\in P$ for some polyhedron $P$ that models that there must exist a $x\in\Z_{\ge0}^n$ such that $Mx=b-Ty$, in other words, there must exist a perfect $(b-Ty)$-matching in $G(M)$. To construct such polyhedral constraint on $y$, the first idea may be to use the convex hull of all points $z\in\Z^m$ such that there exists a perfect $z$-matching--unfortunately, a naive execution of this strategy fails as when $G(M)=K_3$, the complete graph on $3$ vertices, we have that $z=(0,0,0)$, $z=(2,2,2)$ and the convex combination $z=(1,1,1)$ are in the hull, but the latter does not admit a perfect $z$-matching. Hence, we attempt to model a discrete set that is not lattice-convex on $\Z^3$ with a convex constraint, which is impossible.

To contrast this, the work by Cslovjecsek et al.~\cite{DBLP:conf/soda/CslovjecsekKLPP24} shows that, even for general matrices $M$, one can work around this issue by restricting $y$ to a fixed remainder $r\in\Z^m$ modulo some large integer $B$. That is, one can construct a polyhedron $P_r$ such that for all $y\in B\Z^m+r$, we have that $\{Mx=b-Ty,x\in\Z_{\ge0}^n\}$ is feasible if and only if $b-Ty\in P_r$. Note that through an affine transformation, this may equivalently be posed as a polyhedral constraint directly on $y$. Their choice of $B$ may grow with the dimensions of $M$, which calls for a more problem specific approach to solve \crefilp{ilp:tall} in FPT time. When $G(M)$ is simple, we show that we may restrict the modulus $B$ to be the constant $2$, which paves the way for an FPT algorithm. In addition, we show that one may optimize over a linear objective function by encoding the objective contribution of the matching part of the ILP in an additional dimension of $P_r$. The required convexity property to make this work translates to a lattice-convexity property, see \cref{lemma:convexity-of-b-matching}, of the function $f_{c,M}$ from \cref{def:f}.

\begin{definition}
    Let $f_{c,M}:\Z^m\to\Z\cup\{\infty\}$ be defined by
    \[
        f_{c,M}(z)=\min\bigl\{c^\top x\bigm\vert Mx=z,x\in\Z_{\ge0}^n\bigr\}.
    \]
    That is, $f_{c,M}(z)$ is the cost of a minimum $c$-cost perfect $z$-matching of $G(M)$ or $\infty$ if $G(M)$ has no perfect $z$-matching.
    \label{def:f}
\end{definition}

\begin{lemma}
    Let $z^{(1)},z^{(2)},\dots,z^{(\ell)}\equiv r\pmod2$ be given and $\lambda^{(1)},\lambda^{(2)},\dots,\lambda^{(\ell)}\ge0$ be real convex multipliers, i.e., $\sum_{k\in[\ell]}\lambda^{(k)}=1$, such that $z:=\sum_{k\in[\ell]}\lambda^{(k)}z^{(k)}\equiv r\pmod2$. Then $f_{c,M}(z)\le\sum_{k\in[\ell]}\lambda^{(k)}f_{c,M}(z^{(k)})$.
    \label{lemma:convexity-of-b-matching}
\end{lemma}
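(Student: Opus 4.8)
The plan is an exchange argument in the spirit of (jump) M-convexity: reduce to uniform rational multipliers, introduce a cost-neutral ``cycle exchange'' on pairs of matchings, and iterate it under a potential until all degree sequences coincide with $z$.

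\textbf{Reduction to uniform multipliers.} We may assume $f_{c,M}(z^{(k)})<\infty$ whenever $\lambda^{(k)}>0$ (otherwise the right-hand side is $\infty$), discarding the other indices. As $z\equiv r\pmod2$ and each $z^{(k)}\equiv r\pmod2$, the differences $z^{(k)}-z$ lie in $2\Z^m$. Over the nonempty polytope $\{\mu\ge\veczero:\vecone^\top\mu=1,\;\sum_k\mu^{(k)}z^{(k)}=z\}$ the functional $\mu\mapsto\sum_k\mu^{(k)}f_{c,M}(z^{(k)})$ is linear, hence minimized at a rational vertex $\mu^\ast$, and $\sum_k(\mu^\ast)^{(k)}f_{c,M}(z^{(k)})\le\sum_k\lambda^{(k)}f_{c,M}(z^{(k)})$. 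Clearing denominators in $\mu^\ast$ and listing each $z^{(k)}$ with its multiplicity reduces the claim to: whenever $\zeta^{(1)},\dots,\zeta^{(N)}\equiv r\pmod2$ have finite $f_{c,M}$ and average $z=\tfrac1N\sum_j\zeta^{(j)}$, then $N\,f_{c,M}(z)\le\sum_{j=1}^N f_{c,M}(\zeta^{(j)})$. Fix minimum-cost integral matchings $x^{(j)}\in\Z_{\ge0}^n$ with $Mx^{(j)}=\zeta^{(j)}$, so initially $\sum_j c^\top x^{(j)}=\sum_j f_{c,M}(\zeta^{(j)})$.

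\textbf{The cost-neutral cycle exchange.} For $a\neq b$ put $g:=x^{(a)}-x^{(b)}$, so $Mg=\zeta^{(a)}-\zeta^{(b)}\in2\Z^m$. Form the multigraph on $[m]$ with $|g_e|$ parallel copies (each signed like $g_e$) of the edge of $G(M)$ associated with each $e\in[n]$; all degrees are even, so it decomposes into edge-disjoint cycles $Z$, and each signed incidence vector $g_Z\in\Z^n$ has $Mg_Z\in2\Z^m$. Replacing $(x^{(a)},x^{(b)})$ by $(x^{(a)}-g_Z,\,x^{(b)}+g_Z)$ keeps both vectors in $\Z_{\ge0}^n$ (on every edge $g_Z$ has the sign of $g$ and $|g_Z|\le|g|$), preserves $Mx^{(a)}+Mx^{(b)}$ and $c^\top x^{(a)}+c^\top x^{(b)}$, and shifts each degree sequence by a vector of $2\Z^m$; hence the updated degree sequences stay in $2\Z^m+r$ with unchanged pairwise sum, and $\sum_j c^\top x^{(j)}$ is unchanged overall.

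\textbf{The potential argument.} Let $\Phi:=\sum_j\|\zeta^{(j)}-z\|_1$, a nonnegative even integer. If some $\zeta^{(j)}\neq z$, then since the $\zeta^{(j)}$ average to $z$ there are a coordinate $v_0$ and indices $a,b$ with $\zeta^{(a)}_{v_0}\ge z_{v_0}+2$ and $\zeta^{(b)}_{v_0}\le z_{v_0}-2$, so $(Mg)_{v_0}\ge4$. Route the decomposition of the previous paragraph greedily---pairing a ``$+$'' copy with a ``$-$'' copy at each vertex whenever both occur---so that every cycle $Z$ has, at every vertex $v$, a contribution $(Mg_Z)_v$ of the same sign as $(Mg)_v$; since $(Mg)_{v_0}\ge4$ this produces at least two cycles passing through $v_0$ via a ``$+\,+$'' step, and (splitting such a cycle at $v_0$ if needed, preserving the pairings) we may take one of them, $Z$, to meet $v_0$ exactly once, so $(Mg_Z)_{v_0}=2$. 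Perform the exchange with this $Z$ and set $\alpha_v:=\zeta^{(a)}_v-z_v$, $\beta_v:=\zeta^{(b)}_v-z_v$, so $\alpha_v-\beta_v=(Mg)_v$. At $v_0$ both $|\alpha_{v_0}|$ and $|\beta_{v_0}|$ fall by exactly $2$; at any other $v$ with $(Mg_Z)_v=2s$, $s\in\{\pm1\}$, we have $s(\alpha_v-\beta_v)\ge|(Mg_Z)_v|$, and convexity of $t\mapsto|t|$ gives $|\alpha_v-2s|+|\beta_v+2s|\le|\alpha_v|+|\beta_v|$; all remaining coordinates are unchanged. So $\Phi$ drops by at least $4$ while $\sum_j c^\top x^{(j)}$ stays fixed. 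Iterating forces $\Phi=0$, i.e.\ $\zeta^{(j)}=z$ for all $j$; then $N\,f_{c,M}(z)\le\sum_j c^\top x^{(j)}$, which equals its initial value $\sum_j f_{c,M}(\zeta^{(j)})$. This is the claim.

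\textbf{Expected main obstacle.} The routine points---nonnegativity and cost-invariance of the updated matchings---are immediate, since an update only adds and subtracts one vector $g_Z$. The technical core is the cycle surgery: organizing the Eulerian decomposition so that every cycle is sign-consistent with $g$ (no coordinate of a cycle contribution opposes the corresponding coordinate of $Mg$) while also arranging that the critical coordinate $v_0$ is met exactly once, so the step moves $v_0$ by precisely $2$ and no coordinate overshoots the average $z$. This amounts to verifying that $f_{c,M}$ satisfies the (SBO) jump M-convexity exchange axiom; granting it, the inequality follows from the elementary potential computation together with the reduction to uniform multipliers.
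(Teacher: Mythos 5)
Your argument is correct, but it reaches the exchange step by a genuinely different route than the paper. The paper first proves that $f_{c,M}$ is SBO jump M-convex (\cref{prop:b-matching-is-sbo-jump-m-convex}) by blowing the $b$-matchings up to perfect matchings in the auxiliary graph of \cref{prop:gb} and decomposing the symmetric difference of two optimal matchings into alternating paths, following Kobayashi; the abstract closing step (\cref{lemma:sbo-jump-m-convex-closing}) then assembles a $\{-2,0,2\}^m$ move from the resulting $2$-step decomposition, and a per-coordinate potential argument (\cref{lemma:convexity-of-sbo-jump-m-convex-functions}) finishes exactly as your global $\ell_1$ potential does. You instead stay in the original graph: you take the signed difference $g=x^{(a)}-x^{(b)}$ of two feasible $b$-matchings, decompose the associated signed multigraph into sign-consistent (conformal) closed trails via a transition pairing, and swap a single trail through the target coordinate between the two solutions, which preserves the pairwise cost sum exactly and keeps everything on $2\Z^m+r$. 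What your route buys is a shorter, self-contained proof of \cref{lemma:convexity-of-b-matching} that avoids the detour through $G_{\overline b}$ and the valuated exchange axiom (your trail surgery is in effect a direct verification of the needed special case of that axiom); what the paper's route buys is the abstract statement for all SBO jump M-convex functions, which it records as a result of independent interest, whereas your argument is tied to uncapacitated $b$-matchings. One small imprecision: your selected trail $Z$ meets $v_0$ exactly once, but at other vertices $v$ it may contribute $(Mg_Z)_v\in\{0,\pm2,\pm4,\dots\}$, not only $\pm2$ as written; since the contribution is conformal with $(Mg)_v$ and satisfies $|(Mg_Z)_v|\le|\alpha_v-\beta_v|$, the same convexity estimate $|\alpha_v-(Mg_Z)_v|+|\beta_v+(Mg_Z)_v|\le|\alpha_v|+|\beta_v|$ still applies, so the potential argument goes through unchanged.
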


We use modular congruence on vectors to denote component-wise modular congruence. Next, we show in which way \cref{lemma:convexity-of-b-matching} allows us to reformulate \crefilp{ilp:tall}. First, we perform a binary search on the objective to obtain the problem of finding a feasible point of \crefilp{ilp:tall} subject to an additional linear constraint $a^\top y+c^\top x\le\omega^*$. We guess the remainder vector $t$ of $y$ in an optimal solution modulo $2$. It then suffices to search for an integer point $y$ restricted to $y\equiv t\pmod2$ in the box $e\le y\le g$ subject to the constraint that there is a perfect $(b-Ty)$-matching with cost at most $\omega^*-a^\top y$. This is equivalent to finding a point in the set
\[
    \{y\in\Z^p:y\equiv t\pmod2,e\le y\le g,f_{c,M}(b-Ty)\le\omega^*-a^\top y\}.
\]
In particular, as the parity of $b-Ty\equiv r\pmod2$ is constant for $y\equiv t\pmod2$, \cref{lemma:convexity-of-b-matching} shows that there exists a convex set $P_r$ such that the constraint $f_{c,M}(b-Ty)\le\omega^*-a^\top y$ may be replaced with $(\omega^*-a^\top y,b-Ty)\in P_r$. When we obtain a good representation of $P_r$, we use the algorithm by Reis and Rothvoss~\cite{DBLP:conf/focs/ReisR23} to solve the integer program in FPT time. After guessing all $2^p$ parity vectors $t$, an optimal solution to \crefilp{ilp:tall} is found.

We now discuss the missing ingredients needed to make the previously discussed algorithm work. First, we explain how \cref{lemma:convexity-of-b-matching} is derived. To do so, we consider jump systems, where degree sequences of graphs and functions such as $f_{c,M}$ have been studied exhaustively~\cite{DBLP:journals/siamdm/BouchetC95,DBLP:journals/siamdm/Murota06,DBLP:journals/ieicet/MurotaT06,DBLP:conf/ipco/Kobayashi23}. Murota~\cite{DBLP:journals/ieicet/MurotaT06} observed that $f_{c,M}$ describes a jump M-convex function, which is a valuated generalization of jump systems. In fact, recent work on the general factor problem~\cite{DBLP:conf/ipco/Kobayashi23} reveals that a similar jump M-convex function defined for weighted graph factorizations has additional exploitable properties. Informally, small steps, as in \cref{def:2-step-decomposition}, connecting points in the effective domain of such function may be rearranged in any order. For this reason, Kobayashi defines strongly base orderable (SBO) jump systems and a valuated extension. Our function $f_{c,M}$ defined over weighted uncapacitated perfect $b$-matchings can be seen to also satisfy the properties of \cref{def:sbo-jump-m-convex}. As \cref{lemma:convexity-of-b-matching} holds for general SBO jump M-convex functions, we discuss our proof in terms of this abstract property. We use $\sqsubseteq$ to denote the conformal partial order defined by $x\sqsubseteq y$ if and only if $|x_i|\le|y_i|$ and $x_iy_i\ge0$ for all $i$.

\begin{restatable}{definition}{deftwostepdecomposition}
    A $2$-step decomposition of a vector $d\in\Z^m$ is a multiset of steps $p^{(1)},\dots,p^{(\ell)}\in\Z^m$, satisfying $\|p^{(k)}\|_1=2$ and $p^{(k)}\sqsubseteq d$ for all $k\in[\ell]$, such that $d=\sum_{k\in[\ell]}p^{(k)}$.
    \label{def:2-step-decomposition}
\end{restatable}

\begin{restatable}{definition}{defsbojumpmconvex}
    A function $f\colon\Z^m\to\Q\cup\{\infty\}$ is SBO jump M-convex if for every two points $z^{(1)},z^{(2)}$ in the domain $\{z\in\Z^m:f(z)<\infty\}$ there exists a 2-step decomposition $p^{(1)},\dots,p^{(\ell)}$ of $z^{(2)}-z^{(1)}$ and real numbers $g^{(1)},\dots,g^{(\ell)}$ such that
    \begin{itemize}
        \item $f(z^{(2)})=f(z^{(1)})+\sum_{i\in[\ell]}g^{(k)}$,
        \item for all $I\subseteq[\ell]$ it holds that $f(z^{(1)}+\sum_{k\in I}p^{(k)})\le f(z^{(1)})+\sum_{k\in I}g^{(k)}$.\\
    \end{itemize}
    \label{def:sbo-jump-m-convex}
\end{restatable}

The alternating path argument employed by Kobayashi~\cite{DBLP:conf/ipco/Kobayashi23} to show that $\vecone$-capacitated perfect $b$-matchings satisfy \cref{def:sbo-jump-m-convex} can be adapted to show the SBO jump M-convexity of $f_{c,M}$. For this, we use a known gadget construction that relates $b$-matchings to perfect matchings~\cite{schrijver2003combinatorial}. See \cref{sec:tall} for the full proof.

\cref{lemma:convexity-of-b-matching} is derived by first showing that a $2$-step decomposition of a difference $z^{(2)}-z^{(1)}$ can be used to construct a small even step $\sum_{k\in I}p^{(k)}\in\{-2,0,2\}^m$ by combining some of the steps. Such a small even step can be used to bring a pair of points $z^{(2)},z^{(1)}$ closer to each other while remaining on the lattice $2\Z^m+r$ and without increasing the sum of the function values on these points. Such pairwise modifications may then be exhaustively performed to move the points appearing in a convex combination towards the target $z$ as in \cref{lemma:convexity-of-b-matching} and obtain the required result.

The final ingredient needed to implement the FPT algorithm for \crefilp{ilp:tall} is a good characterization of a suitable polyhedron $P_r$ that models the matching part of the ILP. We obtain this by letting $P_r$ be the convex hull of the vectors $(\omega,z)\in\R\times(2\Z^m+r)$ for which $f_{c,M}(z)\le\omega$ and $z$ is in some bounding box. That is, $P_r$ is the epigraph of a natural convex extension of $f_{c,M}$ restricted to a bounding box. Since proximity arguments allow us to assume that $e$ and $g$ are finite, it is possible to bound $z$, see~\cite{DBLP:journals/mp/CookGST86}. We obtain a polynomial time separation oracle for $P_r$ through the ellipsoid method~\cite{DBLP:books/sp/GLS1988} and by noting that $P_r$ can efficiently be optimized over\footnote{We note that it may be possible to derive a combinatorial separation algorithm, but this is not needed to establish the fixed-parameter tractability of \crefilp{ilp:tall}. For a closely related separation problem, see~\cite{DBLP:journals/mor/Zhang03}.}. This follows from the fact that weights associated with a vertex $i$ may be transferred to the edge weights of the incident edges, which shows that optimizing over $(\omega,z)\in P_r$ corresponds to solving a minimum cost parity constrained $b$-matching problem. The parity constraints can straightforwardly be translated into constraints compatible with the polynomially solvable generalized matching problem~\cite{schrijver2003combinatorial,DBLP:journals/mp/EdmondsJ73}.

\subsection{Few arbitrary variables and constraints}
\label{sec:overview-mixed}

The randomized XP time algorithm for solving \crefilp{ilp:mixed} is obtained through a series of reductions. As a key intermediate step, we will employ a pseudo-polynomial reduction in~\cite{schrijver2003combinatorial} that transforms the minimum cost perfect $b$-matching to the minimum cost perfect matching problem on a graph with a total of $\|b\|_1$ vertices. In order to prevent an exponential blow-up in terms of the encoding size of $b$ and to simultaneously eliminate the fixed number $p$ of arbitrary variables, we restrict the search of the optimal solution to a domain with polynomial variable bounds. This is accomplished by bounding the proximity, which limits the maximum distance between LP relaxation solutions and optimal integral solutions. We obtain this by bounding the so-called circuits of the constraint matrix.

\begin{definition}
    A circuit of $A\in\Z^{m\times n}$ is an element $c\in\Z^n$ such that $Ac=\veczero$ with minimal support and coprime entries. The circuit complexity $c_\infty$ of $A$ denotes
    \[
        c_\infty(A)=\max\{\|c\|_\infty\ \vert\ c\text{ is a circuit of }A\}.
    \]
\end{definition}

We adapt the standard subdeterminant bound for $c_\infty(A)$ using Cramer's rule~\cite{onn2010nonlinear}, which on first glance would yield an unsatisfactory $2^{\Omega(n)}$ bound as $G(M)$ may contain many disjoint odd cycles. To deal with this, we show that if an entry of a Cramer's rule solution is sufficiently large, there is a sufficiently large common factor dividing all of the entries simultaneously. This can then be used to show that the normalized circuit is polynomially bounded. To prove \cref{lemma:circuit-ub-mixed}, we Laplace-expand along the arbitrary columns and rows and analyze the structure of the resulting submatrix of $M$ using known properties of the subdeterminants of such matrices established by Grossman et al.~\cite{GROSSMAN1995213}.

\begin{restatable}{lemma}{lemmacircuitubmixed}
    Let
    \[
        A=\begin{pmatrix}C&W\\T&M\end{pmatrix}
    \]
    consist of the matrices $C\in\{-\Delta,-\Delta+1,\dots,\Delta\}^{h\times p},W\in\{-\Delta,-\Delta+1,\dots,\Delta\}^{h\times n},T\in\{-\Delta,-\Delta+1,\dots,\Delta\}^{m\times p}$ and $M\in\Z^{m\times n}$, which satisfies $\|M\|_1\le2$. Let $r$ be the rank of $A$, then $c_\infty(A)\le\O(\Delta r)^{p+h}.$
    \label{lemma:circuit-ub-mixed}
\end{restatable}

\cref{thm:proximity-from-circuits} from~\cite{DBLP:journals/mp/HemmeckeKW14} then immediately yields the polynomial proximity upper bound from \cref{cor:proximity-ub}.

\begin{theorem}[Theorem 3.14 in~\cite{DBLP:journals/mp/HemmeckeKW14}]
    Let $x^*$ be an optimal solution to \crefilp{ilp:general}. If the ILP is feasible, then there exists an optimal ILP solution $x$ with $\|x^*-x\|_\infty\le n\cdot c_\infty(A)$.
    \label{thm:proximity-from-circuits}
\end{theorem}

\begin{corollary}
    Let $(y^*,x^*)$ be an optimal solution to the LP relaxation of \crefilp{ilp:mixed}. If the ILP is feasible, then there exists an optimal ILP solution $(y,x)$ with $\|(y^*,x^*)-(y,x)\|_\infty\le n\cdot\O(\Delta(m+h))^{p+h}$.
    \label{cor:proximity-ub}
\end{corollary}

We prove \cref{lemma:circuit-ub-mixed} in \cref{sec:proximity-bound} and additionally show that this is asymptotically tight for constant $p+h$ and when $m=\Theta(n)$ through an explicit construction which is summarized in \cref{prop:proximity-lb}. We note that a constant multiplicative factor of $1/(p+h)^{p+h+1}$ is hidden in the $\Omega$.

\begin{restatable}{proposition}{propproximitylb}
    For any fixed nonnegative integers $p,h$ and for any positive integer $\Delta$, there exists an infinite family of instances of \crefilp{ilp:mixed}, each given a vertex solution to the LP relaxation, such that the nearest integral solution is at $\infty$-norm distance $\Omega(m\cdot(\Delta m)^{p+h})$ from the relaxation solution. These instances satisfy $m=\Theta(n),l=\veczero,u=\vecinfty$ and use constraint matrix coefficients in $\{0,1,\Delta\}$.
    \label{prop:proximity-lb}
\end{restatable}

Note that \cref{prop:proximity-lb} implies that the circuits and Graver basis elements of the constraint matrix of \crefilp{ilp:mixed} can have an $\infty$-norm of $\Omega((\Delta m)^{p+h})$. This shows that a Graver basis augmentation based approach is unlikely to assist in deriving simpler FPT algorithms to solve \crefilp{ilp:tall} even when $\Delta$ is bounded.

The randomized XP time algorithm which, after establishing \cref{cor:proximity-ub}, boils down to a chain of reductions, is elaborately described in \cref{sec:mixed-algorithm}. We start by computing an optimal LP relaxation solution to the relaxation of \crefilp{ilp:mixed} in polynomial time. Then using \cref{cor:proximity-ub}, we can restrict the search of an optimal integral solution to a domain with $\|(g,u)-(e,l)\|_\infty=n\cdot\O(\Delta(m+h))^{p+h}$. We may then guess the possible values of the $p$ arbitrary variables $y$ in a polynomial number of guesses and obtain an instance of \crefilp{ilp:wide} with polynomially bounded variable domains. \cref{lemma:master-reduction} then shows that this reduces to a constrained perfect $b$-matching problem with polynomially bounded right-hand-side $b$. Now, after applying the pseudo-polynomial reduction to a perfect matching problem~\cite{schrijver2003combinatorial}, which is compatible with additional constraints as noted in~\cite{DBLP:journals/jal/CameriniGM92}, we find that solving \crefilp{ilp:mixed} can be polynomially reduced to finding a minimum cost constrained perfect matching in a graph with polynomially many vertices. As all variables in this resulting problem are binary, we can condense all $h$ constraints into a single constraint by encoding the constraints in base-$B$ for some sufficiently large $B$. This generalizes the reduction steps used in~\cite{DBLP:journals/mp/BergerBGS11,DBLP:journals/jacm/PapadimitriouY82} and only increases the coefficient size of the constraint matrix polynomially for fixed $h$. The resulting constrained perfect matching problem can then be solved with a randomized pseudo-polynomial algorithm such as the one by Mulmuley, Vazirani and Vazirani~\cite{DBLP:journals/combinatorica/MulmuleyVV87}. Combining all steps yields the randomized XP time algorithm from \cref{thm:mixed-xp}.

To the best of our knowledge, the complexity of constrained matching where the objective is encoded in binary is still open. It is worth noting that the reduction chain shows that \crefilp{ilp:mixed} is polynomially equivalent to the exact matching problem, which aims to find a perfect matching in a graph with exactly a given target weight. Finding a deterministic pseudo-polynomial algorithm for this problem has been a central and intensively studied open problem for over four decades.

Finally, to complement the XP time complexity of the algorithm from \cref{thm:mixed-xp}, we reveal the W[1]-hardness of solving \crefilp{ilp:wide} parameterized by $h$ in \cref{thm:wide-w1-hard}. This result is obtained through a parameterized reduction from the ILP problem, which is strongly W[1]-hard when parameterized by the number of constraints as shown by Dvořák et al.~\cite{DBLP:journals/ai/DvorakEGKO21}. We merely rephrase their result and construction in \cref{thm:w1-multicolorclique}.

\begin{restatable}[Adjusted Theorem 22 from~\cite{DBLP:journals/ai/DvorakEGKO21}]{theorem}{thmwonemulticolorclique}
    Determining the feasibility of the system
    \[
        \{Wx=d:x\in\Z_{\ge0}^n\},
    \]
    where $W\in\Z_{\ge0}^{h\times n}$ is encoded in unary and $d\in\{0,1\}^h$, is W[1]-hard parameterized by $h$. This hardness persists when restricting to binary variables $x\in\{0,1\}^n$.
    \label{thm:w1-multicolorclique}
\end{restatable}

As described in \cref{sec:mixed-hardness}, the variables of the ILP in \cref{thm:w1-multicolorclique} can be duplicated and linked through graph cycles to reduce the unary sized coefficients in $W$ and obtain a $0/1$ constraint matrix, yielding \cref{thm:wide-w1-hard}. As a result of this construction, the work of Marx~\cite{DBLP:journals/toc/Marx10}, reveals a lower bound conditional on ETH that rules out an $f(h)n^{o(h/\log h)}$ time algorithm to solve the instances from \cref{thm:wide-w1-hard}.

\section[Reduction to perfect b-matching]{Reduction to perfect $b$-matching}
\label{sec:master-reduction}

We proceed to present the details of \cref{sec:overview} and prove the required intermediate results. First, we consider \cref{lemma:master-reduction} which simplifies the design of the algorithms for \crefilp{ilp:mixed} by showing that it suffices to give algorithms exploiting backdoors to $b$-matching problems. To do so, we generalize the reduction steps in~\cite{schrijver2003combinatorial} that transform the generalized matching problem to the perfect $b$-matching problem. Most steps of \cref{lemma:master-reduction} can be interpreted as gadget constructions in bidirected graphs, see~\cite{schrijver2003combinatorial}, but we present them in terms of ILPs to retain a direct connection with the additional variables and linear constraints.

\lemmamasterreduction*

\begin{proof}
    We split the reduction into multiple steps where we restrict the generalized incidence matrix $M$ and the variable bounds on the variables $x$ corresponding to the matching. To accomplish this, we add new columns and rows to the matrix $M$ and add corresponding zero columns and rows to $T$ and $W$ respectively, unless mentioned otherwise. In each step, transformations are performed on the previous instances and the growth of the instance parameters are expressed in terms of the result of the previous step.
    \begin{claim*}
        All objective coefficients and columns of $A$ corresponding to the variables $x$ can be made nonnegative while additionally ensuring that $M'\in\{0,1\}^{m'\times n'}$ at the cost of increasing $n',m'=\O(n+m)$.
        \label{claim:nonnegativity}
    \end{claim*}
    \begin{claimproof}
        We split this into three steps.
        \proofsubparagraph{Inversion of nonpositive columns of $M$.}
        First, multiply any column $j$ for which $M_{\cdot,j}$ has a negative coefficient and no positive coefficients by $-1$ as shown in \cref{fig:invert-column}.
        \begin{figure}[H]
            \begin{cdisplaymath}
                \begin{array}{c}
                    c_j\\
                    \hdashline{}
                    A_{\cdot,j}\\
                    \hdashline{}
                    [l_j,u_j]
                \end{array}
                \to
                \begin{array}{c}
                    -c_j\\
                    \hdashline{}
                    -A_{\cdot,j}\\
                    \hdashline{}
                    [-u_j,-l_j]
                \end{array}
            \end{cdisplaymath}
            \caption{A column corresponding to an edge with total negative sign can be inverted to reduce the number of negative coefficients in the matching matrix.}
            \label{fig:invert-column}
        \end{figure}
        That is, we replace the $j$-th column $A_{\cdot,j}$ of $A$ with $-A_{\cdot,j}$, the domain $[l_j,u_j]$ of $x_j$ with $[-u_j,-l_j]$ and its objective coefficient $c_j$ with $-c_j$.
        
        \proofsubparagraph{Ensuring the nonnegativity of $c$ and $W$.}
        Second, we ensure that $c\ge\veczero,W\in\Z_{\ge0}^{h\times n}$ and, as a side product, additionally eliminate entries with a coefficient of $2$. To do so, we split the columns into their positive and negative parts as visualized in \cref{fig:sign-split}.
        \begin{figure}[H]
            \centering
            \begin{subfigure}{0.55\textwidth}
                \begin{cdisplaymath}
                    \begin{array}{c:l}
                        c_j\\
                        \cdashline{1-1}{}
                        W_{\cdot,j}&\\
                        \cdashline{1-1}{}
                        \veczero&\\
                        M_{i_1,j}&=b_{i_1}\\
                        \veczero&\\
                        M_{i_2,j}&=b_{i_2}\\
                        \veczero&\\
                        \\
                        \cdashline{1-1}{}
                        [l_j,u_j]&
                    \end{array}
                    \to
                    \begin{array}{c c:l}
                        (c_j)_+&-(c_j)_-&\\
                        \cdashline{1-2}{}
                        (W_{\cdot,j})_+&-(W_{\cdot,j})_-\\
                        \cdashline{1-2}{}
                        \veczero&\veczero\\
                        M_{i_1,j}&0&=b_{i_1}\\
                        \veczero&\veczero\\
                        0&-M_{i_2,j}&=b_{i_2}\\
                        \veczero&\veczero\\
                        1&1&=0\\
                        \cdashline{1-2}{}
                        [l_j,u_j]&[-u_j,-l_j]&
                    \end{array}
                \end{cdisplaymath}
                \caption{Splitting a column without a $2$ coefficient}
                \label{fig:sign-split-normal}
            \end{subfigure}\\
            \begin{subfigure}{0.55\textwidth}
                \begin{cdisplaymath}
                    \begin{array}{c:l}
                        c_j\\
                        \cdashline{1-1}{}
                        W_{\cdot,j}&\\
                        \cdashline{1-1}{}
                        \veczero&\\
                        2&=b_i\\
                        \veczero&\\
                        \\
                        \cdashline{1-1}{}
                        [l_j,u_j]&
                    \end{array}
                    \to
                    \begin{array}{c c:l}
                        (c_j)_+&-(c_j)_-&\\
                        \cdashline{1-2}{}
                        (W_{\cdot,j})_+&-(W_{\cdot,j})_-\\
                        \cdashline{1-2}{}
                        \veczero&\veczero\\
                        1&-1&=b_i\\
                        \veczero&\veczero\\
                        1&1&=0\\
                        \cdashline{1-2}{}
                        [l_j,u_j]&[-u_j,-l_j]&
                    \end{array}
                \end{cdisplaymath}
                \caption{Splitting a column with a $2$ coefficient}
                \label{fig:sign-split-self-loop}
            \end{subfigure}
            \caption{A column may be split into its positive and negative part to ensure that all the objective coefficients and coefficients corresponding to the additional global constraints are nonnegative.}
            \label{fig:sign-split}
        \end{figure}
        For this, let $a_+=\max\{0,a\}$ and $a_-=\min\{0,a\}$ denote the positive and negative part of some scalar $a$ and employ the same notation for coordinate-wise vector operations. For every column $j\in[n]$ such that $\|M_{\cdot,j}\|_\infty\le1$ for which its two nonzero entries are contained in $\{M_{i_1,j},M_{i_2,j}\}$ and $M_{i_1,j}\ge0$, we implement the following modifications:
        \begin{itemize}
            \item Split $x_j$ into the variables $x_j^{(+)},x_j^{(-)}$ where $x_j^{(+)}\in[l_j,u_j],x_j^{(-)}\in[-u_j,-l_j]$.
            \item Introduce the constraint $x_j^{(+)}+x_j^{(-)}=0$ so that $x_j^{(-)}=-x_j^{(+)}$.
            \item Split the original objective term $c_jx_j$ into $(c_j)_+x_j^{(+)}-(c_j)_-x_j^{(-)}$.
            \item Similarly replace $W_{\cdot,j}x_j$ with $(W_{\cdot,j})_+x_j^{(+)}-(W_{\cdot,j})_-x_j^{(-)}$.
            \item Replace the term $M_{i_1,j}x_j$ in the $i_1$-th constraint with $M_{i_1,j}x_j^{(+)}$.
            \item Replace the term $M_{i_2,j}x_j$ in the $i_2$-th constraint with $-M_{i_2,j}x_j^{(-)}$.
        \end{itemize}
        For a column with a single coefficient of $2=M_{i,j}$, we may apply the same procedure by interpreting $i_1=i_2=i,M_{i_1,j}=M_{i_2,j}=1$ and replacing the term $2x_j$ with $1x_j^{(+)}-1x_j^{(-)}$.
        
        \proofsubparagraph{Eliminating mixed sign columns in $M$.}
        Third, for a remaining column $M_{\cdot,j}$ with nonzero coefficients $-1$ and $1$ we again split $x_j$ into $x_j^{(+)}\in[l_j,u_j]$ and $x_j^{(-)}\in[-u_j,-l_j]$ as shown in \cref{fig:sign-split-mixed}.
        \begin{figure}[H]
            \begin{cdisplaymath}
                \begin{array}{c}
                    c_j\\
                    \hdashline{}
                    W_{\cdot,j}\\
                    \hdashline{}
                    \veczero\\
                    -1\\
                    \veczero\\
                    1\\
                    \veczero\\
                    \\
                    \hdashline{}
                    [l_j,u_j]
                \end{array}
                \to
                \begin{array}{c c:l}
                    c_j&0&\\
                    \cdashline{1-2}{}
                    W_{\cdot,j}&\veczero&\\
                    \cdashline{1-2}{}
                    \veczero&\veczero&\\
                    0&1&\\
                    \veczero&\veczero&\\
                    1&0&\\
                    \veczero&\veczero&\\
                    1&1&=0\\
                    \cdashline{1-2}{}
                    [l_j,u_j]&[-u_j,-l_j]
                \end{array}
            \end{cdisplaymath}
            \caption{An edge with mixed signs can be split into two positively signed edges which take opposing values.}
            \label{fig:sign-split-mixed}
        \end{figure}
        This encompasses the following:
        \begin{itemize}
            \item Replace $1x_j$ with $1x_j^{(+)}$ and $-1x_j$ with $1x_j^{(-)}$ in $M$.
            \item Add the constraint $x_j^{(+)}+x_j^{(-)}=0$.
            \item Copy the coefficients of $W$ and $c$ corresponding to $x_j$ over to $x_j^{(+)}$.
            \item Set the corresponding coefficients of $x_j^{(-)}$ to zero.
        \end{itemize}
        
        After this, all columns of $M$ have either no nonzero coefficients, a single $1$ coefficient or two $1$ coefficients. The size of the ILP in terms of $n+m$ is increased by a constant factor.
    \end{claimproof}
    \begin{claim*}
        All lower bounds $e,l$ can be set to $e',l'=\veczero$ resulting in $\|g'\|_1=\|g-e\|_1,\|u'\|_1=\|u-l\|_1,\|d'\|_1=\O(\|d\|_1+\|A\|_1(\|e\|_1+\|l\|_1)),\|b'\|_1=\O(\|b\|_1+\|A\|_1(\|e\|_1+\|l\|_1))$ and, if $p=0$, one may restrict $\|b'\|=\O(\|u-l\|_1)$.
    \end{claim*}
    \begin{claimproof}
        We translate the system by substituting $x'=x-l$. This yields new bounds $(e',l')=\veczero,(g',u')=(g,u)-(e,l)$ and right-hand side vector $(d',b')=(d,b)-A(e,l)$. The objective becomes a translation of the original. The bounds on $\|d'\|_1$ and $\|b'\|_1$ follow. For the special case where $p=0$, we may observe that $\|b'\|_1\le\|M\|_1\|u'\|_1\le2\|u-l\|_1=\O(\|u-l\|_1)$ holds or the system must be trivially infeasible.
    \end{claimproof}
    \begin{claim*}
        All columns of $M$ can be extended to full sum $\|M_{\cdot,j}\|_1=2$ by increasing $\|u'\|_1=\O(\|g\|_1+\|u\|_1)$ and $\|b'\|_1=\O(\|b_1\|+\|g\|_1+\|u\|_1)$, and adding a binary vector to $T$.
    \end{claim*}
    \begin{claimproof}
        First, we treat the columns with $\|M_{\cdot,j}\|_1=1$. For this purpose, add a new variable $s^{(1)}$ with zero objective coefficient that will occur in a new redundant constraint. Let $B=\begin{pmatrix}T&M\end{pmatrix}$ be the lower part of the constraint matrix $A$. Now, for every column $B_{\cdot,j}$ of which the sum of the entries is odd, we introduce a coefficient of one in the new constraint. More formally, we define the row vector $r^{(1)}$ by $r^{(1)}=\vecone^\top B\bmod2$ and add the constraint $2s^{(1)}+r^{(1)}(y,x)=t^{(1)}$ where $t^{(1)}=(\vecone^\top b\bmod2)+2\lceil r^{(1)}(g,u)/2\rceil$ to the ILP. Here, $\bmod\,2$ denotes the component-wise remainder of the vector modulo $2$. By assigning $s^{(1)}$ a sufficiently large domain and ensuring that all integral solutions $x$ to the original problem instance satisfy that $t^{(1)}-r^{(1)}(y,x)\equiv\vecone^\top b-r^{(1)}(y,x)\equiv0\pmod2$, this constraint becomes redundant and preserves the equivalence of the problems. For this, observe that $\vecone^\top B(y,x)=\vecone^\top b$ when $(y,x)$ is a feasible solution, which shows that $\vecone^\top b-r^{(1)}(y,x)=\vecone^\top B(y,x)-r^{(1)}(y,x)\equiv0\pmod2$. By construction, it is admissible to restrict $s^{(1)}\in[0,\lceil r^{(1)\top}(g,u)/2\rceil]$, which increases $\|u\|_1$ by at most $\|g\|_1+\|u\|_1$.

        A similar technique can be used to treat the zero columns of $M$. Define the row vector $r^{(2)}$ by $r_j^{(2)}=2$ if $j$ corresponds to a column of $M$ and $M_{\cdot,j}=\veczero$, and $r_j^{(2)}=0$ otherwise. We may add the redundant constraint $2s^{(2)}+r^{(2)}(y,x)=2\|u\|_1$ and restrict $s^{(2)}\in[0,\|u_1\|]$. The coefficients of $2$ in $M$ introduced in this step can be eliminated without increasing the asymptotic size of the ILP as shown in the earlier step which splits the columns.
    \end{claimproof}
    \begin{claim*}
        One can set $u'=\vecinfty$ and restrict $M$ to be the incidence matrix of a simple graph while only increasing $\|b'\|_1=\O(\|b\|_1+\|u\|_1)$.
    \end{claim*}
    \begin{claimproof}
        The edge capacities are removed by essentially subdividing every edge into three edges and two intermediate vertices as shown in \cref{fig:capacity-elimination}.
        \begin{figure}[H]
            \begin{cdisplaymath}
                \begin{array}{c:l}
                    c_j&\\
                    \cdashline{1-1}{}
                    W_{\cdot,j}&\\
                    \cdashline{1-1}{}
                    \veczero&\\
                    1&=b_{i_1}\\
                    \veczero&\\
                    1&=b_{i_2}\\
                    \veczero&\\
                    &\\
                    &\\
                    \cdashline{1-1}{}
                    [0,u_j]&
                \end{array}
                \to
                \begin{array}{c c c:l}
                    c_j&0&0&\\
                    \cdashline{1-3}{}
                    W_{\cdot,j}&\veczero&\veczero\\
                    \cdashline{1-3}{}
                    \veczero&\veczero&\veczero\\
                    1&0&0&=b_{i_1}\\
                    \veczero&\veczero&\veczero\\
                    0&0&1&=b_{i_2}\\
                    \veczero&\veczero&\veczero\\
                    1&1&0&=u_j\\
                    0&1&1&=u_j\\
                    \cdashline{1-3}{}
                    [0,\infty)&[0,\infty)&[0,\infty)&
                \end{array}
            \end{cdisplaymath}
            \caption{Variable upper bounds can be eliminated by modeling the slack of the constraint with an intermediate variable.}
            \label{fig:capacity-elimination}
        \end{figure}
        To accomplish this, we implement the following modifications:
        \begin{itemize}
            \item Split every variable $x_j$ with a coefficient $1$ in rows $i_1$ and $i_2$ into $x_j^{(1)},x_j^{(s)},x_j^{(2)}$, each with lower bound $0$ and no upper bound.
            \item Variable $x_j^{(1)}$ copies the column $W_{\cdot,j}$ and objective coefficient $c_j$ from $x_j$.
            \item The other variables $x_j^{(s)},x_j^{(2)}$ have zeroes in their respective parts of $W$ and $c$.
            \item Add the constraints $x_j^{(1)}+x_j^{(s)}=u_j$ and $x_j^{(2)}+x_j^{(s)}=u_j$ so that $x_j^{(s)}$ models the slack of the capacity constraint $x_j\le u_j$ and $x_j^{(1)}=x_j^{(2)}$.
            \item Replace the term $x_j$ with $x_j^{(1)}$ in row $i_1$ and with $x_j^{(2)}$ in row $i_2$.
        \end{itemize}
    \end{claimproof}

    The bounds on the coefficients can be derived by following the reduction steps taken and suitably substituting the size increments.
\end{proof}

Having shown that the instances of \crefilp{ilp:general} may be restricted to be a perfect $b$-matching problem with additional complicating variables or constraints, we may rely on this in the algorithm constructions in \cref{sec:tall,sec:mixed-algorithm}.

\section{Few arbitrary variables}
\label{sec:tall}

This section is devoted to proving \cref{thm:tall-fpt} as outlined in \cref{sec:overview-few-arbitrary-variables}, which shows that ILP is FPT when parameterized by the number of columns with $1$-norm larger than $2$.

\thmtallfpt*

Recall that our strategy to solve \crefilp{ilp:tall} is to guess the remainder of the $p$ complicating variables modulo $2$ and then solve an ILP on only these $p$ variables by modeling the remaining matching part of the problem using a polyhedral constraint. For this, we require the convexity result from \cref{lemma:convexity-of-b-matching}. Recall \cref{def:f,def:sbo-jump-m-convex}. First, we modify the alternating path argument of Kobayashi~\cite{DBLP:conf/ipco/Kobayashi23} to show that the function $f_{c,M}$ which models the cost of a minimum cost $b$-matching is SBO jump M-convex. Then, we show that an arbitrary SBO jump M-convex function $f$ satisfies the required lattice convexity on $2\Z^m+r$. The essence of Kobayashi's argument is to consider two optimal matchings and decompose their difference into alternating paths. The endpoints of these paths then yield a suitable $2$-step decomposition $p^{(k)}$ and the changes they induce in the cost of the matchings yield the required objective steps $g^{(k)}$. As Kobayashi's argument works with $\vecone$-capacitated $b$-matchings, we need a slightly different approach to obtain suitable alternating paths. For this purpose, we present a construction used to pseudo-polynomially reduce the perfect $b$-matching problem to a perfect matching problem in \cref{prop:gb}~\cite{schrijver2003combinatorial}. The construction expands the vertices and edges of the original graph into multiple copies so that assigning a value of $x_e$ to $e$ in a $b$-matching in $G$ corresponds to choosing $x_e$ copies of $e$ in the constructed graph. This is visualized in \cref{fig:G_b}.

\begin{proposition}[Section 31.1 in~\cite{schrijver2003combinatorial}]
    Let $b,\overline b\in\mathbb Z_{\ge0}^V$ satisfy $b\le\overline b$ and $G=(V,E)$ be a simple graph. Let $G_{\overline b}$ be the graph defined by creating $\overline b_v$ copies of each vertex $v\in V$ and creating an edge between every pair of copies of $v_1$ and $v_2$ when $v_1$ and $v_2$ are adjacent in $G$. Then $G$ has a perfect $b$-matching if and only if $G_{\overline b}$ has a matching saturating exactly $b_v$ copies of $v$ for each $v\in V$. In particular, a given $x$ is a perfect $b$-matching of $G$ if and only if $G_{\overline b}$ has a matching $M$ such that $M$ contains exactly $x_{\{v_1,v_2\}}$ edges between the copies of $v_1$ and $v_2$ and saturates exactly $b_v$ copies of $v$.
    \label{prop:gb}
\end{proposition}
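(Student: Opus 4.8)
The plan is to prove the two implications separately, in both cases establishing the stronger ``in particular'' statement, which ties a concrete $b$-matching $x$ of $G$ to a concrete matching of $G_{\overline b}$ via the correspondence ``the value $x_e$ placed on an edge $e$'' $\leftrightarrow$ ``the number of matching edges joining copies of the two endpoints of $e$''. Everything then follows by quantifying over $x$.

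For the forward direction I would assume $x\in\Z_{\ge0}^E$ is a perfect $b$-matching of $G$, so $\sum_{e\ni v}x_e=b_v$ for every $v\in V$, and build a matching $N$ in $G_{\overline b}$ greedily: process the edges $e=\{v_1,v_2\}$ of $G$ in an arbitrary order, and for each $e$ select $x_e$ so-far-unused copies of $v_1$ and $x_e$ so-far-unused copies of $v_2$, pairing them into $x_e$ edges of $G_{\overline b}$ (these edges exist because $v_1$ and $v_2$ are adjacent in $G$). The one thing to verify is that fresh copies never run out: when $e\ni v$ is being processed, the number of copies of $v$ already used equals $\sum_{e'\ni v,\ e'\text{ processed}}x_{e'}\le b_v-x_e\le\overline b_v-x_e$, so at least $x_e$ copies of $v$ are still available. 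By construction $N$ is a matching (each vertex copy is touched by at most one edge), it uses exactly $x_e$ edges between copies of $v_1$ and $v_2$, and it saturates exactly $\sum_{e\ni v}x_e=b_v$ copies of each $v$.

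For the backward direction I would assume $N$ is a matching of $G_{\overline b}$ saturating exactly $b_v$ copies of each $v\in V$, and define, for $e=\{v_1,v_2\}\in E$, the value $x_e\in\Z_{\ge0}$ to be the number of edges of $N$ joining a copy of $v_1$ to a copy of $v_2$. Fix $v\in V$. Every edge of $N$ incident to a copy of $v$ must go to a copy of some neighbour of $v$ in $G$, since edges of $G_{\overline b}$ only join copies of adjacent vertices; and because $N$ is a matching, each saturated copy of $v$ is the endpoint of exactly one such edge. Summing over the neighbours of $v$ therefore gives that $\sum_{e\ni v}x_e$ equals the number of saturated copies of $v$, which is $b_v$. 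Hence $x$ is a perfect $b$-matching of $G$, and it reproduces exactly the edge counts of $N$, which together with the forward direction also yields the ``in particular'' clause.

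I do not expect a genuine obstacle: this is a bookkeeping argument. The only mild care needed is (i) to make the greedy assignment in the forward direction well-defined, which is precisely where the hypothesis $b\le\overline b$ and the degree identity $\sum_{e\ni v}x_e=b_v$ are used, and (ii) to track ``\emph{exactly} $b_v$ copies saturated'' (rather than ``at least $b_v$'') consistently in both directions, so that unsaturated copies are accounted for on both sides of the correspondence.
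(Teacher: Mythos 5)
Your proof is correct, and it is exactly the standard copy--correspondence argument underlying the construction cited from Schrijver (Section 31.1); the paper itself states \cref{prop:gb} as a known result without reproving it. Both directions are handled soundly: the greedy availability count $\sum_{e'\ni v,\,e'\text{ processed}}x_{e'}\le b_v-x_e\le\overline b_v-x_e$ is precisely where $b\le\overline b$ and the degree identity are needed, and the backward direction's counting of saturated copies correctly recovers $\sum_{e\ni v}x_e=b_v$, so the ``in particular'' clause holds as stated.
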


\begin{figure}
    \begin{subfigure}{0.45\textwidth}
        \centering
        \includegraphics[width=0.8\textwidth]{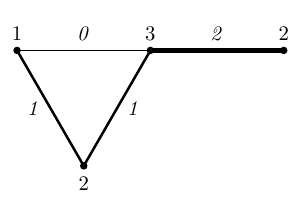}
        \caption{$G$ with $b$ specified on the vertices and a perfect $b$-matching specified by the edge labels}
    \end{subfigure}
    \hspace{0.05\textwidth}
    \begin{subfigure}{0.45\textwidth}
        \centering
        \includegraphics[width=0.8\textwidth]{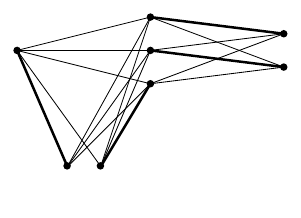}
        \caption{$G_b$ with a perfect matching $M$ corresponding to the perfect $b$-matching in $G$}
    \end{subfigure}
    \caption{A pseudo-polynomial reduction construction between matching and $b$-matching~\cite{schrijver2003combinatorial}.}
    \label{fig:G_b}
\end{figure}

By using this construction for an appropriate $\overline b$, we can use the alternating paths appearing in Kobayashi's~\cite{DBLP:conf/ipco/Kobayashi23} argument in $G_{\overline b}$ to provide a suitable $2$-step decomposition for points in the domain of $f_{c,M}$.

\begin{proposition}
    The function $f_{c,M}(z)=\min\bigl\{c^\top x\bigm\vert Mx=z,x\in\Z_{\ge0}^n\bigr\}$ is SBO jump M-convex.
    \label{prop:b-matching-is-sbo-jump-m-convex}
\end{proposition}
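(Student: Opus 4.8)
The plan is to lift the problem to a perfect matching problem via \cref{prop:gb} and then transport the alternating-path argument of Kobayashi to that setting. Fix two points $z^{(1)},z^{(2)}$ in the domain of $f_{c,M}$, with optimal perfect $b$-matchings $x^{(1)},x^{(2)}$ realizing $f_{c,M}(z^{(1)})$ and $f_{c,M}(z^{(2)})$ respectively (so $z^{(i)} = Mx^{(i)}$ and $c^\top x^{(i)} = f_{c,M}(z^{(i)})$). Choose $\overline b \in \Z_{\ge 0}^V$ componentwise large enough that $\overline b \ge z^{(1)}$, $\overline b \ge z^{(2)}$, and $\overline b_{v_1}+\overline b_{v_2}\ge \sum_{e \ni v_1} x^{(1)}_e + \sum_{e\ni v_2} x^{(2)}_e$ or similar — concretely, take $\overline b_v = \max(z^{(1)}_v, z^{(2)}_v) + (\text{enough slack to carry symmetric differences})$; a crude safe choice is $\overline b_v = z^{(1)}_v + z^{(2)}_v$. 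In $G_{\overline b}$, \cref{prop:gb} gives matchings $N^{(1)}, N^{(2)}$ with $|N^{(i)}| = $ (number of edge-copies), saturating exactly $z^{(i)}_v$ copies of each $v$, and carrying weight $c^\top x^{(i)}$ when each copy of edge $e$ inherits cost $c_e$.

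Next I would look at the symmetric difference $N^{(1)} \triangle N^{(2)}$ in $G_{\overline b}$: it decomposes into vertex-disjoint alternating paths and cycles. Order them $Q_1,\dots,Q_\ell$ (paths) together with the cycles. For a cycle, exchanging $N^{(1)}$-edges for $N^{(2)}$-edges along it changes no vertex-saturation but may change cost; since $x^{(1)}$ is optimal for its own degree sequence, any cost-decreasing cycle would contradict optimality — so each cycle is cost-non-decreasing when going $N^{(1)} \to N^{(2)}$, and I can first apply all cost-$\le 0$... actually the cleaner route, following Kobayashi, is: each alternating \emph{path} $Q_k$ has two endpoints in $G_{\overline b}$, lying over two (not necessarily distinct) vertices $u_k, w_k \in V$; exchanging along $Q_k$ changes the saturation at $u_k$ and $w_k$ by $\pm 1$ each, hence changes the induced degree sequence in $G$ by a vector $p^{(k)} \in \Z^m$ with $\|p^{(k)}\|_1 = 2$ (or $\|p^{(k)}\|_1 = 0$ if $u_k = w_k$ with cancelling signs — degenerate loop case, handled separately or absorbed). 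By construction $p^{(k)} \sqsubseteq z^{(2)} - z^{(1)}$ because the endpoint deficits/excesses in $N^{(1)} \triangle N^{(2)}$ all point "from $z^{(1)}$ towards $z^{(2)}$". Letting $g^{(k)}$ be the cost change of the exchange along $Q_k$ (with the cost of the cycles distributed into the $g^{(k)}$, or handled by observing $x^{(1)}$ optimality forces them nonnegative and can be applied first at no degree-sequence cost), one gets $\sum_k p^{(k)} = z^{(2)} - z^{(1)}$ and $\sum_k g^{(k)} = c^\top x^{(2)} - c^\top x^{(1)} = f_{c,M}(z^{(2)}) - f_{c,M}(z^{(1)})$, giving the first bullet of \cref{def:sbo-jump-m-convex}.

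For the second bullet — that for every $I \subseteq [\ell]$ one has $f_{c,M}(z^{(1)} + \sum_{k\in I} p^{(k)}) \le f_{c,M}(z^{(1)}) + \sum_{k\in I} g^{(k)}$ — I would use the fact that the paths $Q_k$ are \emph{vertex-disjoint} in $G_{\overline b}$. Hence for any subset $I$, exchanging $N^{(1)}$-edges for $N^{(2)}$-edges simultaneously along exactly the paths $\{Q_k : k \in I\}$ (and, if one distributed cycle costs into some $g^{(k)}$, the corresponding cycles too) is well-defined and yields a legal matching $N_I$ in $G_{\overline b}$; by \cref{prop:gb} its induced edge multiplicities form a perfect $(z^{(1)} + \sum_{k\in I} p^{(k)})$-matching of $G$, and its cost is exactly $c^\top x^{(1)} + \sum_{k \in I} g^{(k)}$ by disjointness (cost changes are additive over disjoint subgraphs). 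This exhibits a feasible $b$-matching achieving that value, so the minimum $f_{c,M}$ is no larger — which is precisely the inequality. The vertex-disjointness is what makes the $2$-step decomposition "strongly base orderable": the steps can be applied in any order, any subcollection at a time.

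The main obstacle I anticipate is the bookkeeping around \textbf{degenerate cases and the cycles}: (i) a path $Q_k$ whose two endpoints lie over the same vertex $v$ of $G$ contributes either $p^{(k)} = \pm 2 e_v$ (both endpoints excess or both deficit) — fine, $\|p^{(k)}\|_1 = 2$ — or cancels, in which case it contributes nothing to $z^{(2)}-z^{(1)}$ and should be grouped with the cycles; (ii) the alternating cycles genuinely change cost but not degree sequence, so they must be handled without violating the $2$-step structure — the standard fix is to invoke optimality of $x^{(1)}$ (or the valuated-exchange axiom for jump M-convex functions) to argue each such cycle has nonnegative cost change going $N^{(1)} \to N^{(2)}$, perform all of them first to get an intermediate optimal $b$-matching for $z^{(1)}$ of equal cost, and only then decompose the rest; (iii) ensuring $\overline b$ is simultaneously large enough to represent both matchings and their symmetric difference without artificially constraining anything — this just needs a generous explicit choice. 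Modulo these, the argument is a faithful transcription of Kobayashi's \cite{DBLP:conf/ipco/Kobayashi23} alternating-path exchange into $G_{\overline b}$; the full details I would place in \cref{sec:tall}.
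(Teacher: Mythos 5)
Your overall route is the same as the paper's: lift both optimal $b$-matchings to matchings in the expanded graph of \cref{prop:gb}, decompose their symmetric difference into alternating paths and cycles, let the path endpoints give the steps $p^{(k)}$ and the exchange costs give the $g^{(k)}$, and use vertex-disjointness of the paths to realize every subset $I$ simultaneously. However, there is a genuine gap in your choice of $\overline b$. With a ``generous'' choice such as $\overline b_v=z^{(1)}_v+z^{(2)}_v$, the two matchings in $G_{\overline b}$ may saturate \emph{different} sets of copies of a vertex $i$ even when $z^{(1)}_i=z^{(2)}_i$ (and in general their saturated copy-sets need not be nested). Then two distinct alternating paths can end at copies of $i$ with opposite signs, so a single path $Q_k$ may contribute $p^{(k)}_i=\pm1$ at a coordinate where $z^{(2)}_i-z^{(1)}_i=0$, violating the conformality requirement $p^{(k)}\sqsubseteq z^{(2)}-z^{(1)}$ of \cref{def:2-step-decomposition}. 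Your claim that the endpoint deficits ``point from $z^{(1)}$ towards $z^{(2)}$ by construction'' is false in this setting, and your proposed fix (grouping cancelling contributions with the cycles) does not apply: such a path still changes the degree sequence at its other endpoint, so it cannot be discarded. The paper's proof avoids this precisely by the \emph{tight} choice $\overline b=\max\{z^{(1)},z^{(2)}\}$ taken coordinate-wise: then at every vertex at least one of the two matchings saturates all copies, so all path endpoints over a vertex $i$ carry the same sign, namely the sign of $z^{(2)}_i-z^{(1)}_i$; consequently every step is nonzero, has $1$-norm $2$, and is conformal. (One could alternatively repair your version by forcing both lifted matchings to saturate nested families of copies, but some such device is needed---``large enough'' alone does not suffice.)

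A second, smaller issue concerns the cycles: for the first bullet of \cref{def:sbo-jump-m-convex} you need their exchange cost to be exactly zero, not merely nonnegative, otherwise the equality $f_{c,M}(z^{(2)})=f_{c,M}(z^{(1)})+\sum_k g^{(k)}$ can fail. This is immediate once you apply your optimality argument symmetrically: a cycle with negative cost would improve $x^{(1)}$, one with positive cost would improve $x^{(2)}$ (this is how the paper argues), after which the cycles can simply be dropped from the decomposition rather than ``applied first'' or folded into the $g^{(k)}$.
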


\begin{proof}
    Let $z^{(1)},z^{(2)}\in\Z^m$ be arbitrary integer points in the domain of $f_{c,M}$. For $k=1,2$, let $x^{(k)}$ be a perfect $z^{(k)}$-matching in $G:=G(M)$ with minimum cost $f_{c,M}(z^{(k)})$. Consider arbitrary associated matchings $M^{(1)}$ and $M^{(2)}$ in $G_{\overline z}$ where $\overline z=\max\{z^{(1)},z^{(2)}\}$ is the coordinate-wise maximum of $z^{(1)}$ and $z^{(2)}$. The spanning subgraph with the symmetric difference between $M^{(1)}$ and $M^{(2)}$ as edge set can be uniquely decomposed into disjoint paths and cycles of which the edges alternate between $M^{(1)}$ and $M^{(2)}$. We associate an alternating walk $W$ with a vector $d\in\Z^n$, defined over the edges $[n]$ in the original graph, so that it corresponds to the change in the $b$-matching $x$ when adding edges to and removing edges from the matching in $G_{\overline z}$ along this walk. For this purpose, we define $d$ by setting $d_i$ to be the number of edges that are copies of $i$ in $W\cap M^{(2)}$ minus the number of copies in $W\cap M^{(1)}$.
    
    In this way, the difference $x^{(2)}-x^{(1)}$ decomposes as $x^{(2)}-x^{(1)}=\sum_kd^{(k)}$, where $k$ ranges over all decomposed paths and cycles. Similarly, when defining $p^{(k)}=Md^{(k)}$ we find a decomposition $z^{(2)}-z^{(1)}=\sum_kMd^{(k)}$. As an alternating cycle induces a difference vector $d$ with $Md=0$, they may be left out of the decomposition of $z^{(2)}-z^{(1)}$. For this reason, assume that the indices $1,\dots,\ell$ index the difference vectors associated with the alternating paths and consider the decomposition $z^{(2)}-z^{(1)}=\sum_{k\in[\ell]}p^{(k)}$. From the fact that a vector $d^{(k)}$ is induced by an alternating path, it follows that $p^{(k)}$ has a $1$-norm of zero or two. We may observe that $d^{(k)}$ cannot be the zero vector as this would imply that $z_i^{(1)}<\overline z_i$ and $z_i^{(2)}<\overline z_i$ for the vertex $i$ in $G$ that corresponds to the beginning and end vertices of the path. This is impossible due to the tight choice of $\overline z$. An analogous argument shows that each step $d^{(k)}$ is sign-compatible with $z^{(2)}-z^{(1)}$, which allows us to conclude that $p^{(1)},\dots,p^{(\ell)}$ is indeed a two-step decomposition of $z^{(2)}-z^{(1)}$.

    This also suggests to define $g^{(k)}$ as $c^\top d^{(k)}$. Note that $c^\top d$ must be $0$ for a difference vector $d$ induced by an alternating cycle, because if it were not, one of $M^{(1)}$ or $M^{(2)}$ is not of minimum cost. This shows that $f_{c,M}(z^{(2)})-f_{c,M}(z^{(1)})=\sum_{k\in[\ell]}g^{(k)}$. Note that, for any $I\subseteq[\ell]$, the vector $x^{(1)}+\sum_{k\in I}d^{(k)}$ is a proper perfect $(z^{(1)}+\sum_{k\in I}p^{(k)})$-matching with cost $f_{c,M}(z^{(1)})+\sum_{k\in I}g^{(k)}$, showing that $f_{c,M}(z^{(1)}+\sum_{k\in I}p^{(k)})$ is bounded from above as required.
\end{proof}

We now proceed to show the lattice-convexity of SBO jump M-convex functions. The proof is split into two steps. \cref{lemma:sbo-jump-m-convex-closing} shows how two points in a convex combination can be pairwise modified without increasing the overall sum of function values. This operation is then exhaustively employed in the proof of the final lattice-convexity result.

\begin{lemma}
    Let $f$ be an SBO jump M-convex function. Let $z^{(1)},z^{(2)}\equiv r\pmod2$ and let $i^*\in[m]$ be so that $z_{i^*}^{(1)}-z_{i^*}^{(2)}\ge2$. Then, there exist $z^{(1)\prime}$ and $z^{(2)\prime}$ such that
    \begin{itemize}
        \item $z^{(1)\prime},z^{(2)\prime}\equiv r\pmod2$,
        \item $z^{(1)\prime}+z^{(2)\prime}=z^{(1)}+z^{(2)}$,
        \item $z_{i^*}^{(1)\prime}=z_{i^*}^{(1)}-2,z_{i^*}^{(2)\prime}=z_{i^*}^{(2)}+2$,
        \item $z_i^{(1)}=z_i^{(2)}\implies z_i^{(1)}=z_i^{(1)\prime}=z_i^{(2)\prime}=z_i^{(2)}$ for all $i\in[m]$,
        \item $f(z^{(1)\prime})+f(z^{(2)\prime})\le f(z^{(1)})+f(z^{(2)})$.
    \end{itemize}
    \label{lemma:sbo-jump-m-convex-closing}
\end{lemma}

\begin{proof}
    W.l.o.g., consider the case where $z^{(1)}\ge z^{(2)}$. The general case is equivalent up to changing signs. Consider a $2$-step decomposition $p^{(1)},\dots,p^{(\ell)}$ of $z^{(2)}-z^{(1)}$ and $g\in\R^\ell$ from \cref{def:sbo-jump-m-convex}. We create a multigraph representation $G$ of this decomposition with vertex set $[m]$. For every step $p^{(k)}$ that has a $-1$ component at coordinates $i_1$ and $i_2$, add an edge that connects $i_1$ and $i_2$. For a step $p^{(k)}$ for which $p_i^{(k)}=-2$, add a self-loop on $i$. Since each vertex in $G$ has an even degree and $i^*$ is not an isolated vertex by assumption, we can pick an arbitrary cycle that contains $i^*$. Let $I\subseteq[\ell]$ be the indices of the steps corresponding to the edges of the cycle, which defines an even step from $z^{(1)}$ to $z^{(2)}$. We define $z^{(1)\prime}=z^{(1)}+\sum_{k\in I}p^{(k)}$ and $z^{(2)\prime}=z^{(2)}-\sum_{k\in I}p^{(k)}$. Note that the step $\sum_{k\in I}p^{(k)}$ is in $\{-2,0\}^m$ because the degree of each vertex in the cycle is $0$ or $2$. In particular, the degree of $i^*$ is exactly $2$. We conclude that the first, second and third properties are satisfied. The satisfaction of the fourth property follows from that $p_i^{(k)}=0$ when $z_i^{(1)}=z_i^{(2)}$. Finally, observe that $f(z^{(1)\prime})+f(z^{(2)\prime})\le f(z^{(1)})+\sum_{k\in I}g^{(k)}+f(z^{(1)})+\sum_{k\in[\ell]\setminus I}g^{(k)}=f(z^{(1)})+f(z^{(2)})$, showing that the fifth requirement is satisfied.
\end{proof}

We can now exhaustively employ \cref{lemma:sbo-jump-m-convex-closing} pairwise on vectors of a given convex combination to derive the lattice-convexity of $f$. \cref{lemma:convexity-of-sbo-jump-m-convex-functions} generalizes \cref{lemma:convexity-of-b-matching}.

\begin{lemma}
    Let $f$ be an SBO jump M-convex function. Let $z^{(1)},z^{(2)},\dots,z^{(\ell)}\equiv r\pmod2$ be given and $\lambda^{(1)},\lambda^{(2)},\dots,\lambda^{(\ell)}\ge0$ be real convex multipliers, i.e., $\sum_{k\in[\ell]}\lambda^{(k)}=1$, such that $z:=\sum_{k\in[\ell]}\lambda^{(k)}z^{(k)}\equiv r\pmod2$. Then $f(z)\le\sum_{k\in[\ell]}\lambda^{(k)}f(z^{(k)})$.
    \label{lemma:convexity-of-sbo-jump-m-convex-functions}
\end{lemma}

\begin{proof}
    Since we may assume that $\lambda$ is rational, we may additionally assume that $\lambda_k=1/\ell$ after duplicating vectors of the convex combination a suitable number of times.

    For every $i^*\in[m]$ we exhaustively employ the following procedure: if there exist $z^{(k_1)},z^{(k_2)}$ with $z_{i^*}-z_{i^*}^{(k_1)}\le-2$ and $z_{i^*}-z_{i^*}^{(k_2)}\ge2$, apply \cref{lemma:sbo-jump-m-convex-closing} to replace $z^{(k_1)}$ and $z^{(k_2)}$ with the resulting $z^{(k_1)\prime}$ and $z^{(k_2)\prime}$ from this lemma. This preserves the total sum $\sum_{k\in[\ell]}z^{(k)}=\ell z$ and guarantees that $\sum_{k\in[\ell]}f(z^{(k)})$ does not increase. Observe that if such $z^{(k_1)}$ and $z^{(k_2)}$ do not exist, w.l.o.g. all $z_{i^*}^{(k)}\ge z_{i^*}$. As $z_{i^*}$ is the average of $z_{i^*}^{(k)}$, it must hold that $z_{i^*}=z_{i^*}^{(k)}$. Since the application of \cref{lemma:sbo-jump-m-convex-closing} does not modify any $z_i^{(k)}$ at $i$ for which $z_i^{(k)}=z_i$ for all $k\in[\ell]$, exhaustive application will eventually result in $z^{(k)\prime}=z$.
    
    It is left to show that \cref{lemma:sbo-jump-m-convex-closing} can be exhaustively applied. To see this, we employ $\sum_{k\in[\ell]}|z_{i^*}-z_{i^*}^{(k)}|$ as progress measure. With respect to the progress made after applying \cref{lemma:sbo-jump-m-convex-closing}, it holds that
    \begin{align*}
        &\sum_{k\in[\ell]}\bigl|z_{i^*}-z_{i^*}^{(k)\prime}\bigr|-\sum_{k\in[\ell]}\bigl|z_{i^*}-z_{i^*}^{(k)}\bigr|\\
        &=
        \bigl|z_{i^*}-z_{i^*}^{(k_1)\prime}\bigr|+\bigl|z_{i^*}-z_{i^*}^{(k_2)\prime}\bigr|-\bigl|z_{i^*}-z_{i^*}^{(k_1)}\bigr|+\bigl|z_{i^*}-z_{i^*}^{(k_2)}\bigr|\\
        &<0,
    \end{align*}
    which follows from \cref{lemma:sbo-jump-m-convex-closing}  that guarantees that $z_{i^*}^{(k_1)}$ and $z_{i^*}^{(k_2)}$ both move strictly closer to $z_{i^*}$. After applying \cref{lemma:sbo-jump-m-convex-closing} exhaustively for all $i^*\in[m]$, we end up with a situation where $z^{(k)\prime}=z$ for all $k$ and that $\ell f(z)=\sum_{k\in[\ell]}f(z^{(k)\prime})\le\sum_{k\in[\ell]}f(z^{(k)})$.
\end{proof}

We note that the lattice-convexity of the domain $\{z\in\Z^m\colon f_{c,M}(z)<\infty\}$ on $2\Z^m+r$ immediately follows from a generalization of Tutte's characterization of graphs with perfect matchings, see Corollary 31.1a in~\cite{schrijver2003combinatorial}. Alternatively, it is a special case of Theorem 1.2 in~\cite{DBLP:journals/jgt/AnsteeN99}.

Having established the convexity of $f_{c,M}$ on the scaled and shifted lattice $2\Z^m+r$, we find that, for a fixed $r$, the set of $z$-s with corresponding perfect $z$-matchings and its optimal objective can be modeled as the points in a convex set in $1+m$ dimensions. 

We define a polyhedron in \cref{def:pr}, which models this set within some prescribed bound $U$ and can be interpreted as the epigraph of a convex extension of $f_{c,M}$.

\begin{definition}
    Let $r\in\{0,1\}^m$ be a remainder vector modulo $2$ and $U\in\Z$ an upper bound with polynomial encoding length. We define the polyhedron $P_{r,U}$ to be the convex hull of the points $S_{r,U}$ defined by
    \[
        S_{r,U}:=\bigl\{(\omega,z)\in\R\times\Z^m\colon z\equiv r\pmod2,\left\|z\right\|_\infty\le U,f_{c,M}(z)\le\omega\bigr\}.
    \]
    That is, $P_{r,U}$ is the Minkowski sum of the polytope which is the convex hull of the points
    \[
        \tilde S_{r,U}=\bigl\{(f_{r,U}(z),z)\bigm\vert z\equiv r\pmod2,\left\|z\right\|_\infty\le U,f_{c,M}(z)<\infty,z\in\Z^m\bigr\}
    \]
    and the recession cone $\{(\lambda,\veczero)\ \vert\ \lambda\ge0\}$.
    \label{def:pr}
\end{definition}

That is, $S_{r,U}$ contains points $(\omega,z)$ for which there is a perfect $z$-matching with cost at most $\omega$, restricted to remainder vector $r$ modulo $2$ and a bounding box. Note that we do not have an explicit compact outer description of $P_{r,U}$ and are not aware of whether a polyhedron that models $f_{c,M}$ on $2\Z^m+r$ with polynomially many facets that are efficiently computable exists. However, such a description is not needed, as known integer programming algorithms, e.g.~\cite{DBLP:conf/focs/ReisR23,DBLP:journals/mor/Lenstra83}, can solve integer programs when the feasible region is a bounded polyhedron equipped with a strong separation oracle. Similar as to what was observed in~\cite{DBLP:journals/mor/Zhang03} for a slightly different polytope, the known methods for optimizing generalized matchings, imply that such a strong separation oracle running in polynomial time exists as a consequence of the ellipsoid method.

\begin{lemma}
    The strong separation problem for $P_{r,U}$ can be solved in polynomial time. That is, given a fractional $\tilde q\in\Q^{1+m}$ one can efficiently decide whether $\tilde q\in P_{r,U}$ and if not, yield a separating hyperplane $\alpha\in\Q^{1+m}$ such that $\alpha^\top\tilde q>\alpha^\top q$ for all $q\in P_{r,U}$.
    \label{lemma:separation-oracle}
\end{lemma}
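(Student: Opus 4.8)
The plan is to invoke the equivalence between separation and optimization via the ellipsoid method (Grötschel–Lovász–Schrijver~\cite{DBLP:books/sp/GLS1988}), so it suffices to exhibit a polynomial-time algorithm for the linear optimization problem over $P_{r,U}$, namely, given $\alpha=(\alpha_0,\beta)\in\Q^{1+m}$, to compute $\max\{\alpha_0\omega+\beta^\top z\mid(\omega,z)\in P_{r,U}\}$ (together with a maximizer or a certificate of unboundedness). First I would dispose of the recession cone $\{(\lambda,\veczero):\lambda\ge0\}$: if $\alpha_0>0$ the objective is unbounded above on $P_{r,U}$, and if $\alpha_0<0$ then no maximizer puts slack in the $\omega$-coordinate, so at optimality $\omega=f_{c,M}(z)$ and the problem becomes $\max_z\{\alpha_0 f_{c,M}(z)+\beta^\top z\mid z\equiv r\pmod 2,\ \|z\|_\infty\le U\}$; since $\alpha_0<0$ this is $-|\alpha_0|$ times a \emph{minimization} of $f_{c,M}(z)-(\beta/|\alpha_0|)^\top z$. (The boundary case $\alpha_0=0$ is the same minimization with the $f_{c,M}$ term absent, and is subsumed.)

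**Next I would unfold the definition of $f_{c,M}$** to see that the whole optimization is a single combinatorial problem over the constraint graph $G(M)$. Recall $f_{c,M}(z)=\min\{c^\top x\mid Mx=z,\ x\in\Z_{\ge0}^n\}$, so minimizing $f_{c,M}(z)+\gamma^\top z$ over admissible $z$ (where $\gamma=-\beta/|\alpha_0|$) is the same as minimizing $c^\top x+\gamma^\top(Mx)$ over $x\in\Z_{\ge0}^n$ subject to the degree parity $(Mx)_i\equiv r_i\pmod2$ and the degree bound $|(Mx)_i|\le U$ for every vertex $i$. Because $M$ is the incidence matrix of the simple graph $G(M)$, the linear functional $\gamma^\top M x=\sum_i\gamma_i\sum_{j\ni i}x_j$ redistributes a vertex weight $\gamma_i$ onto each incident edge; i.e. $c^\top x+\gamma^\top Mx=\tilde c^\top x$ where $\tilde c_e=c_e+\gamma_{i}+\gamma_{i'}$ for the edge $e=\{i,i'\}$. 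So the problem reduces to: find a nonnegative integral $b$-matching-type vector $x$ minimizing $\tilde c^\top x$ subject to, for each vertex $i$, $0\le\sum_{e\ni i}x_e\le U$ and $\sum_{e\ni i}x_e\equiv r_i\pmod2$. This is a \emph{minimum-cost parity-constrained (capacitated) $b$-matching} problem.

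**The core step is then to argue that this parity-constrained matching problem lies in the polynomially solvable generalized matching class** of \cref{thm:generalized-matching-in-p}. The parity constraint $\sum_{e\ni i}x_e\equiv r_i\pmod2$ together with the capacity $\le U$ can be encoded by introducing for each vertex $i$ a nonnegative slack variable $s_i$ and a single new constraint $2s_i+\sum_{e\ni i}x_e = d_i$ over the allowed right-hand sides $d_i\in\{r_i, r_i+2, \dots\}$; equivalently, following the standard bidirected-graph gadget constructions in~\cite{schrijver2003combinatorial}, one attaches at vertex $i$ a half-edge or a loop with coefficient $1$ (a column of $\ell_1$-norm $1$) and fixes the right-hand side appropriately, which keeps $\|A\|_1\le 2$. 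Likewise the lower degree requirement (there is none beyond parity here) and the upper bound $U$ are absorbed into the right-hand side of the augmented incidence matrix. The upshot is a constraint matrix with all columns of $\ell_1$-norm at most $2$ and polynomially bounded right-hand side (since $U$ has polynomial encoding length and $G(M)$ has $m=\O(n)$ vertices after preprocessing), so \cref{thm:generalized-matching-in-p} solves it in strongly polynomial time, producing either an optimal $x$ (hence an optimal $(\omega,z)=(\tilde c^\top x - \gamma^\top(Mx),Mx)$) or a certificate of unboundedness. Feeding this optimization oracle into the ellipsoid-based equivalence of~\cite{DBLP:books/sp/GLS1988} yields the claimed polynomial-time strong separation oracle, where a violated inequality is read off from the hyperplane separating $\tilde q$ from the optimal face.

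**The main obstacle** I anticipate is the careful bookkeeping in the second and third steps: verifying that the parity constraints can indeed be realized by gadgets that preserve $\|A\|_1\le 2$ \emph{and} keep the instance of polynomial size (in particular that one may cap each degree at a polynomially bounded $U$ without changing the optimum over $P_{r,U}$), and correctly handling the sign of $\alpha_0$ and the recession-cone direction so that unboundedness of the separation LP is detected rather than silently mishandled. The reduction of vertex weights to edge weights is routine, and the appeal to GLS for separation-from-optimization over an explicitly bounded polytope is standard; the value added is recognizing that the resulting optimization is exactly a minimum-cost parity-constrained $b$-matching, which is known to be polynomially solvable~\cite{schrijver2003combinatorial,DBLP:journals/mp/EdmondsJ73}.
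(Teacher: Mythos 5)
Your proposal is correct and follows essentially the same route as the paper: invoke the GLS separation--optimization equivalence, reduce linear optimization over $P_{r,U}$ (after handling the recession direction via the sign of the $\omega$-coefficient) to a minimum-cost parity-constrained capacitated $b$-matching by shifting vertex weights onto incident edges, and solve that via the generalized matching framework by adding per-vertex slack columns $2s_i$ with the right-hand side fixed to the largest value of the correct parity below $U$. The only points to tighten are the ones the paper also handles explicitly: when the $\omega$-coefficient makes the problem unbounded you must still certify feasibility (reduce to the zero-objective case), and the equality right-hand side $d_i$ must be pinned to a single value (namely $U$ adjusted by parity) rather than ranging over a set.
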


\begin{proof}
    By construction, $P_{r,U}$ is a polyhedron with its vertices having $z$ bounded by $U$ and consequently additionally having suitably bounded $\omega$. As its recession cone is similarly well-behaved, we can apply a deep result following from the ellipsoid method, Theorem 6.4.9 in~\cite{DBLP:books/sp/GLS1988}, which shows that the strong separation problem can efficiently be solved if one can optimize an arbitrary objective $\tilde c$ over $P_{r,U}$ in polynomial time.

    For completeness, we explicitly show how one can optimize over $P_{r,U}$. It suffices to solve the following problem
    \[
        \min\left\{\tilde c^\top(\omega,z)\bigm\vert(\omega,z)\in P_{r,U}\right\}.
    \]
    If the first component of $\tilde c$ is negative, the problem is unbounded in the $(1,\veczero)$ direction if and only if it is feasible. Therefore, this case can be reduced to the same problem with $\tilde c=\veczero$ and we may further assume that the first component of $\tilde c$ is nonnegative. In this case, it suffices to optimize $\tilde c$ over the set $\tilde S_{r,U}$, which contains all vertices of $P_{r,U}$. Expanding the definitions of $\tilde S_{r,U}$ and $f(z)$ yields
    \[
        \min\bigl\{\tilde c^\top(c^\top x,z)\bigm\vert z\equiv r\pmod2,\|z\|_\infty\le U,Mx=z,x\in\Z_{\ge0}^n,z\in\Z^m\bigr\}.
    \]
    Substituting $z=Mx$ yields
    \[
        \min\left\{\tilde c^\top(c^\top x,Mx)\bigm\vert Mx\equiv r\pmod2,Mx\le U\vecone,x\in\Z_{\ge0}^n\right\}.
    \]
    This is an instance of a polynomial time solvable matching generalization (\cite{DBLP:journals/mp/EdmondsJ73}, Theorem 36.5 in~\cite{schrijver2003combinatorial}). For completeness, we further reduce it with an additional step. Observe that after defining $\tilde r:=(U\vecone+r)\bmod2$, we can reformulate the constraints on $Mx$ by equating $Mx$ to $U\vecone-(\tilde r+2Is)$ for some auxiliary integer variables $s\in\Z_{\ge0}^m$. This yields the generalized matching problem
    \[
        \min\left\{\bigl((\tilde c^\top(c^\top,M))^\top\bigr)^\top x\bigm\vert Mx+2Is=U\vecone-\tilde r,(x,s)\in\Z_{\ge0}^{n+m}\right\}.
    \]
    By \cref{thm:generalized-matching-in-p}, this problem can be solved in polynomial time.
\end{proof}

We have now gathered the necessary ingredients to prove \cref{thm:tall-fpt}.

\begin{proof}[Proof of \cref{thm:tall-fpt}]
    Consider an instance of \crefilp{ilp:tall}. We first perform some preprocessing steps. Using the proximity result from Cook et al.~\cite{DBLP:journals/mp/CookGST86}, we may assume that the variable bounds $e,l,g,u$ are finite and have polynomial encoding length. Now apply \cref{lemma:master-reduction} to transform the ILP. The resulting problem is of the form
    \[
        \min\Bigl\{a^\top y+c^\top x\bigm\vert Ty+Mx=b,\veczero\le y\le g,y\in\Z^h,x\in\mathbb Z_{\ge0}^n\Bigr\},
    \]
    where $G(M)$ is simple and the bounds $g$ are finite with polynomial encoding length. Since this is a polynomial transformation of the bounded problem, which has an optimal objective with polynomial encoding length, an optimal solution to this reduced ILP can be found in a polynomial number of binary search steps solving a feasibility problem of the form
    \begin{equation}
        \Bigl\{a^\top y+c^\top x\le\omega^*,Ty+Mx=b,\veczero\le y\le g,y\in\Z^p,x\in\mathbb Z_{\ge0}^n\Bigr\}.
        \label{ilp:tall-clean}
    \end{equation}

    It now suffices to show that the feasibility of \crefilp{ilp:tall-clean} can be determined in FPT time. To do so, we guess the value of $y\bmod2$ by substituting $y=2v+t$ for a set of variables $v\in\Z^n$ and guessing all $2^p$ remainder vectors $t\in\{0,1\}^p$ as done in~\cite{DBLP:conf/soda/CslovjecsekKLPP24}. The \crefilp{ilp:tall-clean} is feasible if and only if at least one of these subproblems is feasible. After fixing $t$, our subproblem is equivalent to
    \begin{equation}
        \Bigl\{a^\top(2v+t)+c^\top x\le\omega^*,T(2v+t)+Mx=b,\veczero\le2v+t\le g,v\in\Z^p,x\in\mathbb Z_{\ge0}^n\Bigr\}
        \label{ilp:tall-d2}
    \end{equation}
    We show that this may equivalently be posed as finding an integer vector $v$ in a particular convex body. For this purpose, define the affine function $F$ by
    \[
        F(v):=\begin{pmatrix}
            \hat\omega(v)\\
            \hat z(v)
        \end{pmatrix}:=\begin{pmatrix}
            -2a^\top v-a^\top t+\omega^*\\
            -2Tv-Tt+b
        \end{pmatrix},
    \]
    define the requirement $R$ that there exists a suitably low costed perfect $\hat z(v)$-matching by
    \[
        R:=\left\{v\in\Z^p\colon\exists x\in\Z_{\ge0}^n:c^\top x\le\hat\omega(v),Mx=\hat z(v)\right\},
    \]
    and define the box $B$ expressing the variable bounds on $v$ by
    \[
        B:=\bigl\{v\in\R^p\colon\veczero\le 2v+t\le g\bigr\}.
    \]
    In this way, we may look for $v\in R\cap B$ to determine the feasibility of \crefilp{ilp:tall-d2}. In fact, if such $v$ exists, it can straightforwardly be completed to a solution to \crefilp{ilp:tall-clean} by setting $y=2v+t$ and computing a minimum cost perfect $\hat z(v)$-matching $x$.
    
    Since we fixed our remainder modulo $2$, the function $F$ satisfies that $\hat z(v)\equiv r\pmod2$ for all $v\in\Z^p$ where $r:=(-Tt+b)\mod2$, which will enable us to find such $v$ if it exists. Set $U$ to a value with polynomial encoding length that is sufficiently large so that $\|\hat z(v)\|_\infty\le U$ for all $v\in B$.

    \begin{claim*}
        $R\cap B=F^{-1}(P_{r,U})\cap B\cap\Z^p$, where $F^{-1}$ is the pre-image $\{v\in\R^p\colon F(v)\in P_{r,U}\}$.
    \end{claim*}
    \begin{claimproof}
        An integral $v\in F^{-1}(P_{r,U})\cap B$ is within the bounds $B$ and satisfies that $(\hat\omega(v),\hat z(v))$ is a convex combination $\sum_{k\in[\ell]}\lambda^{(\ell)}(\omega^{(k)},z^{(k)})$ of points in $S_{r,U}$. \cref{lemma:convexity-of-sbo-jump-m-convex-functions} shows that there exists a perfect $\hat z(v)$-matching $x$ with $c^\top x=f_{c,M}(\hat z(v))\le\sum_{k\in[\ell]}\lambda^{(k)}f_{c,M}(z^{(k)})\le\sum_{k\in[\ell]}\lambda^{(k)}\omega^{(k)}=\hat\omega(v)$. This witnesses that $v\in R$. The inclusion in the other direction makes use of the fact that $v\in R\cap B$ satisfies that $\|\hat z(v)\|_\infty\le U$. The perfect $\hat z(v)$-matching $x$ witnessing $v\in R$ shows that $f_{c,M}(\hat z(v))\le c^\top x\le\hat\omega(v)$ and thus that $F(v)=(\hat\omega(v),\hat z(v))\in S_{r,U}\subseteq P_{r,U}$.
    \end{claimproof}

    An integral point in the polytope $F^{-1}(P_{r,U})\cap B\cap\Z^p$ can be found in FPT time by an integer programming algorithm such as the algorithm by Reis and Rothvoss~\cite{DBLP:conf/focs/ReisR23}\footnote{This state-of-the-art IP algorithm based on Dadush work~\cite{dadush2012integer} is randomized. However, the algorithm by Lenstra~\cite{DBLP:journals/mor/Lenstra83} shows that the integer programming problem can be solved deterministically, yielding a slower, but deterministic FPT algorithm.}. For this, it is essential that one can implement a strong separation oracle over this polytope that runs in polynomial time, which follows from \cref{lemma:separation-oracle}. To see this, first observe that testing whether a given fractional $\tilde v$ is in $F^{-1}(P_{r,U})\cap B$ is equivalent to testing whether $F(\tilde v)\in P_{r,U}$ and $\tilde v\in B$. Hyperplanes separating $\tilde v$ from $B$ are trivial to construct. Finally, a hyperplane $\alpha\in\Q^{1+m}$ separating $F(\tilde v)$ from $P_{r,U}$ yields a hyperplane separating $\tilde v$ from $F^{-1}(P_{r,U})$ as $(\alpha^\top L)\tilde v>(\alpha^\top L)v$ for $v\in F^{-1}(P_{r,U})$ where $L\in\Z^{(1+m)\times p}$ is the linear transformation that comprises the affine $F$.

    If we find a feasible ILP, the integer programming algorithm can yield us a corresponding satisfying $v$.
\end{proof}

We note that the additional multiplicative $2^p$ contribution to the running time of the algorithm as a result of guessing the parity of $y$ is asymptotically small compared to the running time of the current state-of-the-art IP algorithm by Reis and Rothvoss~\cite{DBLP:conf/focs/ReisR23}, which for our application may have a running time of $(\log p)^{\O(p)}$ times a polynomial of the input encoding length.

\section{Few arbitrary variables and constraints}
\label{sec:mixed}

We now turn our focus on the more general \crefilp{ilp:mixed}, which may have additional complicating constraints as well as complicating variables. As mentioned, if the coefficients of the constraint matrix are encoded in binary, \crefilp{ilp:wide} can encode the NP-hard subset sum problem. Therefore, we develop an algorithm scaling polynomially in the unary encoding length of the maximum absolute value of the entries of the constraint matrix $\Delta$ and of the objective vector $\|c\|_\infty$ for a fixed $p$ and $h$.

\thmmixedxp*

Note that an instance of \crefilp{ilp:mixed} may be preprocessed so that $m+h\le n+p$ through Gaussian elimination within the running time of \cref{thm:mixed-xp}.

\subsection{Bounding the proximity of matching-like ILPs} \label{sec:proximity-bound}

Recall that the most important ingredient in deriving \cref{thm:mixed-xp} is a bound on the $\infty$-norm of the circuits of the constraint matrix. We prove \cref{lemma:circuit-ub-mixed} in this section.

\lemmacircuitubmixed*

We modify a well-known Cramer's rule based argument~\cite{onn2010nonlinear} to provide the bound of \cref{lemma:circuit-ub-mixed}. The proof of \cref{lemma:circuit-ub-mixed} starts by studying a circuit by a vector $\tilde c\in\Z^s$ obtained through Cramer's rule, of which each entry is a determinant of a submatrix of $A$. Here $s\le r$. Since subdeterminants of the incidence matrix of a nonbipartite graph can be exponential in $n$, we must estimate the greatest common divisor of the entries of $\tilde c$ in order to show that the corresponding normalized circuit is sufficiently small. The proof is set up as follows:
\begin{itemize}
    \item We can compute an entry of $\tilde c$ by Laplace expanding along the complicating columns and rows and obtain that $\tilde c$ is the signed sum of at most $\O(\Delta s)^{p+h}$ determinants of incidence submatrices $\tilde M$.
    \item All of these submatrices are similar in that they arise from a common submatrix $\hat M$ of $M$ after deleting at most $\O(p+h)$ columns and rows.
    \item The common matrix $\hat M$ can be rearranged to a block-diagonal form, which corresponds to decomposing the corresponding graph into connected components. Therefore, the determinant of $\hat M$ factorizes as the product of the determinants of the incidence matrices of the connected components.
    \item Every $\tilde M$ factorizes analogously. These factorizations contain all but a bounded number of components that appear in $\hat M$.
    \item Generalizing the rank-based argument used in \cite{DBLP:conf/aaai/BrandKO21}, we similarly observe that almost all blocks in the block diagonal arrangement of $\hat M$ have square dimensions.
    \item From the work of Grossman et al.~\cite{GROSSMAN1995213} it is known that the absolute value of the determinant of a nonsingular incidence matrix is a power of $2$.
    \item Therefore, if $\det\tilde M$ is large, it must be divisible by $2^z$ for some large exponent $z$, which reveals the presence of many square connected components with determinant $\pm2$ in $\hat M$.
    \item We can pick $z$ so that at least $z$ such components appear in all the submatrices $\tilde M$ obtained from the Laplace expansions, showing that the corresponding determinants are divisible by $2^z$.
    \item Using that also all nonsquare components are almost-square and the well known bound on determinants of incidence matrices in terms of the odd cycle packing number~\cite{GROSSMAN1995213} then shows that the subdeterminant of the remaining components in the factorizations of $\tilde M$, that are not one of the $z$ ``common'' components, is bounded.
    \item Therefore, by dividing all entries of $\tilde c$ by $2^z$, the integral $\tilde c\in\Z^s$ can be normalized to an integer vector bounded by $\O(\Delta s)^{p+h}$.
\end{itemize}

As suggested, the proof of \cref{lemma:circuit-ub-mixed} uses the insights from the work of Grossman et al.~\cite{GROSSMAN1995213}, which studies the subdeterminants of incidence matrices of simple graphs. Their results translate straightforwardly to bidirected graphs. To make this explicit, we first introduce the required notions.

For our purpose, we say that a bidirected graph is a cycle if it becomes a cycle after resetting all signs of all edge endpoints to $+1$. Apart from the trivial cycles, formed by the $1\times1$ incidence matrices $\pm2$ that correspond to self-loops, the incidence matrix of a cycle is a so-called \emph{hole matrix}~\cite{DBLP:journals/dm/ConfortiCV06}. Such matrix is of the form
\[
    \begin{pmatrix}
        \sigma_1&&&&&\tau_n\\
        \tau_1&\sigma_2&&&&\\
        &\tau_2&\sigma_3&&&\\
        &&&\ddots&&\\
        &&&&\sigma_{n-1}&\\
        &&&&\tau_{n-1}&\sigma_n
    \end{pmatrix}
\]
for $\sigma_1,\dots,\sigma_n,\tau_1,\dots,\tau_n\in\{-1,1\}$ after rearranging columns and rows. Laplace expansion along the first row shows that the determinant of the incidence matrix of a cycle is $\sigma_1\cdots\sigma_n+(-1)^{n+1}\tau_1\cdots\tau_n\in\{-2,0,2\}$. Since this determinant is nonzero if and only if $n$ is odd in simple graphs, we call a bidirected cycle \emph{odd} if the determinant of its incidence matrix is nonzero, i.e.\ the incidence matrix is an odd hole matrix.

Grossman et al.~\cite{GROSSMAN1995213} establish a well-known relation between the odd cycle packing number $\ocp G$ of a graph $G$ and the subdeterminants of its incidence matrix. Their proof, which shows that the absolute value of a subdeterminant is bounded by $2^{\ocp G}$, also applies to bidirected graphs when the definition odd cycle packing number $\ocp G$ of $G$ is straightforwardly extended using the earlier notion of odd cycles in bidirected graphs. That is, $\ocp G$ is the maximum cardinality of a collection of vertex disjoint odd cycles in $G$.

Another key insight from Grossman et al.~\cite{GROSSMAN1995213} is that a connected simple graph $G$ for which $|E|=|V|$ contains exactly a single cycle and that the absolute value of the determinant of the incidence matrix of $G$ is the absolute value of the determinant of the incidence matrix of this cycle. This can be obtained by successively Laplace expanding along rows with only one coefficient with an absolute value of $1$, which corresponds to a vertex of degree $1$. Note that after disregarding the sign of edges and disregarding the presence of half-edges, connectivity in bidirected graphs can be interpreted in terms of connectivity in the corresponding simple graphs, which is the viewpoint we will employ in the proof of \cref{lemma:circuit-ub-mixed}. In this context, the observation from~\cite{GROSSMAN1995213} straightforwardly generalizes to bidirected graphs as long as no half-edges are present in $G$. If a half-edge is present, the bidirected graph must consist of a tree with one additional half-edge. Following the same strategy of expanding along vertices incident to exactly one edge and not incident to the half-edge, such graph has an incidence matrix with determinant $\pm1$. In general, we may conclude that the determinant of a square incidence matrix of a connected bidirected graph $G$ is one of $\{0,\pm2^{\ocp G}$\}.

We are now ready to prove \cref{lemma:circuit-ub-mixed}.

\begin{proof}[Proof of \cref{lemma:circuit-ub-mixed}]
    Let $c\in\Z^{p+n}$ be a circuit of $A$ and let $B\in\Z^{(s-1)\times s}$ be a full row rank submatrix of $A$ with columns corresponding to the support of $c$ such that the projection $c'$ of $c$ to its support satisfies $Bc'=0$. By Cramer's rule, we have that $c'$ is a multiple of $\tilde c$ defined by
    \[
        \tilde c_j=(-1)^j\det\begin{bmatrix}B_{\cdot,1}&\cdots&B_{\cdot,j-1}&B_{\cdot,j+1}&\cdots&B_{\cdot,s}\end{bmatrix}\in\Z.
    \]
    Therefore, it suffices to show that the greatest common divisor of these determinants contains a sufficiently large factor $C$ so that $|\tilde c_j/C|$ is bounded by $\O(\Delta r)^{p+h}$.

    \proofsubparagraph*{Decomposing $B$ into many square connected components.} We have that $B$ is of the form
    \[
        B=\begin{pmatrix}
            \hat C&\hat W\\
            \hat T&\hat M
        \end{pmatrix},
    \]
    where $\hat C,\hat W,\hat T,\hat M$ are submatrices of $C,W,T$ and $M$ respectively. Let $\hat p$ be the width of $\hat T$ and $\hat h$ be the height of $\hat W$. The submatrix $\hat M$ retains an interpretation in terms of a (bidirected) subgraph $G(\hat M)$ of the graph $G(M)$. Observe that a connected component in $G(\hat M)$ corresponds to a set of vertices, rows of $\hat M$, and set of edges, columns of $\hat M$. In this way, we may permute the columns and rows of $\hat M$ so that it has a block-diagonal structure with blocks $D_k\in\Z^{m_k\times n_k}$ corresponding to the connected components of $G(\hat M)$ and a potential $m_0\times n_0$ block $D_0$ of zeroes, which may be interpreted as a degenerate component. Therefore, we may write
    \[
        B=\begin{pmatrix}
            \hat C&\hat W_0&\hat W_1&\cdots&\hat W_\ell\\
            \hat T_0&D_0&&&\\
            \hat T_1&&D_1&&\\
            
            \vdots&&&\ddots&\\
            \hat T_\ell&&&&D_\ell
        \end{pmatrix}.
    \]
    
    \proofsubparagraph*{Bounding the deviations of $D_k$ from being a square matrix.} We now show that almost all diagonal blocks $D_k$ have square dimensions. The step of arguing about the shape of the blocks $D_k$ resembles the argument used in~\cite{DBLP:conf/aaai/BrandKO21}. For this, note that the row space of $R_k:=\begin{pmatrix}\hat T_k&\matzero&D_k&\matzero\end{pmatrix}$, where $\matzero$ is a matrix of all zeroes, is of dimension $m_k$ as $B$ is full row rank. The (right-to-left) reduced row echelon form of the matrix reveals that this row space contains a linear subspace of $\R^{\hat p}\times\{\veczero\}$ of dimension $\max\{m_k-n_k,0\}$. For each $k$, let $S_k\subseteq\R^{\hat p}$ be a basis of this subspace with $|S_k|=\max\{m_k-n_k,0\}$. Suppose, in order to arrive at a contradiction, that the sum of the dimensions of these subspaces $\sum_{k\in[\ell]}|S_k|$ is greater than $\hat p$. As each subspace is a subspace of the $\hat p$ dimensional space $\R^{\hat p}\times\{\veczero\}$, there exists a nontrivial linear combination of vectors in the multiset $S:=\bigcup_{k\in[\ell]}S_k$ which yields the zero vector. Let $S'\subseteq S$ and $\lambda\in(\R\setminus\{0\})^{S'}=\veczero$ be so that $\sum_{s\in S'}\lambda_ss=\veczero$. Note that every $s\in S$ has an associated $k=k(s)$ such that $s\in S_k$ and that there exist associated multiplier vector $\mu^{(s)}\in\R^{m_k}$ such that $s=\sum_{i\in[m_k]}\mu_i^{(s)}(R_k)_{i,\cdot}^\top$. Expand the nontrivial linear combination yielding $\veczero$ to
    \[
        \veczero=\sum_{s\in S'}\lambda_s\sum_{i\in[m_{k(s)}]}\mu_i^{(s)}(R_{k(s)})_{i,\cdot}^\top=\sum_{k\in[\ell]}\sum_{i\in[m_k]}\left(\sum_{s\in S_k\cap S'}\lambda_s\mu_i^{(s)}\right)\cdot(R_k)_{i,\cdot}^\top
    \]
    As $\lambda_s\ne0$ and not all $\mu_i^{(s)}$ are $0$, we have that this yields a nontrivial linear combination of the rows of the matrix that is zero, contradicting the fact that $B$ has full row rank. Therefore, we may conclude that $\sum_{k\in[\ell]}\max\{m_k-n_k,0\}\le\hat p$. Analogously, by considering the full rank matrix $\binom{\tilde c^\top}B^\top$, we find that $\sum_{k\in[\ell]}\max\{n_k-m_k,0\}\le\hat h+1$. In particular, at least $\ell-\hat p-\hat h-1$ components $G(D_k)$ must be square in the sense that $n_k=m_k$.

    Note that for any $k$, the odd cycle packing number of $G(D_k)$ is bounded by
    \[
        \ocp G(D_k)\le n_k-(m_k-1)=n_k-m_k+1.
    \]
    This follows immediately from the fact that $G(D_k)$ consists of a spanning tree with $m_k-1$ edges and every additional edge can close at most one vertex disjoint cycle. Combining this with the bound on $n_k-m_k$ shows that the odd cycle packing number of the components is bounded.
    
    \proofsubparagraph*{The factorization of $\tilde c_j$ for a fixed $j$.} Consider an arbitrary $j\in[s]$. Let $\tilde B$ be the matrix $B$ after removing the $j$-th column. We estimate $|\tilde c_j|=|\det\tilde B|$, by successively Laplace expanding along the remaining columns of $\hat T$ and remaining parts of the rows of $\hat W$, of which there are at most $\hat p$ and $\hat h$ respectively. This shows that $\tilde c_j$ is the signed sum of at most $(s-1)^{\hat p}\cdot(s-\hat p)^{\hat h}\le r^{p+h}$ terms that are products of at most $\hat p+\hat h\le p+h$ coefficients of $A$ and a subdeterminant $\mathfrak m$ of a submatrix $\tilde M$ of $\hat M$. Such submatrix $\tilde M$ arises from $\hat M$ after deleting at most $\hat p+1$ columns and at most $\hat h$ rows. Note that $\ocp G(\tilde M)\le\ocp G(\hat M)$ as $G(\tilde M)$ may be interpreted as a subgraph of $G(\hat M)$, which shows that $\mathfrak m=0$ or $|\mathfrak m|=2^{\ocp G(\tilde M)}\le2^{\ocp G(\hat M)}$. Combining these findings, we obtain that $\tilde c_j$ is bounded by
    \[
        |\tilde c_j|\le r^{p+h}\cdot\Delta^{p+h}\cdot2^{\ocp G(\hat M)}=(\Delta r)^{p+h}\cdot2^{\ocp G(\hat M)}.
    \]
    
    We now inspect the divisibility of $\tilde c_j$ by a large power of $2$. Note that, because $G(\tilde M)$ arises from $G(\hat M)$ after removing at most $\hat p+\hat h+1$ vertices or edges, the graph $G(\tilde M)$ contains at least $(\ell-\hat p-\hat h-1)-(\hat p+\hat h+1)\ge\ell-2\hat p-2\hat h-2$ of the square components of $G(\hat M)$. Let $\tilde\ell=2\hat p+2\hat h+2$. Note that we may assume that $\ell-\tilde\ell\ge0$, because $\ell<2\hat p+2\hat h+2$ implies that
    \begin{align*}
        |\mathfrak m|\le2^{\ocp G(\tilde M)}&\le2^{\ocp G(\hat M)}\\
        &\le2^{\sum_{k\in[\ell]}(n_k-m_k+1)}\\
        &\le2^{\sum_{k\in[\ell]}\max\{n_k-m_k,0\}+\ell\cdot1}\\
        &\le2^{\hat h+1+\ell}<2^{2\hat p+3\hat h+3}\le2^{2p+3h+3},
    \end{align*}
    immediately showing that all $\tilde c_j$ are sufficiently small without requiring the division of any nontrivial common factor. Let $G(D_{k_1}),\dots,G(D_{k_{\ell-\tilde\ell}})$ be such square components appearing in $G(\tilde M)$. Then, $\tilde M$ can be rearranged into the block-diagonal form
    \[
        \tilde M=\begin{pmatrix}
            Q&&&\\
            &D_{k_1}&&\\
            
            &&\ddots&\\
            &&&D_{k_{\ell-\tilde\ell}}
        \end{pmatrix}
    \]
    where $Q$ represents the remaining part of $\tilde M$ that is not part of the square components. This shows that $\mathfrak m$ is divisible by $\det D_{k_1}\cdots\det D_{k_{\ell-\tilde\ell}}$. If any of these factors are zero, $\mathfrak m=0$ is divisible by any integer. Otherwise, it holds that $|\det D_{k_l}|=2^{\ocp G(D_{k_l})}$ because $G(D_{k_l})$ is connected and $D_{k_l}$ is square. Therefore, $\mathfrak m$ is divisible by $2^{\ocp G(D_{k_1})+\dots+\ocp G(D_{k_{\ell-\tilde\ell}})}$. In order to obtain a suitable power of $2$ that divides all $\tilde c_j$, we uniformly bound the difference
    \[
        \ocp G(\hat M)-(\ocp G(D_{k_1}))+\dots+\ocp G(D_{k_{\ell-\tilde\ell}})),
    \]
    which is equal to $\sum_{k\in S}\ocp G(D_k)$ for some complementary index set $S\subseteq[\ell]$ of size $\tilde\ell$. Using our observations, we have that this sum is at most
    \begin{align*}
        \sum_{k\in S}(n_k-m_k+1)&\le\sum_{k\in S}\max\{n_k-m_k,0\}+\tilde\ell\cdot1\\
        &\le\sum_{k\in[\ell]}\max\{n_k-m_k,0\}+\tilde\ell\\
        &\le\hat h+1+\tilde\ell\\
        &=2\hat p+3\hat h+3.
    \end{align*}
    Therefore, $\tilde c_j$ is divisible by $C:=2^{\ocp G(\hat M)-(2\hat p+3\hat h+3)}$, which is independent of $j$. After division by this common factor, we obtain integral $\tilde c_j/C$ bounded by
    \[
        |\tilde c_j/C|\le(\Delta r)^{p+h}\cdot2^{2\hat p+3\hat h+3}\le(\Delta r)^{p+h}\cdot2^{2p+3h+3}=\O(\Delta r)^{p+h},
    \]
    which completes the circuit bound.
\end{proof}

\cref{thm:proximity-from-circuits} now immediately implies \cref{cor:proximity-ub}, which we complement with \cref{prop:proximity-lb}. It shows that the bound from \cref{cor:proximity-ub} is asymptotically tight for constant $p$ and $h$ and $m=\Theta(n)$. It also implies that the circuits and Graver basis elements of the corresponding instances are at least of size $\Omega((\Delta m)^{p+h})$ in the $\infty$-norm.

\propproximitylb*

\begin{proof}
    Let $p,h\in\Z_{\ge0}$ and $\Delta\in\Z_{\ge1}$. Let $k\in\Z_{\ge1}$ be an arbitrary number determining the scale of the constructed instance which will satisfy $m,n=\Theta(k)$. We will use constraint matrix coefficients of $-1$ and $2$ in order to improve readability. Note that these coefficients can be eliminated through the use of \cref{lemma:master-reduction} to yield a class of instances using only coefficients from $\{0,1,\Delta\}$.
    
    Before involving the $p$ additional variables, we first present a construction that uses the complicating constraints to create a factor $(\Delta k)^h$ blowup. We essentially replicate the lower bound example
    \[
        \begin{pmatrix}
            \Delta'&-1&&\\
            &\Delta'&&\\
            &&\ddots&-1&\\
            &&&\Delta'&-1
        \end{pmatrix}
    \]
    in~\cite{DBLP:conf/ipco/HunkenschroderKKLL24} for some large $\Delta'$ by using smaller coefficients of size $\Delta\le\Delta'=\Delta k$ in the matrix $W$. First, define the auxiliary $k\times k$ block $U$ by
    \[
        U=\begin{pmatrix}
            -1&&&&&\\
            1&-1&&&&\\
            &1&-1&&&\\
            &&1&\ddots&&\\
            &&&\ddots&\ddots&\\
            &&&&1&-1\\
        \end{pmatrix},
    \]
    which will ensure that the variables associated with this block have equal value.

    \begin{itemize}
        \item If $h\ge1$: set
        \[
            \left(\begin{array}{c}
                \tilde W\\
                \cdashline{1-1}[3pt/3pt]
                \noalign{\vskip 2pt}
                \tilde M_W
            \end{array}\right)
            =
            \left(\begin{array}{c c c c c c c c c}
                &&\Delta\vecone^\top&-1&&&&&\\
                &&&&\Delta\vecone^\top&-1&&&\\
                &&&&&&\ddots&\ddots&\\
                &-1&&&&&&&\Delta\vecone^\top\\
                \cdashline{1-9}[3pt/3pt]
                e_1&&U&&&&&&\\
                &&&e_1&U&&&&\\
                &&&&&e_1&\ddots&&\\
                &&&&&&\ddots&\ddots&\\
                &&&&&&&e_1&U\\
            \end{array}\right),
        \]
        where $e_1$ denotes the unit vector $(1,0,\dots,0)$ of appropriate dimension.
        \item Otherwise, if $h=0$: set $\tilde M_W=\begin{pmatrix}1&-1\end{pmatrix}$ and $\tilde W$ being the matrix with zero rows.
    \end{itemize}
    
    By construction, solutions to
    \[
        \left(\begin{array}{c}
            \tilde W\\
            \cdashline{1-1}[3pt/3pt]
            \noalign{\vskip 2pt}
            \tilde M_W
        \end{array}\right)x=\veczero
    \]
    satisfy $x_2=(\Delta k)^hx_1$ and for any given $x_1'$, a unique such solution $x$ with $x_1=x_1'$ exists. It holds that $\tilde W\in\{-1,0,1,\Delta\}^{h\times\Theta(k)},\tilde M_W\in\Z^{\Theta(k)\times\Theta(k)}$ and that $\|\tilde M_W\|_1\le2$ with $\|(\tilde M_W)_{\cdot,1}\|_1=1$.

    We now proceed to construct instances of \crefilp{ilp:mixed} with high proximity and distinguish cases based on whether $p=0$.

    \begin{itemize}
        \item Case $p=0$: construct the system
        \[
            \left(\begin{array}{c c c}
                \noalign{\vskip 2pt}
                &&\tilde W\\
                \cdashline{1-3}[3pt/3pt]
                \noalign{\vskip 2pt}
                &&\tilde M_W\\
                2I&I&\\
                &\vecone^\top&-e_1^\top
            \end{array}\right)
            \left(\begin{array}{c}
                \alpha\\
                \beta\\
                x
            \end{array}\right)
            =
            \left(\begin{array}{c}
                \noalign{\vskip 2pt}
                \veczero\\
                \cdashline{1-1}[3pt/3pt]
                \noalign{\vskip 2pt}
                \veczero\\
                \vecone\\
                \veczero
            \end{array}\right)
            ,\quad
            \left(\begin{array}{c}
                \alpha\\
                \beta\\
                x
            \end{array}\right)\ge\veczero
        \]
        A vertex fractional solution with $\alpha=\tfrac12\vecone,\beta=\veczero,x_1=0$ exists whereas the only integral solution must assign $\alpha=\veczero,\beta=\vecone,x_1=k$. The previously introduced construction enforces that $x_2=(\Delta k)^hx_1$. Therefore, these solutions have $x_2=0$ and $x_2=k\cdot(\Delta k)^h$ respectively.
        
        \item Case $p>0$: we first create a large proximity using the $p$ additional variables with the system given by
        \[
            \left(\begin{array}{c;{3pt/3pt}c}
                \tilde T&\tilde M_T
            \end{array}\right)
            \left(\begin{array}{c}
                y\\
                \cdashline{1-1}[3pt/3pt]
                w\\
                \cdashline{1-1}[0.5pt/1pt]
                \gamma\\
                \delta
            \end{array}\right)
            =
            \left(\begin{array}{c}
                \vecone\\
                0\\
                \cdashline{1-1}[0.5pt/1pt]
                \veczero
            \end{array}\right),\quad
            \left(\begin{array}{c}
                y\\
                \cdashline{1-1}[3pt/3pt]
                w\\
                \cdashline{1-1}[0.5pt/1pt]
                \gamma\\
                \delta
            \end{array}\right)\ge\veczero,
        \]
        where
        \begin{align*}
            &\left(\begin{array}{c;{3pt/3pt}c}
                \tilde T&\tilde M_T
            \end{array}\right)
            =\\
            &\left(\begin{array}{c c c c c c;{3pt/3pt}c c c c c c c;{0.5pt/1pt}c c}
                &&&&&&&&&&&&&2I&I\\
                -e_1&&&&&&&&&&&&&&\vecone^\top\\
                \cdashline{1-15}[0.5pt/1pt]
                \Delta\vecone&&&&&&&I&&&&&&&\\
                &1&&&&&&\vecone^\top&&&&&&&\\
                &\Delta\vecone&&&&&&&I&&&&&&\\
                &&1&&&&&&\vecone^\top&&&&&&\\
                &&\Delta\vecone&&&&&&&I&&&&&\\
                &&&\ddots&&&&&&\vecone^\top&&&&&\\
                &&&\ddots&&&&&&&\ddots&&&&\\
                &&&&1&&&&&&\ddots&&&&\\
                &&&&\Delta\vecone&&&&&&&I&&&\\
                &&&&&1&&&&&&\vecone^\top&&&\\
                &&&&&\Delta\vecone&&&&&&&I&&\\
                &&&&&&1&&&&&&\vecone^\top&&
            \end{array}\right)
        \end{align*}
        and the identity matrices $I$ have dimensions $k\times k$. A solution to this system satisfies $w_1=(\Delta k)^py_1$ and $y_1=\vecone^\top\delta$. Again, there exists a vertex solution where $\delta=\veczero,w_1=0$ and the only integral solution has $\delta=\vecone,w_1=k\cdot(\Delta k)^p$. It holds that $\tilde T\in\{-1,0,1,\Delta\}^{\Theta(k)\times p},\tilde M_T\in\Z^{\Theta(k)\times\Theta(k)}$ and that $\|\tilde M_T\|_1\le2$ with $\|(\tilde M_T)_{\cdot,1}\|_1=1$.

        Now we can join this with the system with $h$ additional constraints along $u=(w,\gamma,\delta)$ to obtain the system
        \[
            \left(\begin{array}{c;{3pt/3pt}c c}
                &&\tilde W\\
                \cdashline{1-3}[3pt/3pt]
                \noalign{\vskip 2pt}
                &&\tilde M_W\\
                &e_1^\top&-e_1^\top\\
                \noalign{\vskip 2pt}
                \tilde T&\tilde M_T&\\
            \end{array}\right)
            \left(\begin{array}{c}
                y\\
                \cdashline{1-1}[3pt/3pt]
                u\\
                x
            \end{array}\right)
            =
            \left(\begin{array}{c}
                \veczero\\
                \cdashline{1-1}[3pt/3pt]
                \noalign{\vskip 2pt}
                \veczero\\
                0\\
                \noalign{\vskip 2pt}
                \tilde b\\
            \end{array}\right)
            ,\quad
            \left(\begin{array}{c}
                y\\
                \cdashline{1-1}[3pt/3pt]
                u\\
                x
            \end{array}\right)\ge\veczero.
        \]
        Here, a vertex fractional solution with $x_2=0$ exists, whereas the only integral solution has that $x_2=(\Delta k)^hx_1=(\Delta k)^hu_1=(\Delta k)^h\cdot k\cdot(\Delta k)^p=k\cdot(\Delta k)^{p+h}$.
    \end{itemize}

    It is straightforward to verify that for both constructions it holds that $m,n=\Theta(k)$. Therefore, we obtain the asymptotic lower bound of $\Omega(m\cdot(\Delta m)^{p+h})$.
    
\end{proof}

We note that a conjecture posed by Berndt, Mnich and Stamm involving upper bounds of the proximity of ILPs and their relation to the circuit complexity~\cite{DBLP:conf/sofsem/BerndtMS24} hints at that the upper bound from \cref{cor:proximity-ub} may be able to be improved to $m\cdot\O(\Delta m)^{p+h}$, at least for the case where $u=\vecinfty$, which would close the only remaining gap in terms of $m$ and $n$ between \cref{cor:proximity-ub,prop:proximity-lb}.

\subsection{A randomized slice-wise polynomial time algorithm}
\label{sec:mixed-algorithm}

The essence of solving \crefilp{ilp:mixed} in XP time when $A$ and $c$ are encoded in unary boils down to the following sequential steps:
\begin{enumerate}
    \item Reduce variable domains and eliminate the $p$ arbitrary variables $y$ in \cref{lemma:proximity-usage} using \cref{cor:proximity-ub}.
    \item Reduce to perfect constrained matching in \cref{lemma:wide-ip-reduction-to-perfect-matching} using the reduction to simple $G(M)$ in \cref{lemma:master-reduction} and the pseudo-polynomial reduction described in \cref{prop:gb}.
    \item Encode the $h$ constraints as a single constraint in \cref{lemma:constraint-condensation}.
    \item Solve the single-constrained perfect matching problem with the randomized algorithm by Mulmuley, Vazirani and Vazirani~\cite{DBLP:journals/combinatorica/MulmuleyVV87}, which is described in \cref{lemma:weighted-exact-matching}.
\end{enumerate}

To facilitate the first step, we rely on efficient computation of the solution to the LP relaxation of \crefilp{ilp:mixed} and \cref{cor:proximity-ub}. Solving the relaxation can be done in strongly polynomial time using the algorithm of Dadush et al.~\cite{DBLP:conf/focs/DadushNV20}. Their algorithm solves this relaxation in $(m+h)(n+p)^{\omega+1+o(1)}\log(\chi+n))$ time, where and $\omega$ is the matrix multiplication exponent $\chi(A)$ is the circuit imbalance measure. As $\chi(A)\le c_\infty(A)\le\O(\Delta(m+h))^{p+h}$ and $\omega\le3$, the LP relaxation can be solved in time $\tilde\O(mn^4)\cdot p^{\O(1)}h^{\O(1)}$. Here, the $\tilde\O$ hides polylogarithmic factors of the encoding length of the instance.

\begin{lemma}
    The \crefilp{ilp:mixed} can be solved by solving $n^p\cdot\O(\Delta(m+h))^{\tfrac12p(p+1)+ph}$ subinstances of \crefilp{ilp:wide} for which $c'=c,W'=W,M'=M$ and $\|u-l\|_\infty=n\cdot\O(\Delta(m+h))^h$, and solving an asymptotically equal number of LP relaxations of the original ILP with a subset of fixed variables.
    \label{lemma:proximity-usage}
\end{lemma}

\begin{proof}
    We solve the LP relaxation of \crefilp{ilp:mixed}. If the LP relaxation is unbounded, it suffices to solve the ILP for $(a,c)=\veczero$, which reduces to the bounded case. Let $(y^*,x^*)$ be an optimal fractional solution. Using \cref{cor:proximity-ub}, we may impose bounds on $(y,x)$ around $(y^*,x^*)$ so that $\|(g,u)-(e,l)\|_\infty=n\cdot\O(\Delta(m+h))^{p+h}$ and retain the optimal objective value of the integer linear program. We guess the value of the first complicating variable $y_1$ out of $n\cdot\O(\Delta(m+h))^{p+h}$ options. This reduces the problem to solving that many instances of \crefilp{ilp:mixed} with $p$ reduced by $1$ and a modified right hand side $(d',b')=(d,b)-\binom{W_{\cdot,1}}{T_{\cdot,1}}y_1$. Each such subinstance may analogously be recursively reduced until $p=0$. As $p$ decreases, the bound from \cref{cor:proximity-ub} strengthens and ensures that it suffices to only consider $n\cdot\O(\Delta(m+h))^{(p+1-j)+h}$ possible options for $y_j$. Therefore, the total resulting number of leaves in the recursion tree, i.e.\ subinstances with $p=0$, is at most
    \begin{align*}
        &\prod_{j\in[p]}n\cdot\O(\Delta(m+h))^{(p+1-j)+h}\\
        &=n^p\cdot\O(\Delta(m+h))^{\tfrac12p(p+1)+ph}.
    \end{align*}
    As the recursion tree grows exponentially, the number of LP relaxations that need to be solved is asymptotically equal to the number of leaves. The leaves have a proximity of $n\cdot\O(\Delta(m+h))^h$, which results in the claimed variable bound.
\end{proof}

We now describe the pseudo-polynomial reduction to a constrained perfect matching problem.

\begin{lemma}
    Solving \crefilp{ilp:wide} can be reduced to solving another instance of \crefilp{ilp:wide} where $l'=\veczero,u'=\vecone,b'=\vecone,c'\ge\veczero,W'\in\Z_{\ge0}^{h\times n'},\|c'\|_\infty=\|c\|_\infty,\Delta'=\Delta$ and $m'=\O(n\|u-l\|_\infty),n'\le m'^2$, and $G(M)$ is simple. This reduction can be performed in output-linear time.
    \label{lemma:wide-ip-reduction-to-perfect-matching}
\end{lemma}

\begin{proof}
    We apply \cref{lemma:master-reduction}. As a result, we may assume that the right hand side of this instance is bounded by $\|b\|_1=\O(\|u-l\|_1)=\O(n\|u-l\|_\infty)$ and that $M$ is the incidence matrix of the simple graph $G(M)$.
    
    We now further reduce to the case where $b=\vecone$ by using the graph $G(M)_b$ from \cref{prop:gb}. To accomplish this, let $M'$ be the incidence matrix of $G(M)_b$ and construct an ILP of the form \labelcref{ilp:wide} with $b=\vecone$ and the restrictions from \cref{lemma:master-reduction}. For every copy $j'$ of an edge $j$ of $G(M)$, copy the corresponding column $W_{\cdot,j}$ and use it as the column for $j'$ in $W'$. Copy the objective coefficients in the same way. This builds the correspondence $x_j=x_j^{(1)}+\dots+x_j^{(k)}$, where $x_j^{(1)},\dots,x_j^{(k)}$ are the $k=b_{i_1}\cdot b_{i_2}$ copies of the variable $x_j$ that corresponds with the edge connecting vertices $i_1$ and $i_2$ in $G(M)$. These problems are equivalent by \cref{prop:gb}. The graph $G(M)_b$ has $m'=\|b\|_1=\O(n\|u-l\|_\infty)$ vertices and $n'\le m^{\prime2}$ edges.
\end{proof}

The $h$ constraints of an instance of \crefilp{ilp:wide} which arises from the application of \cref{lemma:wide-ip-reduction-to-perfect-matching} can be condensed into a single constraint. This can accomplished by representing the constraints in base $B$ for a sufficiently large $B$, as done for special cases in \cite{DBLP:journals/mp/BergerBGS11,DBLP:journals/jacm/PapadimitriouY82} (for $h=2$ and $W\in\{0,1\}^{h\times n}$ respectively).

\begin{lemma}
    An instance of \crefilp{ilp:wide} with
    \begin{itemize}
        \item $W\in\Z_{\ge0}^{h\times n}$,
        \item $G(M)$ is simple,
        \item $l=\veczero,u=\vecone,b=\vecone$,
    \end{itemize}
    can be reduced to an instance of \crefilp{ilp:wide} with $h'=1,W'\in\Z_{\ge0}^{1\times n}$ in input+output linear time by only increasing $\Delta'=\Delta^h\cdot\O(m)^{h-1}$ and preserving $c,M,l,u,b$.
    \label{lemma:constraint-condensation}
\end{lemma}

\begin{proof}
    We condense all constraints of $W$ into one. Note that $(m/2)\Delta$ is a trivial upper bound on $W_{i,\cdot}x$ for a perfect matching $x$. Therefore, we may assume that $0\le d_i\le(m/2)\Delta$ for all $i$ or the ILP is trivially infeasible. By working base $B:=(m/2)\Delta+1$, we observe that the $h$ constraints
    \[
        Wx=d
    \]
    are equivalent to the single constraint
    \[
        \Bigl(\sum_{i\in[h]}B^{i-1}W_{i,\cdot}\Bigr)x=\sum_{i\in[h]}B^{i-1}d_i.
    \]
\end{proof}

We can now employ the randomized exact matching algorithm by Mulmuley, Vazirani and Vazirani~\cite{DBLP:journals/combinatorica/MulmuleyVV87} to solve instances resulting from \cref{lemma:constraint-condensation}. For the sake of completeness, we present an algorithm in \cref{lemma:weighted-exact-matching}, which closely follows their ideas.

\begin{lemma}
    A constrained minimum cost perfect matching problem, i.e., an instance of \crefilp{ilp:wide} with
    \begin{itemize}
        \item $c\ge\veczero,W\in\Z_{\ge0}^{1\times n},h=1$,
        \item $G(M)$ is simple,
        \item $l=\veczero,u=\vecone,b=\vecone$,
    \end{itemize}
    can be solved with a randomized algorithm in 
    \begin{align*}
        \O(\|c\|_\infty\Delta nm^7\log(\Delta m)\log(\|c\|_\infty\Delta nm))
    \end{align*}
    time. In this case, an optimal solution is not computed.
    \label{lemma:weighted-exact-matching}
\end{lemma}

\begin{proof}
    Assign every edge $j$ a random weight $w_j\gets(nm+1)c_j+Z_j$ where $Z_j$ is sampled independently and uniformly from $\{1,2,\dots,2n\}$. By doing so, the Isolation Lemma~\cite{DBLP:journals/combinatorica/MulmuleyVV87} ensures that with probability at least $\tfrac12$, there exists a unique minimum $w$-weighted perfect matching satisfying the additional $Wx=d$ constraint, if such matching exists. To check for the existence of such unique minimum matching, we compute the Pfaffian of the $m\times m$ skew-symmetric Tutte matrix $D$ where $D_{i_1,i_2}=0$ if $i_1$ and $i_2$ are nonadjacent and $D_{i_1,i_2}=2^{w_j}\cdot X^{W_{1j}}$ if $j$ connects $i_1<i_2$, where $X$ is some indeterminate. It is known that if there is a unique minimum $w$-weight constrained perfect matching, the Pfaffian will contain a monomial $C\cdot X^{d_1}$ with a nonzero coefficient $C$~\cite{DBLP:journals/combinatorica/MulmuleyVV87}. In fact, the minimum $w$-weighted perfect matching satisfying the additional constraint has a $w$-weight of $\omega$ which is the least integer for which $2^\omega$ divides $C$. As a perfect matching contains exactly $m/2$ edges, we have that the sum over $Z_j$ where $j$ ranges over this perfect matching is at most $nm$. Therefore, the minimum $c$-cost of the perfect matching can be recovered by computing the quotient of $\omega$ by $nm+1$. If there is no constrained perfect matching, the coefficient $C$ will always be zero. Therefore, we can compute the optimal objective of the ILP by computing the coefficient of $X^{d_1}$.
    
    The Pfaffian of the $m\times m$ matrix can be computed in a division free way with $\O(m^4)$ ring operations~\cite{DBLP:conf/cocoon/MahajanSV99}. As the ring elements can be large, these ring operations are expensive. A close inspection of the algorithm~\cite{DBLP:conf/cocoon/MahajanSV99}, see~\cite{DBLP:conf/icalp/LassotaL022}, reveals that the coefficients of the monomials in the computation can be bounded by $\O(m^m(2^{(nm+1)\|c\|_\infty+2n})^m)=\O(2^{m\log m+2nm+(nm+1)m\|c\|_\infty})$ and thus have encoding length bounded by $\overline l=\O(\|c\|_\infty nm^2)$. As we are solely interested in the coefficient $C$ of the term $C\cdot X^{d_1}$, monomials containing a power of $X$ greater than $d_1$ may be discarded during the computation of the Pfaffian. By again bounding $d_1\le(m/2)\Delta$, this shows that the polynomials have degree at most $\overline d:=(m/2)\Delta$. We can apply an FFT based fast univariate polynomial multiplication algorithm (Corollary 8.27~\cite{DBLP:books/daglib/0031325} using~\cite{harvey2021integer}) on this to obtain a multiplication time of
    \begin{align*}
        &\O(\overline d(\log\overline d+\overline l)\cdot\log(\overline d(\log\overline d+\overline l)))\\
        &=\O(\overline l\cdot\overline d\log\overline d\cdot\log(\overline l\cdot\overline d\log\overline d))\\
        &=\O\bigl(\|c\|_\infty\Delta nm^3\log(\Delta m)\cdot\log\bigl(\|c\|_\infty\Delta nm^3\log(\Delta m)\bigr)\bigr)\\
        &=\O(\|c\|_\infty\Delta nm^3\log(\Delta m)\log(\|c\|_\infty\Delta nm)).
    \end{align*}
    Here, we have slightly loosely estimated $\overline d(\log\overline d+\overline l)\le\overline l\cdot\overline d\log\overline d$ in order to improve readability. Additions can be performed in less asymptotic time. Multiplying with the $\mathcal O(m^4)$ arithmetic operations that need to be performed yields the stated running time.
\end{proof}

We can now combine the findings from this section to derive the claimed randomized XP time algorithm.

\begin{proof}[Proof of \cref{thm:mixed-xp}]
    Reduce solving \crefilp{ilp:mixed} to solving $n^p\cdot\O(\Delta(m+h))^{\tfrac12p(p+1)+ph}$ instances of constrained perfect matching with \cref{lemma:proximity-usage,lemma:wide-ip-reduction-to-perfect-matching}. Then apply \cref{lemma:constraint-condensation}. We find that we need to solve instances of size
    \begin{align*}
        m'&=n^2\cdot\O(\Delta(m+h))^h,\\
        n'&\le m'^2,\\
        \Delta'&=\Delta^h\cdot\O(m')^{h-1}.\\
    \end{align*}
    Finally, apply \cref{lemma:weighted-exact-matching} to find a running time of
    \begin{align*}
        &\O(\|c\|_\infty\Delta'n'm'^7\log(\Delta'm')\log(\|c\|_\infty\Delta'n'm'))\\
        &=\Delta^h\cdot\O(m')^{h+8}\cdot\|c\|_\infty\log(\Delta^h\cdot\O(m')^h)\log(\|c\|_\infty\cdot\Delta^h\cdot\O(m')^{h+2})\\
        &=\Delta^h\cdot(n^2\cdot\O(\Delta(m+h))^{h})^{h+8}\cdot\tilde\O(\|c\|_\infty\log\|c\|_\infty\cdot\log^2\Delta)\cdot h^{\O(1)}\\
        &=\tilde\O(\|c\|_\infty\log\|c\|_\infty\cdot\Delta^h\log^2\Delta)\cdot n^{2h+16}\cdot\O(\Delta(m+h))^{h^2+8h}
    \end{align*}
    for finding the optimal objective value of a single instance of \crefilp{ilp:wide}. Observe that polynomial factor in $h$ is subsumed by an exponentially growing factor $\O(1)^h$. To correctly find the subinstance of \crefilp{ilp:wide} that attains the maximal objective, each of the $n^p\cdot\O(\Delta(m+h))^{\tfrac12p(p+1)+ph}$ calls of the randomized algorithm must be successful. To compensate for this and retain a constant success probability of the overall algorithm, it suffices to attempt to solve each subinstance $\log(n^p\cdot\O(\Delta(m+h))^{\tfrac12p(p+1)+ph})=\tilde\O(\log\Delta)\cdot p^{\O(1)}h^{\O(1)}$ times with the randomized algorithm from \cref{lemma:weighted-exact-matching}. Therefore, we can obtain the optimal objective value of \crefilp{ilp:mixed} in time
    \begin{align*}
        &\tilde\O(\log\Delta)\cdot p^{\O(1)}h^{\O(1)}\cdot n^p\cdot\O(\Delta(m+h))^{\tfrac12p(p+1)+ph}\\
        &\,\,\,\,\,\,\,\,\cdot\tilde\O(\|c\|_\infty\log\|c\|_\infty\cdot\Delta^h\log^2\Delta)\cdot n^{2h+16}\cdot\O(\Delta(m+h))^{h^2+8h}\\
        &=\tilde\O(\|c\|_\infty\log\|c\|_\infty\cdot\Delta^h\log^3\Delta)\cdot n^{p+2h+16}\cdot\O(\Delta(m+h))^{\tfrac12p^2+h^2+ph+\tfrac12p+8h}.
    \end{align*}

    After identifying the optimal objective value and corresponding value of $y$, we can complete this to an optimal solution $(y,x)$ by performing a binary search on the value of each variable of $x$. In particular, we can recover the value of a variable $x_j$ in an optimal solution by restricting the domain of $x_j$ to $[\overline l_j,\underline u_j]$ and seeing if this still yields an optimal solution with the same objective value. In this way, for each $j\in[n]$ only logarithmically many bound pairs $\overline l_j,\underline u_j$ need to be tried to fix the values of $x_j$ and recover a solution. Using the imposed bounds on $\|u_j-l_j\|_\infty=n\cdot\O(\Delta(m+h))^h$, this requires solving the subinstance with additional variable restrictions an additional $\tilde\O(n\log\Delta)\cdot h^{\O(1)}$ times. Compensating for randomness yields an additional $\tilde\O(1)$ factor. The time to perform this last step and the time needed to solve the LP relaxations needed in \cref{lemma:proximity-usage} are both dominated by the time to compute the optimal objective value of the \crefilp{ilp:wide} instances.
\end{proof}

A Graver augmentation algorithm can be used to slightly improve the dependency on $n$. This minor modification is discussed in \cref{sec:graver-augmentation}.

Observe that, by using \cref{lemma:proximity-usage,lemma:wide-ip-reduction-to-perfect-matching} and a polynomial time algorithm for minimum cost perfect matching, one can rederive the fact that the generalized matching problem is in P, \cref{thm:generalized-matching-in-p}. Most of the steps taken in this reduction process from \cref{lemma:master-reduction} reduce to the steps as listed by Schrijver~\cite{schrijver2003combinatorial}. The difference lies in that the application of a sensitivity result and minimum cost circulation algorithm are replaced with a proximity based argument and an linear programming algorithm.

Instead of directly solving the constrained perfect matching problem via \cref{lemma:weighted-exact-matching}, we may further reduce the problem to the $0/1$-weighted exact matching problem, which is to find a perfect matching with a target weight, assuming all edges have weight $0$ or $1$. This can be done by splitting the unary-encoded coefficients in $W$ into multiple $0/1$ coefficients as shown in \cref{lemma:wide-coefficient-reduction}. This generalizes the reduction used in~\cite{DBLP:journals/jacm/PapadimitriouY82}. As the lemma additionally reveals that the W[1]-hardness of solving \crefilp{ilp:wide} persists even when we restrict to coefficients bounded by a constant $\Delta$ in \cref{thm:wide-w1-hard}, we provide an explicit proof.

\begin{lemma}
    An instance of \crefilp{ilp:wide} with finite $l,u$ and $W\in\Z_{\ge0}^{h\times n}$ can be reduced to an instance of \crefilp{ilp:wide} in output-linear time where
    \begin{itemize}
        \item $W'\in\{0,1\}^{h'\times n'}$,
        \item $\|l'\|_1=\O(\Delta\|l_1\|),\|u'\|_1=\O(\Delta\|u\|_1)$,
        \item $n',m'=\O(\Delta n+m)$,
        \item $h'=h$,
        \item $d'=d$,
        \item $\|b'\|_1=\O(\|b\|_1+\Delta\|l+u\|_1)$,
        \item $c'$ contains the same entries as $c$ up to the insertion of zeros.
    \end{itemize}
    
    In addition, the following properties of an instance are preserved:
    \begin{itemize}
        \item $G(M)$ is simple,
        \item $l=\veczero,u=\vecone,b=\vecone$.
    \end{itemize}
    \label{lemma:wide-coefficient-reduction}
\end{lemma}

\begin{proof}
    We describe a reduction step to reduce the coefficients of $W$ that are larger than $1$ in a given column $j$ by one. For this purpose, subdivide $x_j$ into three variables $x_j^{(1)},x_j^{(s)},x_j^{(2)}$. In essence, the coefficients of $W$ are reduced by splitting and distributing them over $x_j^{(1)}$ and $x_j^{(2)}$ as shown in \cref{fig:wide-coefficient-reduction}.
    \begin{figure}[H]
        \begin{cdisplaymath}
            \begin{array}{c:l}
                c_j&\\
                \cdashline{1-1}{}
                W_{\cdot,j}&\\
                \cdashline{1-1}{}
                \veczero&\\
                M_{i_1,j}&=b_{i_1}\\
                \veczero&\\
                M_{i_2,j}&=b_{i_2}\\
                \veczero&\\
                \\
                \\
                \cdashline{1-1}{}
                [l_j,u_j]&
            \end{array}
            \to
            \begin{array}{c c c:l}
                c_j&0&0\\
                \cdashline{1-3}{}
                (W_{\cdot,j}-\vecone)_+&0&\min\{W_{\cdot,j},\vecone\}\\
                \cdashline{1-3}{}
                \veczero&\veczero&\veczero\\
                M_{i_1,j}&0&0&=b_{i_1}\\
                \veczero&\veczero&\veczero\\
                0&0&M_{i_2,j}&=b_{i_2}\\
                \veczero&\veczero&\veczero\\
                1&1&0&=l_j+u_j\\
                0&1&1&=l_j+u_j\\
                \cdashline{1-3}{}
                [l_j,u_j]&[l_j,u_j]&[l_j,u_j]&
            \end{array}
        \end{cdisplaymath}
        \caption{Coefficients of the complicating constraints can be reduced by subdividing the variables and distributing the coefficient contributions among the new copies.}
        \label{fig:wide-coefficient-reduction}
    \end{figure}
    Again, let $M_{i_1,j},M_{i_2,j}$ denote the potentially nonzero coefficients with absolute value at most $1$ in $M$ of $x_j$ (interpreting $i_1=i_2$ when there is a single coefficient with absolute value $2$). We implement the following modifications:
    \begin{itemize}
        \item Add the constraints $x_j^{(1)}+x_j^{(s)}=l_j+u_j,x^{(2)}+x_j^{(s)}=l_j+u_j$ and domains $x_j^{(1)},x_j^{(s)},x_j^{(2)}\in[l_j,u_j]$.
        \item Replace the terms $M_{i_1,j}x_j$ and $M_{i_2,j}x_j$ with $M_{i_1,j}x_j^{(1)}$ and $M_{i_2,j}x_j^{(2)}$ respectively.
        \item Assign the objective coefficient $c_j$ to $x_j^{(1)}$ and assign all other new variables an objective coefficient of $0$.
        \item Set the column of $W'$ associated with $x_j^{(s)}$ to $\veczero$, but use the other two variables to reduce the coefficients in $W$. In particular, assign $(W_{\cdot,j}-\vecone)_+$ and $\min\{W_{\cdot,j},\vecone\}$ to $x_j^{(1)}$ and $x_j^{(2)}$ so that $W_{\cdot,j}x=W_{\cdot,j}x_j^{(1)}=(W_{\cdot,j}-\vecone)_+x_j^{(1)}+\min\{W_{\cdot,j},\vecone\}x_j^{(2)}$.
    \end{itemize}
    
    This reduces the coefficients in the $j$-th column by one. By exhaustively applying this reduction rule on all columns $j$, all coefficients of $W$ can be made $0/1$.
\end{proof}

We find that solving \crefilp{ilp:wide} is equivalent to the $0/1$-weighted exact matching problem.

\begin{proposition}
    If the objective coefficients and constraints are encoded in unary and $h$ is fixed, then solving \crefilp{ilp:wide} is polynomially equivalent to the $0/1$-weighted exact matching problem.
    \label{thm:wide-ip-exact-matching-equivalence}
\end{proposition}

\begin{proof}
    Reduce the problem to finding an optimal solution to a constrained perfect matching problem with \cref{lemma:proximity-usage,lemma:wide-ip-reduction-to-perfect-matching}. To solve \crefilp{ilp:wide}, perform binary search on the objective. The objective requirement can be encoded as an additional constraint and results in feasibility subproblems with $c=\veczero$. Then apply \cref{lemma:constraint-condensation,lemma:wide-coefficient-reduction} successively to find a polynomially equivalent $0/1$-weighted exact matching instance.
\end{proof}

In this way, a polynomial time deterministic algorithm for $0/1$-weighted exact matching yields a deterministic algorithm to solve \crefilp{ilp:wide}. The question of whether such algorithm exists is a long standing open question.

Additionally, if one is able to solve minimum cost perfect matching with one $0/1$ side constraint and costs encoded in binary, then \cref{lemma:proximity-usage,lemma:wide-ip-reduction-to-perfect-matching,lemma:constraint-condensation,lemma:wide-coefficient-reduction} show that \crefilp{ilp:wide} is similarly solvable. However, to the best of our knowledge, the complexity of this problem is unknown. In particular, it suffices to focus on minimum cost perfect matching under an additional $0/1$ budget constraint $\sum_{j\in[n]}w_jx_j\le B$ for $w\in\{0,1\}^n$, as one can reduce to this problem by decreasing the costs associated with edges with $w_j=1$ by a sufficiently large number $(m/2)\|c\|_\infty+1$ so that the budget constraint is forced to be met with equality.

\subsection{W[1]-hardness of matching with additional constraints and bounded coefficients}
\label{sec:mixed-hardness}

We complement \cref{thm:mixed-xp} with the hardness result listed in \cref{thm:wide-w1-hard}. For this, we employ the known hardness of ILP with coefficients encoded in unary~\cite{DBLP:journals/ai/DvorakEGKO21}. This problem was shown to be strongly W[1]-hard parameterized by the number of constraints through a reduction from the multicolored clique problem~\cite{DBLP:journals/ai/DvorakEGKO21}. For completeness, we present their reduction in a slightly more general setting to reduce from the partitioned subgraph isomorphism problem and simultaneously make it explicit that the variables may be restricted to $\veczero\le x\le\vecone$. With this, we also immediately obtain a fine-grained complexity lower bound due to Marx~\cite{DBLP:journals/toc/Marx10}.

\thmwonemulticolorclique*

We reduce from the partitioned subgraph isomorphism problem, which, given two graphs $H=(W,F),G=(V,E)$, and a partition of $V$ into $|W|$ classes $V_w$ for $w\in W$, asks whether there is a graph homomorphism $\phi\colon W\to V$ such that $\phi(w)\in V_w$. This problem is W[1]-hard when parameterized by the size $k$ of $H$. Note that the W[1]-hardness and Marx' fine-grained lower bound holds when $H$ is restricted to have $|F|=|\Theta(W)|$. Therefore, we may assume that the size parameter is given by $k=|F|+|W|$.

\begin{proof}
    Let $H=(W,F),G=(V,E),(V_w)_{w\in W}$, be an instance of the partitioned subgraph isomorphism problem. We construct a Sidon sequence $(a_v)_{v\in V}$ in $\Z_{\ge0}$ of size $a_v=\O(|V|^2)$, see~\cite{DBLP:journals/ai/DvorakEGKO21}. This is a sequence for which every sum of two sequence elements is unique. We proceed to describe an equivalent ILP instance. For every vertex $w\in W$ and $v\in V_w$, we create a variable $x_{wv}$, indicating whether $\phi(w)=v$, and for every edge $f\in F$ and $e\in E$, we create a variable $x_{ef}$, indicating whether $f$ is mapped to $e$ by $\phi$. Now add the constraints
    \begin{alignat}{2}
        \sum_{v\in V_{w_1}}a_vx_{w_1v}+\sum_{v\in V_{w_2}}a_vx_{w_2v}&=\sum_{e=\{v_1,v_2\}\in E}(a_{v_1}+a_{v_2})x_{fe},\quad&f=\{w_1,w_2\}&\in F\label{eq:reduction-adjacency}\\
        \sum_{v\in V_w}x_{wv}&=1,&w&\in W\label{eq:reduction-choose-vertex}\\
        \sum_{e\in E}x_{fe}&=1.&f&\in F\label{eq:reduction-choose-edge}
    \end{alignat}
    Together with $x\ge\veczero$, (\ref{eq:reduction-choose-vertex}) ensures that precisely one vertex $v$ is chosen for which $\phi(w)=v$. Note that imposing the upper bounds $x\le\vecone$ does not change the feasible set of solutions to the ILP. Constraint (\ref{eq:reduction-adjacency}) ensures that there is an edge $e$ between $v_1$ and $v_2$ if these vertices are chosen to represent an adjacent pair of vertices in $H$.

    A homomorphism $\phi$ respecting the partition corresponds to a feasible ILP solution defined by $x_{w,\phi(w)}=1$ for $w\in W,x_{f,\{\phi(w_1),\phi(w_2)\}}=1$ for $f=\{w_1,w_2\}\in F$ and letting $x$ be zero elsewhere.

    Now let $x$ be a feasible ILP instance. We construct the corresponding homomorphism $\phi$ by mapping $w$ to the unique vertex $\phi(w)=v\in V_w$ for which $x_{wv}=1$. To see that an edge $f=\{w_1,w_2\}\in F$ is mapped to an edge in $G$, note that (\ref{eq:reduction-adjacency}) reads $a_{\phi(w_1)}+a_{\phi(w_2)}=a_{v_1}+a_{v_2}$ where $e=\{v_1,v_2\}$ is the unique edge $e\in E(G)$ for which $x_{ef}=1$. This edge is unique by~(\ref{eq:reduction-choose-edge}). By the uniqueness of sums, it follows that $\{\phi(w_1),\phi(w_2)\}=e\in E$, i.e.\ that $\phi(w_1)$ is adjacent to $\phi(w_2)$ in $G$ as required.
\end{proof}

\cref{lemma:wide-coefficient-reduction} allows to derive the W[1]-hardness of solving \crefilp{ilp:wide} parameterized by $h$ for the restricted case of a $0/1$-constrained perfect matching problem on a bipartite graph for which many width parameters are constant.

\thmwidewonehard*

\begin{proof}
    We reduce an ILP instance
    \[
        \{Wx=d:\veczero\le x\le\vecone,x\in\Z^n\}
    \]
    of \cref{thm:w1-multicolorclique} to one of \crefilp{ilp:wide} with bounded coefficients. Observe that it can immediately be cast into the form \labelcref{ilp:wide} by setting $c=\veczero,M=\matzero$ and $b=\veczero$. To arrive at the listed class of incidence matrix, we add a redundant quadrangle for every variable $x_j$ as shown in \cref{fig:quadrangle}.
    \begin{figure}[H]
        \begin{cdisplaymath}
            \begin{array}{c}
                0\\
                \cdashline{1-1}{}
                W_{\cdot,j}\\
                \cdashline{1-1}{}
                \veczero\\
                \\
                \\
                \\
                \\
                \cdashline{1-1}{}
                [0,1]
            \end{array}
            \to
            \begin{array}{c c c c:c l}
                0&0&0&0&\\
                \cdashline{1-4}{}
                W_{\cdot,j}&\veczero&\veczero&\veczero&\\
                \cdashline{1-4}{}
                \veczero&\veczero&\veczero&\veczero&&\\
                1&0&0&1&=1\\
                1&1&0&0&=1\\
                0&1&1&0&=1\\
                0&0&1&1&=1\\
                \cdashline{1-4}{}
                [0,1]&[0,1]&[0,1]&[0,1]&
            \end{array}
        \end{cdisplaymath}
        \caption{A redundant quadrangle can be added for each variable to ensure that $G(M)$ is simple.}
        \label{fig:quadrangle}
    \end{figure}
    That is, we add the variables $x_j^{(2)},x_j^{(3)},x_j^{(4)}$ with zero objective coefficients and zero columns in $W$. We set all variable domains to $[0,1]$. Finally, the variables are given suitable coefficients in $M$ by adding the redundant constraints $x_j+x_j^{(2)}=x_j^{(2)}+x_j^{(3)}=x_j^{(3)}+x_j^{(4)}=x_j+x_j^{(4)}=1$.
    
    We now employ \cref{lemma:wide-coefficient-reduction} to reduce the coefficients of $W$. Observe that the reduction steps subdivides an edge into three edges. Therefore, the obtained $M$ is the incidence matrix of a simple graph that is the disjoint union of even length cycles.
\end{proof}

This shows that, unlike $n$-fold ILP, the constrained matching problem is unlikely to become FPT when the coefficient size is bounded.

As the reductions from \cref{thm:w1-multicolorclique,thm:wide-w1-hard} reduce a partitioned subgraph isomorphism problem to an instance of \crefilp{ilp:wide} with $h=\Theta(|F|+|W|)=\Theta(k)$ constraints in $W$ and $n=|V|^{\O(1)}$ variables, Corollary 6.3 in~\cite{DBLP:journals/toc/Marx10} immediately provides the lower bound in \cref{cor:wide-fine-grained-lb}.

\begin{corollary}
    If \crefilp{ilp:wide} with $W\in\{0,1\}^{h\times n}$ can be solved in time $f(h)n^{o(h/\log h)}$ for some function $f$, then ETH fails.
    \label{cor:wide-fine-grained-lb}
\end{corollary}

\section{Concluding notes and open problems}
We study the complexity of ILP parameterized by the size of variable or constraint backdoors to the generalized matching problem, called $p$ and $h$ respectively. In particular, we show that solving ILPs is in FPT when parameterized by $p$ and when $h=0$. We study the convexity of degree sequences in the light of SBO jump M-convex functions to express the non-backdoor variables efficiently as polyhedral constraints. Additionally, we present a randomized XP time algorithm to solve ILPs for fixed $p+h$ when the objective and constraint matrix are encoded in unary. This algorithm employs a circuit-based proximity bound to reduce the problem to the exact matching problem. Finally, we match this latter result by showing that ILP is W[1]-hard parameterized by $h$ when $p=0$.

Aside from the long-standing open question of whether the exact matching problem is in P, a natural open question coming from \cref{thm:mixed-xp} is whether the exponential dependence on the encoding length of $c$ is necessary. To answer this question positively, it suffices to restrict to $0/1$-budgeted perfect matching with its objective in encoded binary.

Finally, we suspect that some intermediate steps used in obtaining \cref{thm:tall-fpt} can be refined. That is, we suspect that \cref{lemma:convexity-of-sbo-jump-m-convex-functions} holds for general jump M-convex functions. In addition, it may be possible to make a combinatorial separation algorithm for $P_{r,U}$ in a more direct fashion as done in~\cite{DBLP:journals/mor/Zhang03}.

\bibliography{bib}

\appendix

\section{A Graver basis augmentation algorithm for ILP (W)}
\label{sec:graver-augmentation}

Graver basis augmentation methods have been an effective method to solve block-structured integer programming~\cite{eisenbrand2022algorithmictheoryintegerprogramming} and turn out to also be amendable to solve \crefilp{ilp:wide}.

In this part of the appendix, we show that by employing a Graver basis augmentation algorithm to solve \crefilp{ilp:wide}, the running time dependence on $n$ in
\[
    \tilde\O(\|c\|_\infty\log\|c\|_\infty\cdot\Delta^h\log^3\Delta)\cdot n^{p+2h+16}\cdot\O(\Delta(m+h))^{\tfrac12p^2+h^2+ph+\tfrac12p+8h}
\]
of the algorithm described in the proof of \cref{thm:mixed-xp} can be slightly improved. For this purpose, we prove \cref{prop:wide-xp} later in this section.

\begin{proposition}
   The optimal objective value of \crefilp{ilp:wide} can be computed in time $\tilde\O(\|c\|_\infty\log^2\|c\|_\infty\cdot\Delta^h\log^5\Delta)\cdot n^{h+10}\cdot\O(\Delta hm)^{h^2+8h}$ with a randomized algorithm.
    \label{prop:wide-xp}
\end{proposition}

By plugging \cref{prop:wide-xp} in the proof of \cref{thm:mixed-xp}, we immediately obtain \cref{thm:mixed-xp-alternative-running-time}.

\begin{corollary}
    \crefilp{ilp:mixed} can be solved in time
    \[
        \tilde\O(\|c\|_\infty\log^2\|c\|_\infty\cdot\Delta^h\log^6\Delta)\cdot n^{p+h+10}\cdot\O(\Delta hm)^{h^2+8h}\cdot\O(\Delta(m+h))^{\tfrac12p(p+1)+ph}.
    \]
    with a randomized algorithm.
    \label{thm:mixed-xp-alternative-running-time}
\end{corollary}

\begin{proof}
    We need to solve $n^p\cdot\O(\Delta(m+h))^{\tfrac12p(p+1)+ph}$ subinstances, each needing $\tilde\O(\|c\|_\infty\log^2\|c\|_\infty\cdot\Delta^h\log^5\Delta)\cdot n^{h+10}\cdot\O(\Delta hm)^{h^2+8h}$ time and solving each subinstance $\tilde\O(\log\Delta)\cdot p^{\O(1)}h^{\O(1)}$ times to compensate for randomness. This yields a total running time of
    \begin{align*}
        &\tilde\O(\log\Delta)\cdot p^{\O(1)}h^{\O(1)}\cdot n^p\cdot\O(\Delta(m+h))^{\tfrac12p(p+1)+ph}\\
        &\cdot\tilde\O(\|c\|_\infty\log^2\|c\|_\infty\cdot\Delta^h\log^5\Delta)\cdot\O(n)^{h+10}\cdot\O(\Delta hm)^{h^2+8h}\\
        &=\tilde\O(\|c\|_\infty\log^2\|c\|_\infty\cdot\Delta^h\log^6\Delta)\cdot\O(n)^{p+h+10}\cdot\O(\Delta hm)^{h^2+8h}\cdot\O(\Delta(m+h))^{\tfrac12p(p+1)+ph}
    \end{align*}
    to compute the optimal value of \crefilp{ilp:mixed}. Again, this dominates the time needed to compute relaxations and identify the solution attaining this value.
\end{proof}

Before proving \cref{prop:wide-xp}, we first introduce the Graver basis augmentation framework and relevant definitions. Intuitively, the Graver basis augmentation framework breaks solving an ILP down into multiple steps, each solving similar ``smaller'' ILPs, where the size depends on the complexity of the Graver basis.

\begin{definition}
    The Graver basis $\G(A)\subseteq\Z^n\setminus\{\veczero\}$ of an integer matrix $A\in\Z^{m\times n}$ is the set of conformally minimal nonzero integral kernel elements of $A$. That is, $g\in\G(A)$ if and only if $g'\in\Z^n\setminus\{\veczero\},Ag=\veczero$ and there is no $g'\in\Z^n\setminus\{\veczero,g\}$ such that $Ag'=\veczero$ and $g'\sqsubseteq g$.\\
    \label{def:graver-basis}
\end{definition}

The Graver complexity of $A$ refers to norm bounds on $\G(A)$. Our discussion will require bounds on $g_\infty(A)=\max\{\|g\|_\infty\ \vert\ g\in\G(A)\}$ and $g_1(A)=\max\{\|g\|_1\ \vert\ g\in\G(A)\}$, referring to the maximum $\infty$-norm and $1$-norm respectively. An oracle that can compute a single small augmenting step is called a Graver-best oracle.

\begin{definition}
    A Graver-best oracle is an oracle that, given a constraint matrix $A$, objective $c$, variable bounds $l$ and $u$, finds a Graver-best step. That is, it finds an integral solution $x$ to $\{Ax=\veczero,l\le x\le u\}$ such that
    \[
        c^\top x\le\min\left\{c^\top g\bigm\vert l\le g\le u,g\in\G(A)\right\}.
    \]
    \label{def:graver-best-oracle}
\end{definition}

Such an oracle can be implemented by solving an ILP over variable domains of size at most $\O(g_\infty(A))$, which we show to be polynomially bounded for fixed $h$ for \crefilp{ilp:wide}. Following the steps in \cref{sec:mixed-algorithm}, \cref{lemma:wide-ip-reduction-to-perfect-matching,lemma:constraint-condensation,lemma:weighted-exact-matching} show how to solve such \crefilp{ilp:wide}. \cref{thm:graver-augmentation} presents how a Graver-best oracle can be used to optimize an ILP.

\begin{theorem}[Lemma 12 in~\cite{eisenbrand2022algorithmictheoryintegerprogramming}, Lemma 5 in~\cite{DBLP:conf/icalp/EisenbrandHK18}]
    Given a feasible solution $x$ to \crefilp{ilp:general} with finite $l$ and $u$, an optimal solution to the ILP can be found with $\O(n\log(\|l-u\|_\infty)\log(c^\top x-c^\top x^*))$ queries of a Graver-best oracle. Here, $c^\top x^*$ is the value of an optimal solution.
    \label{thm:graver-augmentation}
\end{theorem}

By constructing an auxiliary ILP with a trivial feasible solution to an ILP of the form \labelcref{ilp:wide} as done in~\cite{DBLP:journals/mp/HemmeckeOR13}, we can use \cref{thm:graver-augmentation} to find a feasible solution as well as optimize such solution.

The polynomial bound on $g_\infty(A)$ for fixed $h$ follows from literature: Berndt, Mnich and Stamm~\cite{DBLP:conf/sofsem/BerndtMS24} show that the Graver basis elements of a matrix $M$ with $\|M\|_1\le2$ are small.

\begin{theorem}[Theorem 13 in~\cite{DBLP:conf/sofsem/BerndtMS24}]
    Let $M\in\Z^{m\times n}$ be an integer matrix with $\|M\|_1\le2$. Then $g_\infty(M)\le2$ and $g_1(M)\le2m+1$.
    \label{thm:graver-ub-generalized-matching}
\end{theorem}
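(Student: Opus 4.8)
The plan is to argue directly from the conformal-minimality definition of the Graver basis together with the structural interpretation of a matrix $M$ with $\|M\|_1\le 2$ as a (signed, possibly degenerate) incidence matrix. Fix $g\in\G(M)$, so $g\ne\veczero$, $Mg=\veczero$, and $g$ is conformally minimal with this property. First I would recall that each column $i$ of $M$ is one of: a zero column; a column with a single nonzero entry $\pm1$ or $\pm2$ (a half-edge or loop); or a column with exactly two nonzero entries each $\pm1$ (an edge). A zero column can never occur in the support of a conformally minimal kernel element other than the trivial one-coordinate vectors $\pm e_i$, which already have $\infty$-norm $1$; likewise a column with a single nonzero entry forces the corresponding coordinate of any kernel element to be $0$ unless that entry is exactly $\pm2$, in which case that coordinate alone does not balance and must combine with others — I will handle these via the same cycle/walk argument below after noting $\pm2$ columns contribute weight $2$ to the relevant vertex. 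So the interesting case is that the support of $g$ corresponds to edges and $\pm2$-loops of the associated graph $G$ on vertex set $[m]$.

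The key step is the classical observation that a conformal circuit of an incidence-type matrix corresponds to a closed walk (an Eulerian-type structure) in $G$. Concretely, consider the support $S=\{i:g_i\ne 0\}$ and orient/scale each edge $i\in S$ by the sign of $g_i$. The condition $Mg=\veczero$ says that at every vertex $j$ the signed weighted sum of incident edges vanishes. I would then show that if any $|g_i|\ge 2$ (with $i$ an edge) or any $|g_i|\ge 1$ is ``too large'', one can decompose $g$ into two conformal kernel pieces, contradicting minimality: the standard way is to build a non-self-intersecting closed walk (a cycle in $G$, possibly using a $\pm2$-loop as a degenerate cycle) using only edges of $S$, assign it $\pm1$ coefficients consistent with $\mathrm{sgn}(g)$, and observe this walk-vector $w$ is itself a nonzero conformal kernel element with $w\sqsubseteq g$. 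By conformal minimality $g=w$, hence every entry of $g$ is in $\{-1,0,1\}$ on edge-coordinates, and the loop-coordinates get value $\pm1$ as well after accounting for the factor-$2$ (so $\|g\|_\infty\le 2$ overall once we keep the weight-2 columns in mind — this is exactly where the bound $2$ rather than $1$ comes from). For the $1$-norm bound, note that $w$ being a single cycle in $G$ uses at most $m$ edges (a simple cycle on $\le m$ vertices has $\le m$ edges), and a degenerate loop contributes a coordinate of magnitude $1$ but ``costs'' $2$ toward the $1$-norm because its column has $1$-norm $2$; being generous, $\|g\|_1\le 2m+1$. I would make the ``$+1$'' explicit by the edge case of an odd closed walk that must traverse one half-edge or loop, which the bookkeeping of signs around a single vertex produces.

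The main obstacle I anticipate is handling the degenerate features allowed under $\|A\|_1\le 2$ — self-loops (a single $\pm2$ entry) and half-edges (a single $\pm1$ entry) — cleanly within the cycle-decomposition argument, since these do not literally form cycles in the ordinary graph sense; one must treat a $\pm2$-loop at $j$ as a ``half-step cycle'' contributing $\pm2$ to vertex $j$, and half-edges as leaking flow out of the graph, which can only be balanced in pairs. The right framing is to pass to the bidirected-graph language of [EdmondsJ01] and invoke the structure theorem for circuits of bidirected incidence matrices: every circuit is either an ordinary even alternating cycle, an odd ``blossom-pair'' (two odd closed walks joined by a path), or a degenerate loop, and in each of these the coefficients are forced into $\{\pm1,\pm2\}$ with the $\pm2$ occurring only on the joining path. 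Once that case analysis is set up, reading off $\|g\|_\infty\le 2$ is immediate and $\|g\|_1\le 2m+1$ follows by counting edges in the walk (at most $m$ vertices, so at most $2m$ edge-endpoints, plus one degenerate loop). I would therefore structure the write-up as: (1) reduce to $g$ a conformal circuit; (2) translate to a closed-walk/blossom structure in the bidirected graph; (3) bound coefficients and length of that structure; (4) collect the two inequalities.
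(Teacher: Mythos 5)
The paper does not prove this statement at all: it is imported verbatim as Theorem~13 of Berndt, Mnich and Stamm~\cite{DBLP:conf/sofsem/BerndtMS24} and used as a black box, so there is no in-paper proof to compare against; I can only assess your sketch on its own terms.

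On its own terms, the first, ``direct'' argument in your sketch has a genuine flaw. You claim that inside the support of $g$ one can pick a cycle, assign it $\pm1$ coefficients consistent with $\operatorname{sgn}(g)$, and obtain a conformal kernel vector $w\sqsubseteq g$, concluding by minimality that all edge-coordinates of $g$ lie in $\{-1,0,1\}$ and that the value $2$ only arises from $\pm2$ columns. This is false: for an odd cycle the alternation of signs cannot close up, so at some vertex the two incident coefficients add to $\pm2$ and $w$ is not in the kernel (the incidence matrix of $K_3$ has trivial kernel, so no such $w$ exists at all). In fact the extremal Graver elements here are exactly the ``barbells'' -- two odd cycles joined by a path -- whose entries on the \emph{ordinary edge} coordinates of the connecting path equal $\pm2$; these are what force $g_\infty(M)=2$, contradicting your first-paragraph conclusion. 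Your second paragraph implicitly corrects this by invoking the Edmonds--Johnson description of circuits of bidirected incidence matrices (even cycles, odd cycle pairs joined by a path with $\pm2$ on the path, degenerate loops/half-edges), which is indeed the standard route; but as written this replaces the proof by a citation of an equally strong structure theorem, and it still needs the additional (standard but non-trivial) argument that \emph{conformally minimal kernel elements}, not merely support-minimal circuits, have this form -- Graver bases are in general strictly larger than circuit sets. The $1$-norm bound $2m+1$ is also only gestured at; the count should be done explicitly from the walk structure (cycle lengths plus twice the path length, bounded via the number of distinct vertices $\le m$, with the $+1$ accounted for by half-edge/loop cases). So the overall plan in your second paragraph is sound, but the argument as proposed is not yet a proof.
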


Following the proof of Lemma 3 in~\cite{DBLP:conf/icalp/EisenbrandHK18}, which was originally used in the context of $n$-fold ILPs, and using the $1$-norm bound on $\G(M)$ from \cref{thm:graver-ub-generalized-matching}, and then measuring the Graver complexity in terms of the $\infty$-norm yields \cref{cor:graver-ub-wide}. For the sake of completeness, we provide an explicit proof.

\begin{corollary}
    Let $W\in\{-\Delta,\dots,\Delta\}^{h\times n}$ and $M\in\Z^{m\times n}$ satisfying $\|M\|_1\le2$. Then $g_\infty\bigl(\binom WM\bigr)\le2(2\Delta h(2m+1)+1)^h=\O(\Delta hm)^h$.
    \label{cor:graver-ub-wide}
\end{corollary}

\begin{proof}
    Let $g$ be a Graver basis element of $\binom WM$. In particular, $g\in\ker(M)$ and thus it can be decomposed into $g=g_1+\dots+g_\ell$ where all $g_k$ are sign-compatible Graver basis elements of $M$. These elements satisfy $\|g_k\|_\infty\le2$ and $\|g_k\|_1\le2m+1$ by \cref{thm:graver-ub-generalized-matching}. Consider the sum $0=Wg=Wg_1+\dots+Wg_\ell$ and observe that each term is bounded by $\|Wg_k\|_\infty\le\Delta\|g_k\|_1\le\Delta(2m+1)$. The Steinitz lemma~\cite{grinberg1980value} shows that there is a way to rearrange the terms of the sum so that each prefix sum is bounded in the $\infty$-norm by $h\cdot\Delta(2m+1)$. W.l.o.g. $g_1,\dots,g_\ell$ is in such order. If $\ell>(2\Delta h(2m+1)+1)^h$, there must be two prefix sums up to indices $k_1<k_2$ that sum up to the same value by the pigeonhole principle, which is impossible as it would reveal the decomposition of $g$ as the sum of the sign-compatible kernel elements $(g_1+\dots+g_{k_1}+g_{k_2+1}+\dots+g_k)$ and $(g_{k_1+1}+\dots+g_{k_2})$. Therefore, $\ell\le(2\Delta h(2m+1)+1)^h$, which, combined with $\|g_k\|_\infty\le2$, yields the result.
\end{proof}

The number of Graver-best steps that need to be computed in \cref{thm:graver-augmentation} can be polynomially bounded by computing a solution to the LP relaxation of \crefilp{ilp:wide} and using \cref{cor:proximity-ub}. Alternatively, one may employ Theorem 3.14 in~\cite{DBLP:journals/mp/HemmeckeKW14}, which shows that the proximity of an ILP is bounded by $n$ times the obtained Graver bound from \cref{cor:graver-ub-wide}. \cref{lemma:augmentation-step-count} makes this explicit.

\begin{lemma}
    \crefilp{ilp:wide} can be solved by performing $\tilde\O(h^2n\log(\|c\|_\infty\Delta)\log(\Delta))$ many Graver-best oracle queries for instances of \crefilp{ilp:wide} with $n'=\O(n),\|c'\|_\infty=\O(\|c\|_\infty)$ and $W'$ having the same entries as $W$ up to the insertion of an identity matrix; computing the LP relaxation of an asymptotically equally sized instance of \crefilp{ilp:wide}; and performing a preprocessing step which takes $\O(h^2n)$ time.
    \label{lemma:augmentation-step-count}
\end{lemma}

\begin{proof}
    We solve the LP relaxation of \crefilp{ilp:wide} to obtain an optimal rational solution $y^*$. Using this and \cref{cor:proximity-ub}, we may impose bounds on $x$ around $y^*$ so that $\|u-l\|_\infty=n\cdot\O(\Delta(m+h))^h$. Then we translate the ILP with $x'=x-l$ so that $l'=\veczero\le u'=u-l$. Note that we may assume that $\|(d,b)\|_\infty\le\|A\|_\infty\|u\|_\infty\le\Delta n\cdot n\cdot\O(\Delta(m+h))^h=\Delta n^2\cdot\O(\Delta(m+h))^h$ or the ILP is infeasible.

    If an initial feasible solution is not known, we can find such solution by optimizing over an auxiliary instance of \crefilp{ilp:wide} with a given feasible solution such as done in~\cite{DBLP:journals/mp/HemmeckeOR13}. This can be accomplished by building an instance of \crefilp{ilp:wide} with
    \[
        \begin{pmatrix}
            W'\\
            M'
        \end{pmatrix}=\begin{pmatrix}
            W&I&\matzero\\
            M&\matzero&I
        \end{pmatrix}
    \]
    for $h+m$ new variables $s$ corresponding to the right hand sides $(d,b)$. The constraint matrix satisfies $\|M'\|_1=\max\{\|M\|_1,1\}\le2$, showing that we have indeed constructed an instance of \crefilp{ilp:wide}. The objective corresponding to the original matching variables is set to $\veczero$. The variable bounds and objective coefficient corresponding to a variable $s_j$ are set to $0\le s_j\le(d,b)_j$ and $1$ if $(d,b)_j\ge0$ and $(d,b)_j\le s_j\le 0$ and $-1$ otherwise. All other parameters are left the same as for the restricted and translated ILP. As the lower bounds for $x$ are $0$, it is clear that $(\veczero,d,b)$ is a feasible solution to this auxiliary ILP unless the original ILP was trivially infeasible. Additionally, one can find a feasible solution $x$ to the original \crefilp{ilp:wide} by projecting the first $n$ components of a solution to the auxiliary instance that has optimal value $0$. If such solution does not exist, the original ILP is infeasible. The auxiliary instance has size $\|u'-l'\|_\infty=\Delta n^2\cdot\O(\Delta(m+h))^h$, $n'=\O(h+n)$ and $\|c'\|_\infty=\O(\|c\|_\infty)$. Note that we can assume that $h\le n$ after removing redundant constraints through Gaussian elimination in $\O(h^2n)$ arithmetic operations.

    We now inspect the number of Graver-best steps that the augmentation framework needs to compute for both finding the initial solution and improving it with respect to $c$. The difference in objective $c^\top x-c^\top x^*$ is at most $\|c\|_1\|u-l\|_\infty=\|c\|_1\cdot\Delta n^2\cdot\O(\Delta(m+h))^h$. For this reason, the Graver basis augmentation framework needs to compute at most $\O(n\log(\Delta n^2\cdot(\Delta(m+h))^h)\log(\|c\|_1\Delta n^2\cdot(\Delta(m+h))^h))=\tilde\O(h^2n\log(\|c\|_\infty\Delta)\log(\Delta))$ many Graver-best steps to find an optimal ILP solution by \cref{thm:graver-augmentation}.
\end{proof}

We can now combine the presented ingredients to complete the Graver basis augmentation algorithm.

\begin{proof}[Proof of \cref{prop:wide-xp}]
    Using \cref{lemma:augmentation-step-count} we can reduce finding the optimal objective of \crefilp{ilp:wide} to computing $\tilde\O(h^2n\log(\|c\|_\infty\Delta)\log(\Delta))$ Graver-best steps. Such Graver-best step can be found by solving an asymptotically equally sized instance of \crefilp{ilp:wide} with domains bounded by $\|u-l\|_\infty\le g_\infty(A)=\O(\Delta hm)^h$. By applying \cref{lemma:wide-ip-reduction-to-perfect-matching,lemma:constraint-condensation}, we find that we need to solve constrained perfect matching problems of size
    \begin{align*}
        m'&=n\cdot\O(\Delta hm)^h,\\
        n'&\le m'^2,\\
        \Delta'&=\Delta^h\cdot\O(m')^{h-1},\\
        \|c'\|_\infty&=\O(\|c\|_\infty).
    \end{align*}
    Finally, apply \cref{lemma:weighted-exact-matching} to find a running time of
    \begin{align*}
        &\O(\|c\|_\infty\Delta'n'm'^7\log(\Delta'm')\log(\|c\|_\infty\Delta'n'm'))\\
        &=\Delta^h\cdot\O(m')^{h+8}\cdot\|c\|_\infty\log(\O(m')^h\cdot\Delta^h)\log(\|c\|_\infty\cdot\Delta^h\cdot\O(m')^{h+2})\\
        &=\Delta^h\cdot(n\cdot\O(\Delta hm)^h)^{h+8}\cdot\tilde\O(\|c\|_\infty\log\|c\|_\infty\cdot\log^2\Delta)\cdot h^{\O(1)}\\
        &=\tilde\O(\|c\|_\infty\log\|c\|_\infty\cdot\Delta^h\log^2\Delta)\cdot n^{h+8}\cdot\O(\Delta hm)^{h^2+8h}
    \end{align*}
    for finding the objective improvement of a single Graver-best step. Note that \cref{thm:graver-augmentation} requires the actual solutions of the Graver-best step subproblems, which can again be recovered using a binary search on the variables. This entails computing each step subinstance with additional variable restrictions an additional $\tilde\O(n\log\Delta)\cdot h^{\O(1)}$ times. Compensating for randomness incurs another $\tilde O(1)$ factor of running time overhead for each Graver-best step that we compute.
    
    To compute all of the Graver-best steps successfully, we obtain a running time of
    \begin{align*}
        &\tilde\O\bigl(h^2n\log(\|c\|_\infty\Delta)\log(\Delta)\cdot\log\bigl(h^2n\log(\|c\|_\infty\Delta)\log(\Delta))\bigr)\\
        &\,\,\,\,\,\,\,\,\,\cdot\tilde\O(n\log\Delta)\cdot h^{\O(1)}\cdot\tilde\O(\|c\|_\infty\log\|c\|_\infty\cdot\Delta^h\log^2\Delta)\cdot n^{h+8}\cdot\O(\Delta hm)^{h^2+8h}\\
        &=\tilde\O(\|c\|_\infty\log^2\|c\|_\infty\cdot\Delta^h\log^5\Delta)\cdot n^{h+10}\cdot\O(\Delta hm)^{h^2+8h}.\\
    \end{align*}
    for solving all the required Graver-best oracle instances. Again, note that the setup time needed from \cref{lemma:augmentation-step-count} is dominated by the rest of the algorithm.
\end{proof}

\section{A direct MILP approach to exploit variable backdoors}
\label{sec:milp-approach-for-tall-fpt}

Recent independent and concurrent research by Eisenbrand and Rothvoss~\cite{eisenbrand2025parameterizedlinearformulationinteger} shows that the integer hull of arbitrary polyhedra $Ax\le b$ can be expressed as a system that depends linearly on $b$ as long as $b$ has a fixed remainder modulo a large integer depending only on the dimensions of $A$ and the coefficient size. They use this to optimize two-stage stochastic integer programming with large coefficients in the constraint matrix corresponding to the variable backdoor in FPT time. To accomplish this, they replace the non-backdoor variables with real variables and replace the constraints on those variables with their parametric versions of the integer hull. Their strategy reveals an additional way in which \cref{thm:tall-fpt} can be derived, which we shortly discuss.

Consider the perfect $b$-matching problem with additional variables resulting from \cref{lemma:master-reduction}. Let $P(z)$ be the polyhedron $\{x\in\R^n:Mx=z,x\ge0\}$ and $P_I(z)$ its integer hull $\conv(P(z)\cap\Z^n)$. It is known that $P_I(z)$ is given by
\[
    P_I(z)=\{x\in\R^n:Mx=z,x\ge0,Q_zx\le q\},
\]
where $Q_zx\le q$ consists of the constraints $\sum_{j\in\delta(U)}x_j\ge1$ for all $U\subseteq[m]$ such that $\sum_{i\in U}z_i$ is odd. See Corollary 31.2a in~\cite{schrijver2003combinatorial}. Here $\delta(U)$ are the edges connecting $U$ and its complement. These constraints are identical for all $z\equiv r\pmod2$ for a fixed remainder $r$, which justifies labeling the constraints as $Q_rx\le q$.

After guessing the remainder $r'\equiv y\pmod2$, we may solve the $b$-matching variant of \crefilp{ilp:tall} by solving the problem
\[
    \min\{a^\top y+c^\top x\ \vert\ y\in2\Z^p+r',\veczero\le y\le g,x\in P_I(b-Ty)\},
\]
as after finding an optimal $y$, we can obtain vertex solutions $x$ that are integral. This is equivalent to
\[
    \min\{a^\top(2v+r')+c^\top x\ \vert\ v\in\Z,\veczero\le 2v+r'\le g,x\in P_I(b-T(2v+r'))\}.
\]
As $z=b-T(2v+r')\equiv r\pmod2$ for all $v$ for suitably chosen $r$, we may restrict to solving
\[
    \min\{a^\top(2v+r')+c^\top x\ \vert\ v\in\Z,\veczero\le 2v+r'\le g,Mx=b-T(2v+r'),x\ge0,Q_rx\le q\}.
\]

This mixed integer linear program can be solved in FPT time parameterized by $p$ as a result of Lenstra's algorithm~\cite{DBLP:journals/mor/Lenstra83} as long as we can optimize any linear function over its linear relaxation in polynomial time. The latter is possible by using the ellipsoid method~\cite{DBLP:books/sp/GLS1988} and Padberg and Rao's odd minimum cut algorithm~\cite{DBLP:journals/mor/PadbergR82}, which allows to efficiently separate the inequalities in $Q_rx\le q$.

\end{document}